\documentclass[letterpaper,twocolumn,10pt]{article}
\usepackage{usenix}

\usepackage{pifont}
\newcommand{\cmark}{\textcolor{green!80!black}{\ding{51}}} 
\newcommand{\xmark}{\textcolor{red}{\ding{55}}} 

\usepackage{amsmath}
\usepackage{amssymb}
\usepackage{amsthm}
\usepackage{mathtools}  

\usepackage{marvosym}
\usepackage{comment}
\usepackage{bm}

\usepackage{makecell}
\usepackage{colortbl} 

\usepackage{tikz}
\usetikzlibrary{decorations.pathreplacing, positioning}
\usepackage{graphicx}

\usepackage{booktabs}
\usepackage{multirow}

\usepackage{enumitem}

\usepackage{algorithm}
\usepackage{algpseudocode}

\algnewcommand\algorithmicparfor{\textbf{parallel for}}
\algdef{S}[FOR]{ParFor}[1]{\algorithmicparfor\ #1\ \algorithmicdo}
\algdef{E}[FOR]{EndParFor}{\algorithmicend\ \algorithmicparfor}

\newcommand{\PhaseComment}[1]{\Statex \textcolor{teal}{$\triangleright$ \textit{#1}}}

\usepackage{listings}

\usepackage{hyperref}
\usepackage{url}

\usepackage{cite}

\usepackage{amssymb}

\newcommand{\mhot}{\textsc{Mhot}}
\newcommand{\hot}{\textsc{HOT}}
\newcommand{\mpt}{\textsc{MPT}}
\newcommand{\lvmt}{\textsc{LVMT}}

\newtheorem{theorem}{Theorem}
\newtheorem{lemma}[theorem]{Lemma}
\newtheorem{corollary}[theorem]{Corollary}
\newtheorem{definition}{Definition}

\newtheorem{remark}[theorem]{Remark}

\newenvironment{packeditemize}{
	\begin{list}{$\bullet$}{
			\setlength{\labelwidth}{4pt}
			\setlength{\itemsep}{0pt}
			\setlength{\leftmargin}{\labelwidth}
			\addtolength{\leftmargin}{\labelsep}
			\setlength{\parindent}{0pt}
			\setlength{\listparindent}{\parindent}
			\setlength{\parsep}{0pt}
			\setlength{\topsep}{1pt}}}{\end{list}}

\newcommand{\heading}[1]{{\vspace{3pt}\noindent{\textbf{#1}}}}

\begin{document}

\date{}

\title{\Large \bf \textsc{Mhot}: Height-Optimized Authenticated Data Structure\\for Blockchain State Commitment}

\author{
{\rm Sipeng Xie\thanks{Beihang University}, \quad
Qianhong Wu\footnotemark[1], \quad
Minghang Li\footnotemark[1], \quad
Qiyuan Gao\footnotemark[1], \quad
Bo Qin\thanks{Renmin University of China (\Letter)}, \quad
Qin Wang\thanks{Independent}}
}

\maketitle

{\let\thefootnote\relax\footnotetext{Accepted by \textcolor{violet}{USENIX Security 2026}}}

\begin{abstract}

State root computation dominates (\textasciitilde78\%) blockchain block processing time. Ethereum's canonical authenticated data structure, i.e., Merkle Patricia Trie (MPT), suffers from severe tree-height growth and is vulnerable to \textit{Nurgle attacks} (S\&P'24), where adversaries inflate path depth via hash collisions and degrade system performance at negligible cost. Existing defenses increase node fanout (span) to bound tree height, but higher span inflates proof size exponentially. Prior work mitigates this trade-off using vector commitments, at the cost of trusted setup or expensive verification.

We present \textsc{Mhot}, a height-optimal authenticated data structure for blockchain state commitment that preserves standard hash-based verification without trusted setup. Unlike MPT’s fixed-prefix indexing, which couples span and fanout exponentially, \textsc{Mhot} indexes by \emph{discriminative bits} that actually distinguish keys, achieving adaptive span with linear fanout coupling and provably minimal height. To prevent high fanout from inflating proofs, we introduce \emph{hierarchical proofs}, a two-layer Merkle construction that reduces per-node proof overhead from $O(k)$ to $O(\log k)$. 

On Ethereum mainnet workloads, \textsc{Mhot} achieves up to \textbf{9$\times$} higher write throughput, \textbf{4$\times$} lower write amplification, and \textbf{2$\times$} smaller proofs than MPT. Under Nurgle attacks, even when the adversary consumes an entire block’s gas budget, \textsc{Mhot} maintains a \textbf{0\%} attack success rate (v.s., 99.97\% for MPT). Our results, somewhat surprisingly, show that height optimality (not new crypto primitives!) is the key abstraction for scalable and attack-resilient blockchain state commitment.

\end{abstract}
\section{Introduction}
\label{sec:introduction}

State root computation consumes 70--80\% of total block processing time in modern blockchain systems~\cite{chainkv,lmpts,letus,nurgle,lvmt}.
Every block must commit to a cryptographic digest of the entire state so that light clients can verify query results without trusting full nodes.
This authenticated commitment step, not transaction execution, has become the binding constraint on end-to-end block throughput and confirmation latency.

Execution-layer optimizations reduced the cost of everything \emph{except} commitment.
Transaction-level parallelism~\cite{occda,blockstm,schain,parallelevm}, instruction-level acceleration~\cite{forerunner,seer,mtpu,dtvm,superinstruction}, and decoupled state storage architectures~\cite{chainkv,blocklsm,letus,splitDB,solsDB,erigon,reth} all improve execution throughput but cannot amortize the cost of updating the \textit{authenticated data structure} (ADS), because commitment requires rehashing dependent node paths.
Modern execution clients such as Erigon and Reth accordingly treat the ADS as a dedicated commitment engine~\cite{erigon,reth}, making its write throughput the primary bottleneck.

\begin{table*}[t]
\centering
\small
\caption{Comparison of ADS approaches for uniformly random 256-bit keys (e.g., Keccak-256 derived storage keys in Ethereum).}
\label{tab:comparison}
\vspace{5pt}
\resizebox{\textwidth}{!}{%
\begin{tabular}{c|c|c|c|c|c}
\toprule
\textbf{Approach} & \textbf{Crypto Primitive} & \textbf{Optimization} & \textbf{Height} & \textbf{Nurgle Resistance} & \textbf{Range Proof for Trie Key} \\
\midrule
MPT~\cite{ethereum} & Standard Hash & Baseline & $\leq\max(64, O(\log_{16} N))$ & \xmark\,\,\,(low span) & \cmark\, [$O(m + \log_{16} N)$] \\
RainBlock~\cite{rainblock} & Standard Hash & Storage (DSM-Tree) & $\leq\max(64, O(\log_{16} N))$ & \xmark\,\,\,(low span) & \cmark\, [$O(m + \log_{16} N)$] \\
LMPTs~\cite{lmpts} & Standard Hash & Storage (Mem/Disk Layering) & $\leq\max(64, O(\log_{16} N))$ & \xmark\,\,\,(low span) & \cmark\, [$O(m + \log_{16} N)$] \\
Prefix MPT~\cite{chainkv} & Standard Hash & Storage (Sequential Locality) & $\leq\max(64, O(\log_{16} N))$ & \xmark\,\,\,(low span) & \cmark\, [$O(m + \log_{16} N)$] \\
DMM-Trie~\cite{letus} & Standard Hash & Storage (Delta-encoded) & $\leq\max(64, O(\log_{16} N))$ & \xmark\,\,\,(low span) & \cmark\, [$O(m + \log_{16} N)$] \\
Unified Binary~\cite{binary} & Standard Hash & SNARK-Friendly (Binary) & $\leq\max(256, O(\log_{2} N))$ & \xmark\,\,\,(low span) & \cmark\, [$O(m + \log_{2} N)$] \\
\midrule
LVMT~\cite{lvmt} & Vector Comm. & VC-based (High-span + HMT) & $\leq\max(16, O(\log_{65536} N)) + O(\log_2 (\text{Epoch} \times \Delta))$ & \cmark\, (high span) & Not Native \\
Verkle Trie~\cite{verkle} & Vector Comm. & VC-based (High-span + stem) & $\leq\max(32, O(\log_{256} N))$ & \cmark\, (high span) & Not Native \\
\midrule
\textbf{\mhot{}} & \textbf{Standard Hash} &\makecell{\textbf{Cross-layer}\\ \textbf{(Adaptive Span + Bit-indexing + SIMD)} }& $\pmb{\leq\max(\lceil\frac{256}{k-1}\rceil, O(\log_k N))}$ & \textbf{\cmark\, (adaptive span)} & \cmark\, \pmb{[$O(m + \log_{k} N)$]} \\
\bottomrule
\end{tabular}%
}
    \vspace{-0.1in}
\end{table*}

\heading{Limitations of current approaches} (Table~\ref{tab:comparison}).
The Merkle Patricia Trie (\mpt{}), Ethereum's canonical authenticated state structure~\cite{ethereum}, stores state in a Merkle-authenticated key-value trie. Lookups and updates follow the hexadecimal digits of a key from the root to a leaf, so each update rehashes every node on that path. As Ethereum's state has grown, average path depth has reached 8 to 11 levels~\cite{chainkv}; the Nurgle attack~\cite{nurgle} exploits the same prefix structure by choosing keys with long common prefixes and pushing selected paths toward the worst case. Partitioning, checkpointing, and in-memory upper-level caching, as in Chainspace~\cite{chainspace}, RainBlock~\cite{rainblock}, and LMPTs~\cite{lmpts}, improve scalability and average-case latency, but they preserve prefix-based traversal and leave worst-case height and Nurgle resistance unresolved.

Unfortunately, the limitation is structural. 

In a prefix-based trie, each tree level consumes a fixed number of contiguous key bits. We call this per-level bit-width the \emph{span} ($s$). A node that consumes $s$ bits must provide a child slot for every possible $s$-bit pattern; we call this number of child slots the \emph{fanout}. In a prefix-based trie, $\text{fanout}=2^{\text{span}}$, so the two quantities are coupled \textbf{exponentially}.

The apparent defense against adversarially shared prefixes is to increase the span so that each level consumes more key bits. However, large span inflates both node size and membership proofs, which must include sibling hashes at every level. For example, bounding the worst-case depth to at most 10 for 256-bit keys requires $s \geq 26$, resulting in roughly 67 million child slots per node. This \emph{span--proof trade-off} implies that no prefix-based scheme can simultaneously guarantee bounded worst-case height and compact membership proofs.

Vector-commitment alternatives (e.g., Verkle tries~\cite{verkle}, \lvmt{}~\cite{lvmt}) avoid this trade-off by algebraically decoupling proof size from fanout. However, both rely on trusted setup ceremonies and pairing-based elliptic-curve commitments, introducing trust assumptions and higher arithmetic costs that hash-based schemes do not require. Moreover, neither provides native support for range proofs, which are essential for verified state synchronization.

\heading{Our approach: height optimality without cryptography.}
The key observation behind our approach is that prefix indexing causes the \emph{span--proof trade-off}. This observation motivates our use of Height-Optimized Tries (HOT)~\cite{hot} as the structural foundation.

HOT branches on \emph{discriminative bits} rather than on fixed prefixes. A discriminative bit is a bit position where at least two keys in the current subtree differ. Prefix-based indexing consumes bits at fixed positions regardless of whether they distinguish any keys; discriminative-bit indexing consumes a bit only where keys actually diverge. This difference lets HOT achieve provably minimal height among radix tries for any given key set.
Discriminative-bit indexing couples span and fanout \emph{linearly}, since a compound node with fanout $k$ consumes up to $k-1$ discriminative bits, compared to the $\log_2 k$ contiguous bits resolved by a prefix-based node of the same fanout.
This linear coupling gives compound nodes enough span to absorb adversarial insertions without immediate depth growth.
When restructuring is needed, HOT's structure-adapting insertion algorithm places the new key to preserve the height-optimized invariant.
These properties prevent long shared prefixes from directly turning into long root-to-leaf paths.

\mhot{} inherits HOT's wide compound nodes, but this may  create a new authentication cost.
A direct hash-based proof for a child would include all $k-1$ sibling hashes inside the compound node, so proof size grows linearly with fanout and, in \mhot{}, with span.
We address this with \emph{hierarchical proof}, a two-layer Merkle architecture where each compound node maintains an internal Merkle tree over its children.
Proving membership of a child then requires only $O(\log k)$ sibling hashes from this internal tree. Thus, proof size grows only logarithmically with \mhot{}'s fanout and, due to HOT's linear span--fanout coupling, only logarithmically with its realized span. This is different from a prefix-indexed trie: because its fanout grows as $2^{span}$, the same internal Merkle tree would still leave proof size linear in span.

Another challenge is that HOT was originally designed for in-memory indexing and does not support persistence or authentication~\cite{hot}.
The original work explicitly identifies disk-based deployment as an open challenge, since discriminative-bit tracking breaks the prefix locality relied upon by traditional persistent tries.
Adapting HOT to \mhot{} therefore requires preserving its structural invariants while adding content-addressable persistence and cryptographically binding state roots. The search and insertion logic must remain intact, and membership, multipoint, and range proofs must be generated and verified against those roots.

We resolve these limitations in \mhot{} by making HOT-style nodes content-addressable, so node identifiers and state roots follow deterministically from the trie structure.
This design makes \mhot{} persistent and cryptographically binding, while a parallel height-stratified commit pipeline and an LSM-tree-friendly layout control commitment and storage costs.

\heading{Our contribution.} We present \mhot{}, an instantiation of HOT that supports both persistence and authenticated commitment, composing naturally with content-addressable, copy-on-write storage backends.
Beyond throughput, \mhot{}'s compound node design provides structural resistance to Nurgle attacks.
Even when an adversary controls an entire block's gas budget, prefix collisions are absorbed through internal restructuring without propagating depth increases.

Evaluation under Ethereum mainnet workloads shows \mhot{} achieves up to 9$\times$ higher write throughput and 4$\times$ lower write amplification than \mpt{}-based implementations, with 2$\times$ smaller proofs. Under Nurgle attack conditions, \mhot{} exhibits a 0\% attack success rate, in contrast to the 99.97\% success rate observed for \mpt{}.

In short, we make the following contributions:
\begin{itemize}[nosep]
\item We identify span-proof coupling as the structural reason prefix-based authenticated tries cannot easily combine low worst-case height with compact proofs. \mhot{} avoids this coupling through HOT's linear span-fanout relation and therefore uses only standard hash commitments rather than relying on vector commitments.

\item We adapt HOT to blockchain state commitment, presenting \mhot{} as a height-optimized ADS that resists adversarial key distributions through discriminative-bit tracking and high-fanout compound nodes.

\item We design \emph{hierarchical proofs}, a two-layer Merkle construction that reduces per-node proof overhead from $O(k)$ to $O(\log k)$, breaking the linear dependence of proof size on span.

\item We resolve the open problem identified in the original HOT work~\cite{hot} on disk-based deployment through content-addressable indexing, a parallel height-stratified commit pipeline, and an LSM-tree-friendly layout for persistent keys.

\item We benchmark \mhot{} against \mpt{}, RainBlock, and \lvmt{} under Ethereum mainnet workloads, demonstrating up to 9$\times$ higher throughput and 2$\times$ smaller proofs.
\end{itemize}
\section{Why Existing Solutions Fail}
\label{sec:why-fail}

Each ADS existing approach (Table~\ref{tab:comparison}) sacrifices at least one of listed properties: acceptable worst-case height, Nurgle resistance, transparent setup, or native range proofs.
The root cause is prefix-based indexing. We dig into it.

\subsection{The \mpt{} Limitation}
\label{sec:mpt}

Ethereum stores its state in a Merkle Patricia Trie (MPT). To locate a value, the trie reads the key one hex digit at a time and follows the matching child pointer downward. When two keys share a prefix, they ride the same path until their digits diverge, at which point the trie forks. Branches that only a single key traverses are not kept as individual levels. Instead, the whole stretch is compressed into one node that remembers the shared prefix. Values sit at the leaves. Formally, the forking nodes are branch nodes, the compressed stretches are extension nodes, and the value-bearing endpoints are leaf nodes. This design, a Patricia-compressed radix trie, strips out unary chains but preserves the fixed-prefix indexing rule. Every update walks a \textbf{root-to-leaf} path, so changing any entry recomputes hashes along that whole path.

Ethereum implements this trie with content-addressed, copy-on-write nodes. Each materialized node is identified by the hash of its serialized content. Modifying a key-value pair creates a new leaf representation and recomputes hashes along the path from that leaf to the root (Figure~\ref{fig:mpt-cow}). For a realized path of depth $d$, one state modification requires $O(d)$ hash computations and $O(d)$ node writes or lookups.
Measurements show that average \mpt{} depth on Ethereum mainnet has grown from 8 to 11 levels as state size increases.
A single block now triggers over 8,700 disk I/O operations~\cite{chainkv}.

This depth-dependent performance creates a denial-of-service (DoS) vector. The Nurgle attack~\cite{nurgle} exploits the \mpt{}'s prefix-based indexing and Patricia compression to deepen realized trie paths. Attackers search for keys whose hashed trie indices share long prefixes with existing keys. Because the \mpt{} routes keys by prefix, these keys traverse the same trie region before diverging. When such a key is inserted into a region where Patricia compression has collapsed a long shared prefix into one extension node, the insertion forces that node to split. A compressed path is materialized into additional branch and extension nodes, increasing the realized root-to-leaf depth by 1--2 levels per insertion. Later accesses to the affected keys require more node traversals, hash verifications, and disk lookups. Because Ethereum's gas pricing charges per opcode rather than per node traversed~\cite{nurgle}, the attacker pays almost nothing while validators absorb the I/O cost of the deepened paths. Current mitigations, including historical data pruning~\cite{eip4444} and gas repricing~\cite{eip4762}, address symptoms rather than the structural root cause.

\subsection{The Span--Proof Trade-off}
\label{sec:span-proof}

\heading{Storage and system-level mitigations.}
Several approaches reduce authenticated-state cost without changing the local indexing rule. RainBlock~\cite{rainblock} decouples storage from consensus through sharded in-memory state. LMPTs~\cite{lmpts} keep recent-update tries in memory and store larger snapshot tries on disk. Chainspace~\cite{chainspace} splits smart-contract state across shards, builds commitments for shard-local state and history, and certifies those commitments through shard-level quorum signatures. ChainKV~\cite{chainkv} exploits sequential key locality through prefix-based storage. Letus~\cite{letus} uses delta-encoded state representation.

These mitigations improve normal-case performance and can reduce the short-term cost of adversarial accesses, but they do not remove the local depth mechanism exploited by Nurgle~\cite{nurgle}. If a shard, checkpoint, or cache-backed component still uses a fixed-prefix trie, an adversary can target keys whose distinguishing prefixes fall below the cached or partitioned boundary. The attack surface narrows or shifts, but the local tree still routes by fixed prefixes.

With a fixed 16-ary branching factor, each node consumes 4 prefix bits per level, so 256-bit keys have worst-case depth 64. Storage and system-level mitigations therefore complement, rather than replace, a local authenticated structure whose height resists adversarial prefix construction.

\heading{The exponential coupling.}
Prefix-based indexing couples span and fanout exponentially (\S\ref{sec:introduction}), making high-span nodes intractable.
The exponential blowup extends to proofs and commitment computation.
A naive membership proof must include all sibling hashes at each tree level, yielding $O(d \cdot 2^s)$ hash values in the proof.
Hierarchical hashing reduces this to $O(d \cdot s)$ hashes, but each node still requires $O(2^s)$ hash computations during commitment.
Prefix-based indexing thus forces a choice: low depth with impractically large fanout, or practical node sizes with deep trees vulnerable to attack.

\begin{figure}[t!]
    \centering
    \includegraphics[width=\columnwidth]{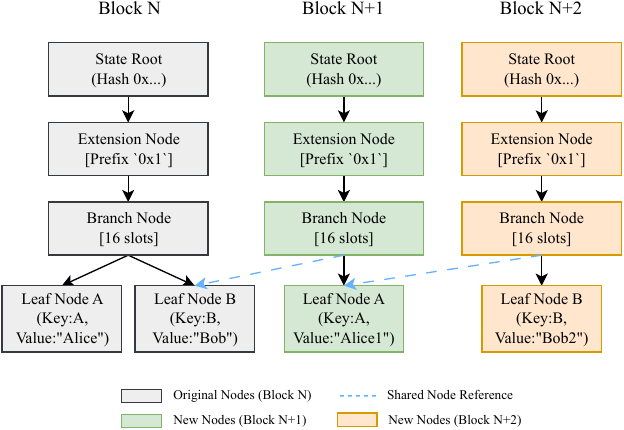}
    \vspace{-0.2in}
    \caption{Cross-block node sharing in \mpt{} under copy-on-write semantics. When accounts are modified, only affected paths change. Unchanged subtrees remain shared. Such append-only structure requires $O(d)$ writes per modification, where $d$ is the path length from leaf to root.}
    \label{fig:mpt-cow}
      \vspace{-0.2in}
\end{figure}

\heading{Cryptographic approaches.}
Vector-commitment schemes sidestep this trade-off by decoupling proof size from fanout through algebraic techniques.
Verkle trees~\cite{verkle} use inner-product arguments to achieve $O(1)$ proof size per node with 256-ary fanout.
\lvmt{}~\cite{lvmt} extends this principle to $2^{16}$-ary nodes via KZG polynomial commitments~\cite{kzg}, achieving $O(1)$ updates through version-value separation.

Both Verkle trees and \lvmt{} require trusted setup ceremonies whose compromise would enable proof forgery.
Their pairing-based verification also imposes higher costs on resource-constrained light clients than standard hashing does.
\lvmt{} inherits the Authenticated Multipoint Evaluation Tree's in-place update model and thus cannot support historical state queries~\cite{lvmt, amt}.
Neither natively supports range proofs, proofs that all keys within a specified interval satisfy certain properties, a capability essential for state synchronization protocols.

Prefix-based indexing is the shared root cause of both camps' limitations.
Storage-layer optimizations preserve transparency and range proofs but inherit Nurgle vulnerability because their low span cannot absorb adversarial insertions.
Cryptographic solutions achieve Nurgle resistance through high span but sacrifice transparency and range proofs.
Achieving attack-resistant depth without these trade-offs requires an indexing paradigm that decouples span from fanout.
\section{Key Primitives}
\label{sec:primitives}


\subsection{Authenticated Data Structures}
\label{sec:ads}

An authenticated data structure (ADS) allows an untrusted prover to certify data-operation correctness without requiring the verifier to possess the entire dataset~\cite{ads}. Formally, an ADS provides three core primitives.

\begin{packeditemize}
\item \textbf{Commit.} Given a dataset $D$, produce a short cryptographic digest $C$ that binds to $D$. In blockchain terminology, $C$ corresponds to the \emph{state root} stored in block headers.

\item \textbf{Prove.} Given a query $q$ and its result $a$, generate a proof $\pi$ demonstrating that $a$ is the correct answer for $q$ under the committed dataset.

\item \textbf{Verify.} Given the commitment $C$, query $q$, claimed result $a$, and proof $\pi$, compute $\texttt{Verify}(C, q, a, \pi) \in \{0, 1\}$ to determine whether $a$ is authentic.
\end{packeditemize}

A secure ADS has two properties.
\emph{Computational binding} makes proof forgery computationally infeasible. \emph{Succinctness} bounds proof size $|\pi|$ and verification time to $O(\log |D|)$, with some constructions achieving $O(1)$ for both~\cite{verkle, kzg}.

Consensus protocols agree only on block headers containing the state root, not the full state.
Blockchains therefore rely on ADS to make these compact commitments verifiable, enabling light clients to verify query responses against the block header rather than trusting RPC providers.

Modern execution clients (e.g., Erigon~\cite{erigon}, Reth~\cite{reth}) decouple plain state storage from ADS.
The underlying database handles execution-layer I/O directly, while the client propagates state updates to ADS for commitment maintenance.
Under this architecture, the ADS functions as a dedicated \emph{commitment engine}.
Its primary performance metric is \emph{write throughput}, measured as batch updates committed per second.

\subsection{Height-Optimized Trie}
\label{sec:hot}

\hot{}~\cite{hot} breaks exponential coupling (\S\ref{sec:span-proof}) by branching on \emph{discriminative bits} rather than contiguous prefixes. Given a set of keys $K$ within a subtree, a bit position $i$ is discriminative if at least two keys in $K$ differ at that position. The discriminative bit positions of $K$ are
\[
D(K) = \{\, i \mid \exists\, k_a, k_b \in K,\; \mathrm{bit}_i(k_a) \neq \mathrm{bit}_i(k_b) \,\}.
\]
Prefix-based indexing, in contrast, consumes a fixed block of contiguous bits at each level regardless of whether those bits distinguish any keys.

Discriminative-bit indexing couples span and fanout \emph{linearly}.
A compound node with fanout $k$ consumes up to $k - 1$ discriminative bits at arbitrary positions.
A prefix-based node with the same fanout resolves only $\log_2 k$ contiguous bits per level, requiring far more levels to cover the full key.
Concretely, a compound node with $k = 27$ entries consumes up to 26 discriminative bits, well within practical limits.

\emph{Compound nodes} aggregate multiple binary decisions into a single node containing up to $k$ entries, where $k = 32$ is typical.
Each compound node stores three components.
The \emph{extraction mask} identifies discriminative bit positions within this node's scope.
The \emph{sparse partial keys} record only the discriminative bits for each entry.
The \emph{entry data} holds either a child pointer or a leaf value.

The result is \emph{adaptive span}.
In dense key regions, a node covers few bit positions with many entries; in sparse regions, a node spans many positions with few entries.
Tree height adapts to actual key distribution rather than following fixed prefix widths.
For any key set and fanout parameter $k$, \hot{} achieves provably minimal height among radix tries~\cite{hot}.
For uniformly random 256-bit keys with $k = 32$, this yields a theoretical minimum depth of $\lceil 256/31 \rceil = 9$ levels, nearly $6\times$ fewer than the 52 levels required by a prefix-based trie with the same fanout ($\log_2 32 = 5$ bits per level).

Earlier work measures path length in \emph{depth}, while \hot{} uses  \emph{height} following~\cite{hot}; both count compound-node hops from root to leaf. \hot{} defines the height of an internal node $n$ as $h(n)=\max_i h(c_i)+1$, and $h(n)=0$ for leaves.
Unequal child heights create \emph{height gaps}, which allow new nodes to be absorbed without increasing the global tree height.

\hot{} preserves height optimality via four insertion modes (Figure~\ref{fig:hot-insertion}) that adapt the structure based on node height relations~\cite{hot}.
Only \emph{Parent Pull-Up} increases global height, while \emph{Normal Insert}, \emph{Leaf Pushdown}, and \emph{Intermediate Node Creation} perform local restructuring that may lengthen individual paths.
\S\ref{sec:eval:security} examines whether such local path increases can be exploited for Nurgle-style attacks.

\begin{figure}[t]
    \centering
    \includegraphics[width=\columnwidth]{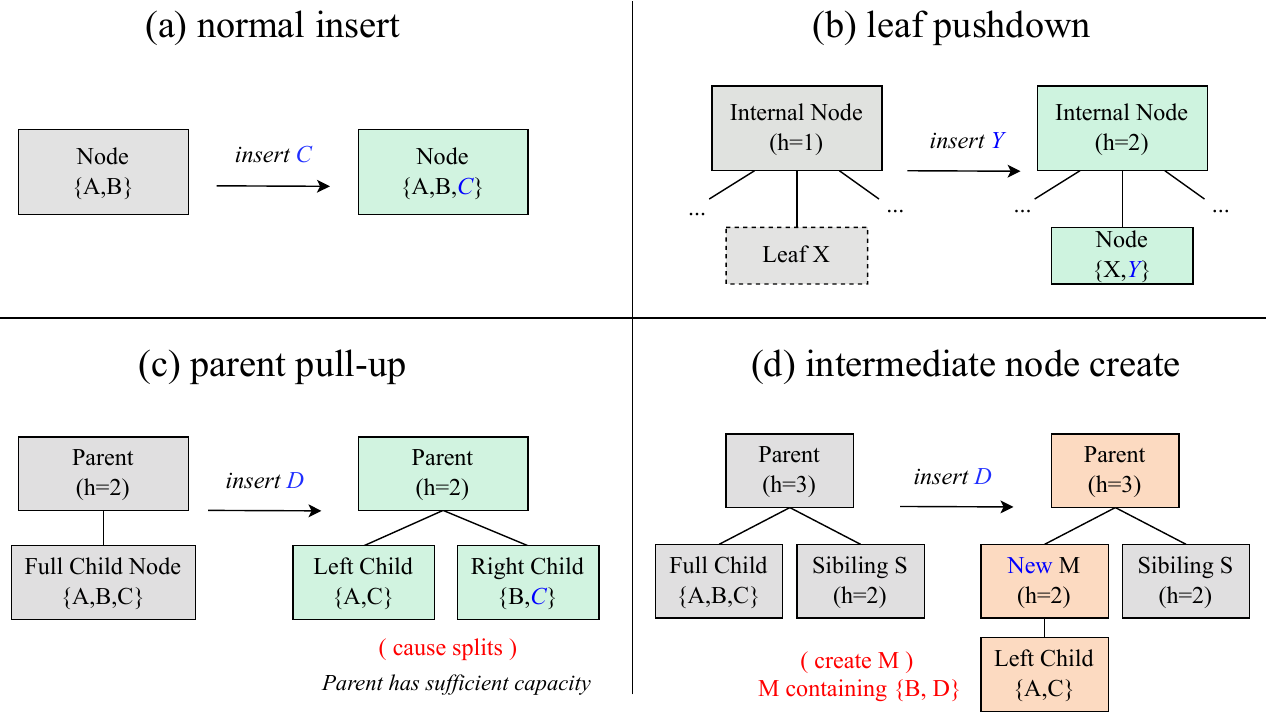}
    \vspace{-0.2in}
    \caption{\hot{} insertion mechanisms with fanout $k=3$. (a)~\emph{Normal Insert}: the new key is added when the node has capacity. (b)~\emph{Leaf pushdown}: a collision creates a new child node containing both entries. (c)~\emph{Parent pull-up}: overflow propagates upward when $h_{\text{child}}+1 = h_{\text{parent}}$; this is the only path that increases global tree height. (d)~\emph{Intermediate node creation}: when $h_{\text{child}}+1 < h_{\text{parent}}$, an intermediate node absorbs the split within the height gap.}
    \label{fig:hot-insertion}
 \vspace{-0.15in}
\end{figure}

Binna et al.~\cite{hot} prove that \hot{}'s dynamic construction produces the same structure as Static Minimum Height Partitioning (SMHP)~\cite{smhp} of the underlying binary Patricia trie.
This equivalence guarantees that \hot{} minimizes compound nodes on any root-to-leaf path.

\emph{Structural determinism} means that a given key set produces an identical trie structure regardless of insertion order.
\hot{} inherits this property from its underlying Patricia trie representation, since both the Patricia trie and SMHP partitioning are uniquely determined by the key set~\cite{hot}.

These properties establish \hot{} as the structural basis for blockchain state commitment, but the original design operates in memory and provides no persistence or authentication.
\section{Persistent Authentication for HOT}
\label{sec:design}

\mhot{} extends HOT with persistent authentication, enabling their use as a blockchain state commitment engine. We introduce a set of orthogonal extensions that build on top of \hot{}’s core algorithms without modifying them.

We first outline requirements in \S\ref{sec:design:requirements}.
We then present key components: We introduce content-addressable persistence to provide deterministic node identifiers while inheriting cryptographic authentication (\S\ref{sec:design:persistence}).
We design a two-layer Merkle construction that reduces per-node proof overhead from $O(k)$ to $O(\log k)$ (\S\ref{sec:design:authentication}).
We develop a batched commit pipeline that defers and parallelizes hash computation, eliminating redundant work when insertions share ancestors (\S\ref{sec:design:commit}).
We apply storage-layer optimizations that exploit \hot{}’s structural properties to improve disk I/O efficiency (\S\ref{sec:design:storage}).

\subsection{Design Requirements}
\label{sec:design:requirements}

A blockchain authenticated data structure must satisfy five requirements.
\emph{Short commit paths} reduce the computation per update, directly affecting block execution time.
\emph{Structural determinism} guarantees that identical key sets produce identical tree structures, a prerequisite for validators to agree on state roots.
\emph{Trust minimization} avoids trusted third parties, restricting security assumptions to cryptography alone.
\emph{Memory efficiency} reduces disk I/O by caching frequently accessed state in memory.
\emph{Attack resistance} prevents adversaries from exploiting structural flaws to extend storage paths and exhaust client I/O~\cite{eip3102, eip6800, ethereum, eip1186, eip2929, eip150, nurgle}.

Short commit paths and structural determinism follow from \hot{}'s height optimality and greedy partitioning, while memory efficiency derives from adaptive linearized node layout and SIMD-accelerated operations; we validate attack resistance empirically (\S\ref{sec:evaluation}).
Two design challenges remain.
First, the original \hot{}'s page-based node identifiers are machine-dependent and carry no cryptographic meaning; content-addressable indexing (\S\ref{sec:design:persistence}) resolves this by deriving each identifier from the hash of its content, satisfying trust minimization without a trusted setup.
Second, embedding authentication in high-fanout compound nodes risks inflating proof size; our authentication mechanism targets logarithmic overhead to keep proofs practical (\S\ref{sec:design:authentication}).

\subsection{Content-Addressable Persistence}
\label{sec:design:persistence}

Indexing \hot{} nodes demands a custom strategy; approaches that work for conventional tries fail for \hot{}'s compound-node structure.
We consider and reject two naive alternatives before arriving at our solution.

Traditional radix tries index nodes by path prefix, but \hot{}'s compound nodes track only discriminative bits without recording complete prefixes, so prefix-based indexing does not apply.
One might instead use accumulated discriminative bits as identifiers; however, \hot{}'s insertion algorithm dynamically adjusts which bits are discriminative, causing cascading identifier updates across affected subtrees.

We design \mhot{} around content-addressable indexing, where each node's database key equals the hash of its serialized content.
This scheme exploits \hot{}'s structural determinism (\S\ref{sec:hot}), which guarantees that identical key sets yield identical tree structures.
A node's content depends uniquely on the keys it covers, since extraction masks, sparse partial keys, and child references are all determined by those keys; identical logical nodes therefore produce identical hashes regardless of creation time, so all validators derive the same identifiers for the same state.
Hash-based indexing also requires no trusted setup, satisfying the trust minimization requirement.

Like \mpt{}~\cite{ethereum}, content-addressable storage inherits copy-on-write semantics (see Figure~\ref{fig:mpt-cow}), where modifying any node invalidates its hash and all ancestor references.
This approach expands child references from the original 4--8-byte pointers to 40-byte identifiers (Figure~\ref{fig:node-layout}), adding up to ${\sim}1$\,KB per fully-populated node at $k = 32$.
We accept this overhead because uniform, fixed-format references simplify serialization and eliminate branching in traversal code.

\mhot{} fixes the key length at 256 bits to match the output of Keccak-256 used for Ethereum’s storage key derivation. As a result, extraction masks and sparse partial keys have fixed maximum sizes and are stored in fixed-length fields, while only the child reference array grows dynamically with the number of entries. This layout enables simple, single-pass serialization. Figure~\ref{fig:node-layout} illustrates the node structure.

\begin{figure}[t]
\centering
\includegraphics[width=0.8\columnwidth]{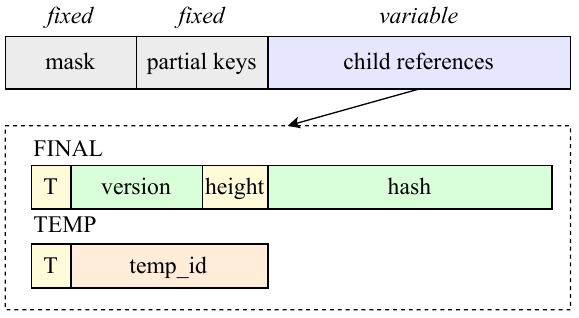}    \vspace{-0.1in}
\caption{Node layout in \mhot{}. The top panel shows a compound node with fixed-size metadata and variable-length child array. The bottom panel shows two child reference formats. \textsc{Final} references store version, height, and hash; \textsc{Temp} references store a temporary ID for deferred hashing.}
\label{fig:node-layout}
\end{figure}

\subsection{Hierarchical Proof}
\label{sec:design:authentication}

Each compound node embeds $k$ child hashes, so proof size grows linearly with fanout.
A naive inclusion proof must contain all $k-1$ sibling hashes at each level so the verifier can reconstruct parent hashes up to the root, yielding $O(h \cdot k)$ hashes per proof.
With 32-byte Keccak-256 hashes, this totals $h(k{-}1) \cdot 32 \approx 5$\,KB for typical parameters ($k = 32$, $h = 5$).

We address this using a two-layer Merkle architecture that trades additional commitment-time computation for compact proofs. 
Each compound node computes a \emph{Children Merkle Root} (CMR), a binary Merkle root over its $k$ child hashes. 
Membership proofs then include only $O(\log k)$ binary Merkle siblings per node, instead of all $k-1$ child hashes, reducing per-node overhead from $O(k)$ to $O(\log k)$.
For $k=32$, this yields roughly a $6\times$ reduction in hash overhead.
In practice, single-point proofs are about 1.1--1.4\,KB, compared to 2.3--2.9\,KB for MPT (\S\ref{sec:evaluation}).

We adopt binary Merkle trees to preserve standard hash-based verification without trusted setup.
Although each CMR incurs $O(k)$ hashing cost at commit time, this overhead is absorbed by the batched commit pipeline (\S\ref{sec:design:commit}).
\S\ref{sec:proofs:architecture} formalizes the construction and extends it to compact multiproofs.

\subsection{Batch Commit Pipeline}
\label{sec:design:commit}

Each insertion rehashes the entire root-to-leaf path. When multiple insertions share ancestors, naive processing repeatedly rehashes the same \emph{dirty nodes}, i.e., nodes modified since the last commit. We eliminate this redundancy by proposing three optimizations as below.

\heading{Deferred hashing.}
We defer hashing until block commit using \emph{mixed child references}.
Each reference stores a type tag distinguishing finalized hashes from temporary IDs.
During insertion, dirty nodes receive temporary IDs and enter a pending map.
Parents reference these temporary IDs rather than hashes, deferring computation until commit.

Copy-on-write occurs at most once per node per block.
Once a node enters the pending map with a temporary ID, subsequent traversals update it in place rather than cloning.
When an insertion first modifies a finalized node, we clone it into the pending map (Algorithm~\ref{alg:insert}, lines 11--12); subsequent insertions that traverse the same node find a temporary reference and update it in place without cloning (line 9).
The resulting memory overhead is $O(h)$ pending nodes per insertion, where $h$ is tree height.

\begin{algorithm}[t]
\caption{Insert with Deferred Hashing}
\label{alg:insert}
\begin{algorithmic}[1]
\Require key $K$, value $v$, tree $T$, pending map $P$
\Ensure A temporary ID referencing the updated root
\State $P[\textsc{Alloc}()] \gets (\textsc{Leaf}(K, v), 0)$; $\mathit{cid} \gets$ allocated ID
\State $\mathit{path} \gets \langle\rangle$; $r \gets T.\mathit{root}$
\While{$r$ is internal}
    \State $n \gets \textsc{Get}(r)$; $s \gets \textsc{Slot}(n, K)$ \Comment{\textcolor{teal}{extract discriminative bits, match partial keys}}
    \State $\mathit{path}.\textsc{Push}((n, s, r))$; $r \gets n.\mathit{ch}[s]$
\EndWhile
\ForAll{$(n, s, r) \in \mathit{path}$ in reverse}
    \If{$r.\mathit{tag} = \textsc{Temp}$} \Comment{\textcolor{teal}{in $P$: update in place}}
        \State $P[r.\mathit{id}].n.\mathit{ch}[s] \gets \mathit{cid}$; $\mathit{cid} \gets r.\mathit{id}$
    \Else \Comment{\textcolor{teal}{finalized: clone once}}
        \State $n' \gets \textsc{Clone}(n)$; $n'.\mathit{ch}[s] \gets \mathit{cid}$
        \State $P[\textsc{Alloc}()] \gets (n', n'.\mathit{ht})$; $\mathit{cid} \gets$ allocated ID
    \EndIf
\EndFor
\State \Return $\mathit{cid}$
\end{algorithmic}
\end{algorithm}

\heading{Height-stratified batch commit.}
At block boundaries, we finalize all pending nodes in one batched traversal (cf. Algorithm~\ref{alg:commit}).
Correct ordering requires processing children before parents.
\hot{} derives this order from each node's height (recall \S\ref{sec:hot}); since children have strictly lower heights, ascending height order satisfies the dependency.
We group nodes by height and process from leaves upward, replacing temporary references with finalized hashes at each level.

\heading{Parallel execution.}
Nodes at the same height share no data dependencies, allowing concurrent hash computation within each level.
Processing advances to the next height only after synchronizing all hashes at the current height.
Correctness follows because temporary references always point to children at strictly lower heights, so each level reads only from entries that prior levels finalized in $F$ (cf. Algorithm~\ref{alg:commit}).

Because \hot{} records node height as a built-in property, \mhot{} can use height directly to schedule commit. Nodes at the same height have no Merkle dependencies on one another, so they can be hashed in parallel with one barrier between levels, without fine-grained synchronization or uneven subtree splits. The same height grouping aligns with the storage layout in \S\ref{sec:design:storage}, allowing finalized nodes to be flushed in locality-preserving batches. \mpt{}-style tries do not expose this height order directly, so recovering such a schedule requires extra bookkeeping~\cite{geth-parallel-node-fetching}.

\begin{algorithm}[t]
\caption{Height-Stratified Batch Commit}
\label{alg:commit}
\begin{algorithmic}[1]
\Require pending map $P$: ID $\to$ (node, height), version $v$
\Ensure Root reference $(v, \mathit{hash})$
\State $F \gets \emptyset$ \Comment{\textcolor{teal}{temp ID $\to$ finalized ref}}
\For{$h \gets 0$ \textbf{to} $\max(P.\mathit{heights})$}
    \ForAll{$(\mathit{id}, n) \in P$ at height $h$} \textbf{in parallel}
        \ForAll{$c$ in $n.\mathit{ch}$}
            \If{$c.\mathit{tag} = \textsc{Temp}$} $c \gets F[c.\mathit{id}]$
            \EndIf
        \EndFor
        \State $F[\mathit{id}] \gets (v, \textsc{Hash}(\textsc{Ser}(n)))$
    \EndFor
\EndFor
\State \Return $F[\mathit{root\_id}]$
\end{algorithmic}
\end{algorithm}

\subsection{Storage Optimizations}
\label{sec:design:storage}

Embedding hashes enlarges node footprint; three techniques mitigate this overhead.

\heading{Metadata compression.}
We serialize only populated fields.
The extraction mask stores positions of set bits rather than a full 256-bit vector; sparse partial keys use adaptive bit widths.
All fields concatenate without internal pointers, so serialization completes in a single pass.

\heading{Write ordering.}
We batch writes across multiple blocks.
Each block computes its state root immediately, but nodes accumulate in memory until a configurable flush threshold triggers a single batch write.
Each database key consists of a 64-bit prefix followed by the 256-bit content hash, totaling 40 bytes.
The prefix encodes a 1-bit type flag distinguishing leaves from internal nodes, a 55-bit version number recording the block of creation, and an 8-bit height value.
Version-prefixed ordering groups nodes by creation block, while the height suffix clusters same-height nodes within each block.
This layout aligns with \mhot{}'s height-stratified commit and enables sequential batch reads during traversal.
Sorting keys before writing yields locality that LSM-tree storage engines exploit for efficient ingestion~\cite{chainkv,lsmtree}.

\heading{Asynchronous persistence.}
We decouple state root computation from disk I/O by delegating writes to a background thread, avoiding stalls during block execution.
When multiple batches complete before a flush finishes, pending writes coalesce into a single disk operation, amortizing \texttt{fsync} overhead.
The underlying LSM-tree engine's write-ahead log ensures durability; committed data remains recoverable even if a crash precedes the next flush.

The next section formalizes \mhot{}'s proof mechanisms (\S\ref{sec:proofs}), followed by performance evaluation (\S\ref{sec:evaluation}).
\section{Proof Mechanisms}
\label{sec:proofs}

\mhot{} supports four proof types within a unified framework: \textit{single-point membership} and \textit{non-membership} proofs, \textit{multi-point membership proofs}, \textit{lower bound proofs}, and \textit{range proofs}.
This section formalizes each mechanism with proof sketches; Appendix~\ref{sec:proofs-formal} gives the full game-based security proofs.
\subsection{Two-Layer Merkle Architecture}
\label{sec:proofs:architecture}

The naive approach of including all children's hashes directly in each proof node yields proof sizes of $O(h \cdot k)$, where $h$ is tree height and $k$ is fanout.
For typical parameters ($k=32$, $n=10^8$), this produces approximately 5KB per proof---unacceptable for bandwidth-constrained applications.

\mhot{} addresses this through a two-layer Merkle architecture.
The \emph{inter-node layer} maintains the tree structure connecting compound nodes.
The \emph{intra-node layer} introduces a Merkle tree~\cite{merkle} over children within each compound node, enabling logarithmic-size proofs for child membership.

\begin{definition}[Children Merkle Root]
\label{def:cmr}
For a compound node $N$ with children $C = (c_0, \ldots, c_{|N|-1})$, the children Merkle root is defined as:
\begin{equation}
\mathrm{CMR}(N) = \mathrm{MerkleRoot}(H(c_0), H(c_1), \ldots, H(c_{|N|-1}))
\end{equation}
where the Merkle tree is padded to the next power of 2 using a canonical zero hash.
\end{definition}

\begin{definition}[Node Content Hash]
\label{def:node-hash}
The node content hash of a compound node $N$ incorporates all structural information:
\begin{equation}
H_{\text{content}}(N) = H(M \parallel S \parallel \mathrm{CMR}(N) \parallel L)
\end{equation}
where $M$ denotes extraction masks (32 bytes), $S$ denotes sparse partial keys ($|N| \times 4$ bytes), and $L$ denotes child leaf counts ($|N| \times 4$ bytes).
\end{definition}

This architecture reduces per-node overhead from $O(k)$ to $O(\log k)$ hashes.

\begin{definition}[Node Proof Entry]
\label{def:npe}
For a compound node $N$ and child index set $J \subseteq [0, |N|)$ with $t = |J|$:
\begin{equation}
\mathrm{NPE}(N, J) = \bigl(\, J,\; M,\; S,\; L,\; \eta,\; v,\; \underbrace{\Pi_J^{\mathrm{CMR}}}_{\text{intra-node proof}} \,\bigr)
\end{equation}
where $\Pi_J^{\mathrm{CMR}}$ is a compact Merkle multiproof for children $\{c_j : j \in J\}$ within $\mathrm{CMR}(N)$, requiring $O(t \cdot (\log k - \log t))$ sibling hashes~\cite{merkle-multiproof}.
When $t = 1$, this reduces to a standard Merkle proof with $O(\log k)$ siblings.
\end{definition}

All proof types discussed in the following embed $\Pi_J^{\mathrm{CMR}}$ within each node entry, ensuring proof size scales with $O(\log k)$ rather than $O(k)$ per node.

\subsection{Single-Point Proofs}
\label{sec:proofs:single}

\mhot{}'s single-point proofs rely on the fact that \hot{} search constitutes an \emph{optimistic search}.
Unlike traditional Patricia tries where searching for a non-existent key may fail mid-traversal, \hot{}'s sparse matching semantics guarantee that search always reaches some leaf---though this leaf may differ from the query key.
The search only matches discriminative bits at each node; the reached leaf shares all discriminative bit values with the query but is not guaranteed to be lexicographically adjacent.
A final comparison between the reached leaf and the query key determines membership.

\begin{lemma}[HOT Optimistic Search Invariant]
\label{lem:optimistic}
For any non-empty \hot{} trie $T$ and any query key $K$, the optimistic search procedure terminates at exactly one leaf node $K'$ that agrees with $K$ on all discriminative bits encountered during traversal, regardless of whether $K$ exists in $T$.
\end{lemma}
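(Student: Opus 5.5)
The plan is to argue by induction on the height of the subtree being searched, using the structure of compound nodes from \S\ref{sec:hot}. First I will fix the search procedure precisely. At an internal compound node $N$ with extraction mask $M$ and sparse partial keys $S=(s_0,\ldots,s_{|N|-1})$, the procedure applies $M$ to $K$ to obtain a dense partial key $p=\textsc{Extract}(K,M)$. It then selects the slot whose sparse partial key $s_j$ satisfies $p \wedge s_j = s_j$, taking the largest such $j$ if several qualify, exactly as in \hot{}'s SIMD search. Finally it follows $N.\mathit{ch}[j]$, and at a leaf it stops.

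The first step is a local lemma. For every $p$ at least one slot matches, the chosen slot is unique, and its partial key agrees with $p$ on every discriminative bit tested along the root-to-entry path inside $N$. I will prove this from the fact that a compound node is a partition of a binary Patricia subtrie. Here $s_j$ records exactly the bits that are set on the binary path to entry $j$, and all other bits on that path are zero, so they impose no constraint. Walking this embedded binary trie, I take the 1-branch at a bit when $p$ has a 1 there and the 0-branch otherwise. This walk is total and deterministic because every internal binary node has two children, and it always ends at exactly one entry. I then need to check that this entry is precisely the largest $j$ with $s_j \subseteq p$. That holds because entries are ordered by the binary trie's in-order, which makes the lemma a property of how $S$ is constructed. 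I expect this correspondence between the bitwise subset test and the binary trie walk to be the main obstacle. It requires care about bits of $M$ that are not on the chosen path: the walk must not be misled by them, and they must be exactly the ones the lemma does not count as ``encountered.''

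The second step is the induction. If the height $h(T)=0$, the root is a leaf $K'$, the search stops there, and no bits are encountered, so the claim is vacuous. If $h(T)>0$, the local lemma gives a unique child $c_j$ whose binary path bits agree with $K$. Since $h(c_j)<h(T)$, the inductive hypothesis applies to the subtree rooted at $c_j$ and yields a unique leaf agreeing with $K$ on all bits encountered below it. Taking the union of the encountered bit sets gives agreement along the whole traversal. Termination follows because height strictly decreases at each step, and uniqueness of the leaf follows because every step is deterministic.

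Finally, I will note that nothing in the argument uses $K\in T$. The only requirement is that $T$ is non-empty, so that the root exists; every node is then either a leaf or has at least two entries, which keeps each walk total. This gives the ``regardless of whether $K$ exists'' clause.
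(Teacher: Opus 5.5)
Your proposal is correct and shares the paper's skeleton: induction on height, one compound node per step, recursion into a child of strictly smaller height. The local step, however, is argued differently.

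\textbf{How the paper argues the local step.} The paper gets existence of a matching slot from the invariant $S[0]=0$. It gets uniqueness from the fact that the largest matching index is well defined. It then simply asserts that the chosen child $c_j$ ``contains exactly those keys that match $K$ on the discriminative bits encoded in $M$.'' That assertion is stronger than what is true. Suppose a node's binary root tests $b_1$, its $0$-side tests $b_2$, and its $1$-side tests $b_3$. A query with $(b_1,b_2,b_3)=(0,0,1)$ selects entry $0$, and the leaves under entry $0$ need not have $b_3=1$, even though $b_3\in M$.

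\textbf{What your route adds.} Your correspondence between the rule ``largest $j$ with $p\wedge s_j=s_j$'' and the walk through the embedded binary Patricia trie is what actually justifies the agreement clause. It restricts ``encountered'' to the bits tested on the walk's path, and that is the only reading under which the lemma holds. It also gives existence of a match directly, without a separate appeal to $S[0]=0$.

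\textbf{Closing the step you flagged.} The obstacle you identify closes quickly; view bit vectors as sets.
\begin{enumerate}[nosep]
\item The walk's entry $w$ satisfies $s_w\subseteq p$ by construction.
\item Suppose some $j>w$ also matched. At the lowest common binary ancestor of $w$ and $j$, testing bit $b$, entry $j$ lies on the $1$-side, so $b\in s_j\subseteq p$.
\item The walk went to the $0$-side at that ancestor, which means $p$ has $0$ at $b$. This is a contradiction.
\end{enumerate}
This step uses only that entry indices follow the binary trie's in-order, which you already cite from the construction of $S$.

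Your base case at height $0$ matches the height convention of \S\ref{sec:hot}, whereas the paper's appendix starts at height $1$; the difference is immaterial.
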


\begin{proof}[Proof sketch (Lemma~\ref{lem:optimistic})]
\hot{} search at each internal node computes $\mathrm{dense}(K, M)$ and finds the last index $j$ satisfying $(\mathrm{dense} \land \mathrm{sparse}[j]) = \mathrm{sparse}[j]$.
By the \hot{} construction invariant, the first sparse partial key is always 0 (corresponding to the leftmost subtree).
Since $(\mathrm{dense} \land 0) = 0$ holds for all dense values, at least one match exists at every non-empty node.
The search thus proceeds deterministically to a unique leaf that matches the query on all discriminative bits.
However, non-discriminative bits may differ, so the final leaf $K'$ must be compared against $K$ to determine membership.
\end{proof}

This invariant unifies membership and non-membership proofs as they both execute identical search algorithms, differing only in the final comparison.

\begin{corollary}[Proof Unification]
\label{cor:unification}
Membership and non-membership proofs share identical proof generation and path verification algorithms.
The distinction lies only in the final predicate:
\begin{equation}
\mathrm{ProofType}(K, K_0) =
\begin{cases}
\text{Membership} & \text{if } K_0 = K \\
\text{Non-Membership} & \text{if } K_0 \neq K
\end{cases}
\end{equation}
where $K_0$ denotes the key of the leaf reached by searching $K$.
\end{corollary}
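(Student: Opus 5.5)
The corollary follows almost directly from Lemma~\ref{lem:optimistic}. The plan is to show that neither proof generation nor path verification depends on whether $K$ is in $T$. Only the closing equality test between $K$ and the reached leaf key $K_0$ does.

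\textbf{Generation.} Fix a non-empty trie $T$ with root commitment $C$ and a query key $K$. The prover runs the optimistic search and records, at each visited compound node $N_i$, the entry $\mathrm{NPE}(N_i,\{j_i\})$ of Definition~\ref{def:npe}, where $j_i$ is the slot selected by the sparse-key match. It then appends the leaf $(K_0, v)$ at which the search stops. By Lemma~\ref{lem:optimistic}, this search always terminates at exactly one leaf, whether or not $K\in T$. Moreover, each slot $j_i$ is a deterministic function of $K$ and the masks $M$ and sparse keys $S$ of $N_i$ alone, since $\mathrm{dense}(K,M)$ takes the last matching index. The sequence of NPEs is therefore a function of $(T,K)$ only, and the generation procedure contains no membership-dependent branch.

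\textbf{Verification.} The verifier works bottom-up from the claimed leaf. It hashes the leaf, and at each NPE it recomputes $\mathrm{CMR}(N_i)$ from the child hash and $\Pi^{\mathrm{CMR}}_{\{j_i\}}$, then recomputes $H_{\text{content}}(N_i)$ via Definition~\ref{def:node-hash} and compares the result at the top with $C$. It also replays the search: from each node's $M$ and $S$ it recomputes $\mathrm{dense}(K,M)$ and checks that $j_i$ is the last matching index. This replay uses only $K$ and data bound into the node hashes, so again nothing depends on membership. Once both checks pass, the verifier applies the predicate $K_0 = K$. Equality certifies membership. Inequality certifies non-membership, because by structural determinism and the uniqueness in Lemma~\ref{lem:optimistic} any trie containing $K$ would route the search for $K$ to the leaf $K$ itself.

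\textbf{Main obstacle.} The step that needs care is the non-membership direction: showing that the slot check really forces the path the honest search would take, so a prover cannot substitute a different leaf. I would first argue that $M$ and $S$ are committed inside $H_{\text{content}}$, so forging them breaks collision resistance. Then I would argue that the ``last matching index'' rule makes each $j_i$ unique, so the verified path is exactly the one from Lemma~\ref{lem:optimistic}. I would leave the full reduction to the appendix's game-based proofs and keep the corollary at the level of algorithmic identity.
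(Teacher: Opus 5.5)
Your proposal is correct and follows the paper's own route: the corollary is read off Lemma~\ref{lem:optimistic}, since the search ends at a unique leaf whether or not $K \in T$, so generation and path verification do not depend on membership and only the final test $K_0 = K$ separates the two proof types. One small tightening: $K_0 \neq K$ certifies absence only because a present key's search lands on $K$ itself, and that follows from the last-matching-index rule (every entry to the right of the one containing $K$ has a sparse-key bit set at a position where $K$ has bit $0$), not from structural determinism or from the lemma's uniqueness clause alone.
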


\heading{Membership proof.}
For a key $K$ with associated value $V$, the membership proof takes the form:
\begin{equation}
\pi_{\text{mem}}(K) = (K, V, v_{\text{leaf}}, \text{Path})
\end{equation}
where $\text{Path} = (\text{NPE}(N_0, J_0), \ldots, \text{NPE}(N_{h-1}, J_{h-1}))$ traces the search path from root to leaf, with $|J_i| = 1$ for single-point proofs.

Algorithm~\ref{alg:membership-verify} presents the verification procedure.

\begin{algorithm}[t]
\caption{Membership Proof Verification}
\label{alg:membership-verify}
\begin{algorithmic}[1]
\Require proof $\pi$, expected root $R$
\Ensure $\textsc{True}$ if proof is valid
\PhaseComment{\textcolor{teal}{Phase 1: Verify routing consistency}}
\ForAll{entry $\in$ $\pi.\text{path}$}
    \State $d \gets \textsc{DenseKey}(\pi.\text{key}, \text{entry}.\text{masks})$
    \State $j \gets \textsc{SearchSparse}(d, \text{entry}.\text{sparse\_keys})$
    \If{$j \neq \text{entry}.\text{child\_idx}$}
        \State \Return \textsc{False} \Comment{Routing mismatch}
    \EndIf
\EndFor
\PhaseComment{\textcolor{teal}{Phase 2: Reconstruct hashes bottom-up}}
\State $h_{\text{child}} \gets H_{\text{leaf}}(\pi.\text{key}, \pi.\text{value})$
\ForAll{entry $\in$ $\pi.\text{path}.\textsc{Reversed}()$}
    \State $\text{cmr} \gets \textsc{ReconstructCMR}(h_{\text{child}}, \text{entry})$
    \State $h_{\text{child}} \gets H(\text{entry}.\text{masks} \parallel \text{entry}.\text{sparse\_keys}$
    \State \hspace{3em} $\parallel\ \text{cmr} \parallel \text{entry}.\text{leaf\_counts})$
\EndFor
\PhaseComment{\textcolor{teal}{Phase 3: Compare with expected root}}
\State \Return $h_{\text{child}} = R.\text{content\_hash}$
\end{algorithmic}
\end{algorithm}

\heading{Non-membership proof.}
For a key $K$ not present in the trie, the non-membership proof includes the neighbor leaf $(K', V')$ reached by executing the same search algorithm.
Verification additionally checks that both $K$ and $K'$ route through identical children at every node, ensuring $K'$ is indeed the leaf that \hot{} search would reach for $K$.

\begin{theorem}[Single-Point Soundness]
\label{thm:single-soundness}
If verification succeeds, then with overwhelming probability $1 - \mathrm{negl}(\lambda)$:
\begin{enumerate}[nosep]
    \item For membership proofs: $(K, V) \in T$
    \item For non-membership proofs: $K \notin T$
\end{enumerate}
\end{theorem}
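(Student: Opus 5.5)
The argument is a standard reduction to collision resistance of $H$ (and of the leaf hash $H_{\text{leaf}}$, assumed domain-separated from internal-node hashing). Suppose an adversary outputs a proof $\pi$ for key $K$ that passes Algorithm~\ref{alg:membership-verify} against the honest root $R$ of trie $T$, yet the claimed statement is false. We will extract a hash collision from $\pi$ and the honest tree, so that $\Pr[\text{forgery}]$ is at most the collision-finding advantage plus the negligible probability of a collision in the Merkle padding. Both parts of the theorem use the same extraction, which follows Corollary~\ref{cor:unification}; they differ only in the final predicate.

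\textbf{Step 1 (path binding).} The claimed and honest hashes agree at the top: Phase~3 gives $h_{\text{child}} = R.\text{content\_hash}$. We compare the recomputed chain from the top down with the honest root-to-leaf search path of $K$ in $T$.

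At level $i$, suppose the recomputed $H_{\text{content}}$ equals the honest node hash. Either the preimages $(M\parallel S\parallel \mathrm{CMR}\parallel L)$ are identical, or we have a collision in $H$. If they are identical, the masks and sparse keys in $\text{NPE}(N_i,J_i)$ are the honest ones. The reconstructed CMR then equals the honest CMR. By collision resistance of the binary Merkle tree (a standard induction over its $O(\log k)$ levels), the child hash at the claimed index $j$ equals the honest $H(c_j)$, and the index is bound because position fixes the left/right order in the Merkle path.

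Phase~1 recomputes the routing from $K$ and the now-authentic $M,S$. It therefore forces $j$ to be exactly the child that honest \hot{} search takes. Induction on depth shows that the proof path coincides with the honest search path of $K$, and that the final leaf hash equals the honest leaf's hash.

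\textbf{Step 2 (leaf conclusion).} For membership, $H_{\text{leaf}}(K,V)$ equals the hash of the honest leaf reached by searching $K$. Absent a collision, that leaf is $(K,V)$, so $(K,V)\in T$.

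For non-membership, the proof exhibits $(K',V')$ with $K'\neq K$ whose hash equals the honest leaf reached by searching $K$. Absent a collision, that honest leaf is $K'$. Lemma~\ref{lem:optimistic} says search in $T$ is deterministic and always ends at a unique leaf. By structural determinism (the trie, and hence the route, is a function of the key set), if $K$ were in $T$ the search for $K$ would end at $K$'s own leaf. Since it ends at $K'\neq K$, we conclude $K\notin T$. The additional check that $K$ and $K'$ route identically is what ensures the leaf is the one reached by searching $K$. Without it, an adversary could present any authentic leaf.

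\textbf{Main obstacle.} The delicate step is justifying the claim that ``search for $K\in T$ ends at $K$.'' This requires that every key in $T$ is reachable by its own search, which is the correctness invariant of \hot{} search. The claim must hold for the honest tree only, which is fine because Step~1 reduces everything to the honest tree.

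A second subtlety is encoding ambiguity: $S$ and $L$ have length $|N|\times 4$, so the proof must confirm that $|N|$ is bound, via the fixed-size mask or a length prefix. Otherwise two different parses of the same preimage would break injectivity. I would handle this by arguing that the serialization is prefix-free.

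Finally, the reduction algorithm simply walks the two paths side by side and outputs the first mismatching preimage pair. This makes its running time polynomial in $h\log k$.
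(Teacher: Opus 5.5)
Your proposal is correct and follows essentially the same route as the paper. Collision resistance plus the Phase~1 routing check pins the proof's path to the honest search path of $K$, and the leaf hash then settles membership. For non-membership, the fact that searching a stored key ends at that key's own leaf turns an authenticated leaf $K'\neq K$ at that position into a leaf-hash collision. Your explicit side-by-side walk against the honest tree makes the collision extraction more concrete than the paper's $\mathcal{B}_1$/$\mathcal{B}_2$, which only assert that the colliding honest preimage exists. You are also right that the last step needs HOT search correctness rather than structural determinism.
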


\begin{proof}[Proof sketch (Theorem~\ref{thm:single-soundness})]
By collision resistance of $H$, the bottom-up hash reconstruction produces a unique sequence of node hashes.
The routing consistency check ensures $K$ traverses exactly the claimed path.
For the final hash to match the root commitment, either the path authentically exists or the adversary found a hash collision.
The latter occurs with probability at most $\mathrm{negl}(\lambda)$.
For non-membership, Lemma~\ref{lem:optimistic} guarantees that the optimistic search uniquely determines the leaf $K'$ reached for query $K$, proving no other leaf could be found by the same search procedure.
\end{proof}

\subsection{Multi-Point Membership Proof}
\label{sec:proofs:multi}

A naive approach to verifying $m$ membership proofs requires $O(m \cdot h \cdot (k-1))$ sibling hashes.
\mhot{} reduces this through two optimizations: \emph{path sharing} (keys traversing the same node share that node's metadata) and \emph{compact multiproofs} (proving $t$ children within a node requires $O(t(\log k - \log t))$ sibling hashes rather than $O(t(k-1))$).

\begin{definition}[Compact Merkle Multiproof]
\label{def:multiproof}
Given a Merkle tree $T$ with leaves $L$ and a subset $I \subseteq [0, |L|)$ of indices to prove, a compact multiproof consists of:
\begin{equation}
\pi_{\text{multi}} = (I, \Sigma)
\end{equation}
where $\Sigma$ represents the minimal set of sibling hashes needed to reconstruct the root from proven leaves $\{L[i] : i \in I\}$.
\end{definition}

Compact multiproofs exploit shared ancestors among the $m$ proven leaves to eliminate redundant sibling hashes, reducing proof size from the naive $m \cdot h$ to as few as $h + m - 1$ when leaves are clustered~\cite{merkle-multiproof}.

The compact multiproof generation algorithm (Algorithm~\ref{alg:multiproof-gen} traverses the tree bottom-up, collecting sibling hashes only for nodes where one child is known but the other is not.

\heading{Multi-point HOT proof.}
For a set of keys $K = \{K_1, \ldots, K_m\}$:
\begin{equation}
\pi_{\text{multi}}(K) = (\text{Entries}, \text{Levels})
\end{equation}
where Entries contains tuples $(K_i, V_i, v_i)$ sorted by key, and Levels organizes NPEs by depth with compact multiproofs at each node.

\begin{theorem}[Multi-Proof Soundness]
\label{thm:multi-soundness}
If verification succeeds, then all key-value pairs $(K_i, V_i)$ exist in the trie committed by the root, with overwhelming probability.
\end{theorem}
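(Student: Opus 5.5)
The plan is a reduction to collision resistance of $H$, following the same pattern as the sketch for Theorem~\ref{thm:single-soundness} but working level by level over the multiproof structure. Suppose an adversary outputs a root $R$ honestly computed from a trie $T$, together with a multiproof $\pi_{\text{multi}}(K)=(\text{Entries},\text{Levels})$ that verifies, where some claimed $(K_i,V_i)$ is not in $T$. I will build a collision finder for $H$. This requires first fixing the verification procedure so that it generalizes Algorithm~\ref{alg:membership-verify}.
\begin{enumerate}[nosep]
\item For each key $K_i$, check routing consistency at every NPE on its path. That is, $\textsc{SearchSparse}(\textsc{DenseKey}(K_i,M),S)$ must land in the index set $J$ of that node, in the slot assigned to $K_i$.
\item Recompute the leaf hashes $H_{\text{leaf}}(K_i,V_i)$.
\item Bottom-up by level, rebuild each node's CMR from the proven child hashes and the compact multiproof siblings $\Pi_J^{\mathrm{CMR}}$ (Definition~\ref{def:multiproof}). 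Then form $H(M\parallel S\parallel \mathrm{CMR}\parallel L)$ as in Definition~\ref{def:node-hash}.
\item Compare the resulting top hash with $R$.
\end{enumerate}

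The core step is a simultaneous top-down induction over the levels of the proof, compared against the honest trie. At the root, the reconstructed hash equals the honest root hash. Either the preimages $(M,S,\mathrm{CMR},L)$ agree with the honest root's fields, or we have a collision. Given equal CMRs, I then descend the intra-node binary Merkle tree. For each recomputed internal Merkle node, equality of outputs means either equal children or a collision. This yields, for every $j\in J$, that the reconstructed child hash equals the honest child hash $H(c_j)$. Here the leaf indices are fixed by $J$ and the padding is canonical (Definition~\ref{def:cmr}), so positions cannot be shuffled. Applying this to every node in $\text{Levels}$ propagates equality down all proven paths. At the bottom, $H_{\text{leaf}}(K_i,V_i)$ equals the hash of an honest leaf at the position reached. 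One more collision-or-equality step then gives that this honest leaf is exactly $(K_i,V_i)$, so $(K_i,V_i)\in T$.

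The routing check, together with the agreement of $M$ and $S$ with the honest node, ensures that the position reached is the one honest \hot{} search for $K_i$ would reach, by Lemma~\ref{lem:optimistic}. For membership this is not even strictly needed, since the leaf stores the full key. The reduction runs the induction and, at the first mismatch, outputs the two distinct preimages with equal $H$-value. That pair is a collision, so the success probability is bounded by $\mathrm{negl}(\lambda)$. Domain separation between leaf hashes, node content hashes and Merkle internal hashes must be assumed; otherwise a cross-type collision argument needs an injective encoding. I will state this as an assumption on the fixed-length serialization.

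The main obstacle is the compact multiproof layer. With shared paths, the verifier's reconstruction is a DAG of hash evaluations rather than a single chain. I must argue that the verifier's reconstruction order is deterministic given $(I,\Sigma)$, and that every node it recomputes corresponds to a unique position in the honest padded tree. Only then does the "equal output $\Rightarrow$ equal inputs or collision" induction apply position by position. I would handle this by defining the verifier to process indices level by level in sorted order, mirroring Algorithm~\ref{alg:multiproof-gen}. I would also require it to reject proofs with leftover or missing siblings, so that the set of recomputed positions is exactly the ancestor closure of $I$.
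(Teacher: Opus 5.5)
Your argument is correct, but it takes a different route from the paper. The paper proves Theorem~\ref{thm:multi-soundness} by a guessing (hybrid) reduction to single-point soundness. An adversary $\mathcal{B}$ runs $\mathcal{A}$, picks $i^*$ uniformly from $[1,m]$, and uses $\mathsf{ExtractSingleProof}$ to carve a single-point proof $\pi_{i^*}$ out of $(\mathrm{Entries},\mathrm{Levels})$. It submits this to the membership game, which gives $\epsilon \le m\cdot \mathrm{Adv}^{\mathrm{mem}}_\Pi(\mathcal{B})$, and Theorem~\ref{thm:single-soundness-formal} finishes the proof.

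You instead reduce directly to collision resistance. You use one top-down collision-or-equality induction over the whole reconstruction: the tree of NPEs and, inside each NPE, the ancestor closure of $J$ in the padded CMR tree.

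The paper's route buys modularity. Nothing new about hashing needs to be argued; it only needs the claim that a valid single-point proof can be extracted. That claim is asserted, not proved, and it hides exactly the obstacle you isolate. In a compact multiproof, the siblings on $K_i$'s intra-node path are generally not in $\Sigma$. They are intermediate values recomputed from other proven children, so extracting $\pi_i$ already requires the deterministic level-by-level reconstruction you specify.

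Your route buys a tight bound, $\epsilon_{\mathrm{CR}}$ rather than the $m\cdot\epsilon_{\mathrm{CR}}$ recorded in Table~\ref{tab:security-summary}. It also treats shared paths explicitly, and it states the domain-separation (injective-encoding) assumption that the paper uses silently.

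One difference in setting: your reduction needs the honest trie $T$ to locate the first mismatch, so you are implicitly in a game where the adversary outputs $T$ with $R=\mathsf{Commit}(T)$. The paper's $\mathsf{G}^{\mathrm{multi}}$ has the adversary output only $R$ and phrases winning via $\mathsf{ExtractTrie}$ and $T_R$. Your formulation is the more standard and better-defined reading of ``the trie committed by the root,'' but you should state it as your security game.
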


\begin{proof}[Proof sketch (Theorem~\ref{thm:multi-soundness})]
Each key $K_i$ must pass the same routing consistency and hash reconstruction checks as in single-point verification (Theorem~\ref{thm:single-soundness}).
Path sharing and compact multiproofs reduce proof size but do not weaken security: shared nodes are verified once with the same rigor, and multiproofs authenticate the same child hashes as independent proofs would.
Forging any $(K_i, V_i)$ requires finding a hash collision.
\end{proof}

\subsection{Lower Bound Proof}
\label{sec:proofs:lowerbound}

Lower bound queries---finding the smallest key $\geq Q$---are fundamental to range operations in authenticated storage systems.
Unlike membership queries where the search target either exists or does not, lower bound queries must locate a key that may differ from the query itself.

In prefix-organized tries such as MPT, siblings represent lexicographically adjacent key ranges, making neighbor identification straightforward.
\hot{} organizes nodes by discriminative bits instead, so siblings may span non-contiguous ranges. We therefore construct lower bound proofs by authenticating the search path that \hot{}'s bit-comparison traversal follows to locate the smallest key $\geq Q$.

\begin{definition}[Lower Bound Query]
\label{def:lowerbound}
For a query key $Q$, the lower bound operation $\mathrm{lb}(Q)$ returns the smallest key $K \geq Q$ present in the trie, or $\bot$ if no such key exists.
\end{definition}

A critical subtlety arises from the nature of optimistic search.
The search finds a leaf $K'$ that matches the query $Q$ on all discriminative bits encountered during traversal, but $K'$ need not be lexicographically close to $Q$---the two may differ arbitrarily on non-discriminative bits.
When searching for $Q$ terminates at leaf $K'$ where $K' \neq Q$, let $d = \mathrm{diffbit}(Q, K')$ denote the first differing bit position.
Three cases arise:
\begin{packeditemize}
    \item \emph{Exact match} ($K' = Q$): The query key exists; $K'$ is trivially the lower bound.
    \item \emph{Overshot} ($Q[d] = 0$, $K'[d] = 1$): The search entered a right subtree, so $Q < K'$ lexicographically. However, $K'$ may not be minimal within this subtree; the true lower bound is the leftmost leaf reachable from the fork point.
    \item \emph{Undershot} ($Q[d] = 1$, $K'[d] = 0$): The search ended in a left subtree where all keys are less than $Q$. The algorithm must examine right siblings at the fork point to find a subtree containing keys $\geq Q$.
\end{packeditemize}

The lower bound proof must authenticate both the leaf $K'$ reached by optimistic search and the path to the actual result.

\begin{definition}[Lower Bound Proof]
\label{def:lb-proof}
For query $Q$, the lower bound proof structure is:
\begin{equation}
\pi_{\text{lb}}(Q) = (Q, \text{Path}, K', V', v', \text{Adj}, K_{\text{result}}, V_{\text{result}})
\end{equation}
comprising the query key, the authenticated search path to the leaf $(K', V', v')$ reached by optimistic search, optional adjustment information $\text{Adj}$ for non-exact matches, and the actual lower bound result $(K_{\text{result}}, V_{\text{result}})$.
\end{definition}

When $K' \neq Q$, the adjustment information specifies how to navigate from the search path to the true lower bound:
\begin{equation}
\text{Adj} = (d, b, f, \text{AdjPath})
\end{equation}
where $d = \mathrm{diffbit}(Q, K')$ is the first differing bit, $b = Q[d]$ indicates overshot ($b=0$) or undershot ($b=1$), $f$ is the fork depth, and $\text{AdjPath}$ authenticates the path from the fork point to the result leaf.

\begin{definition}[Fork Depth]
\label{def:fork-depth}
The fork depth $f$ identifies where the query's hypothetical path would diverge from the path to the leaf reached by optimistic search:
\begin{equation}
f = \max\{i : d \in \mathrm{disc\_bits}(\text{Path}[i])\}
\end{equation}
where $\mathrm{disc\_bits}(\cdot)$ extracts the set of discriminative bit positions encoded in a node's extraction masks.
\end{definition}

\begin{lemma}[Lower Bound Correctness]
\label{lem:lb-correct}
The adjustment algorithm correctly computes $\mathrm{lb}(Q)$.
\end{lemma}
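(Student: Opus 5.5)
The plan is to reduce correctness of the adjustment algorithm to a structural fact about the underlying binary Patricia trie. By Binna et al.~\cite{hot}, every \hot{} compound node is a connected piece of that trie (the SMHP partition). So each root-to-leaf \hot{} path refines into a path of binary Patricia nodes. Each such binary node is labelled by one discriminative bit position $i$, and its left (resp.\ right) subtree holds exactly the keys with bit $i$ equal to $0$ (resp.\ $1$). Two consequences follow. First, the subtree under any binary node is a contiguous lexicographic interval of the key set. Second, the keys in that subtree agree on every bit position smaller than the node's label. I will state both properties as a preliminary claim and treat them as standard Patricia-trie facts.

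Next I would locate the fork. Let $K'$ be the leaf reached by optimistic search (Lemma~\ref{lem:optimistic}), with $K'\neq Q$, and let $d=\mathrm{diffbit}(Q,K')$. Along the binary path to $K'$, $Q$ agrees with $K'$ on every label encountered, since Lemma~\ref{lem:optimistic} gives agreement on all discriminative bits met during traversal. Consider the deepest binary node $u$ on this path whose label is smaller than $d$. Its child on the path, at label $\geq d$ or a leaf, has a subtree $S$ whose keys all share the prefix $K'[0..d)=Q[0..d)$. I would argue that no key outside $S$ shares this prefix. Otherwise the Patricia node separating it from $S$ would have a label $<d$ and would sit on the path below $u$, a contradiction. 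It then remains to show that the binary node where $Q$ would branch off is the one labelled $d$ if it exists. I will match this with Definition~\ref{def:fork-depth}: $f$ is the compound node containing that binary node, or the node containing $u$'s child edge otherwise. This matching is the step I expect to be the main obstacle. The fork must be expressed at compound-node granularity through extraction masks, and the \hot{} layout does not expose binary nodes directly. I would handle it by reading the sparse partial keys of $\mathrm{Path}[f]$ as the encoding of the binary subtree inside that node.

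The case analysis then uses interval contiguity. In the overshot case ($Q[d]=0$), every key in $S$ exceeds $Q$, because it shares the prefix and then has a $1$ at $d$, or sits above the split at $d$. Every key outside $S$ that is greater than $Q$ is also greater than all of $S$, since $S$ is an interval determined by that prefix. Hence $\mathrm{lb}(Q)=\min S$, the leftmost leaf from the fork, which is what AdjPath authenticates. In the undershot case ($Q[d]=1$), all of $S$ is below $Q$. Then $\mathrm{lb}(Q)$ is the minimum of the next interval to the right: ascend from the fork to the nearest ancestor binary node at which the path went left, and take the leftmost leaf of that node's right sibling subtree. If no such ancestor exists, the result is $\bot$. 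The exact-match case is trivial.

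Finally, for the authenticated statement I would append the soundness argument of Theorem~\ref{thm:single-soundness}. Each node entry on Path and AdjPath is bound to the root by collision resistance. The verifier's routing checks (leftmost child via sparse key $0$, and the next right sibling index) then force the prover to present exactly the leaves identified above, except with negligible probability.
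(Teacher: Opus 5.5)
Your binary-level analysis is correct, and it is more careful than the paper's proof. The paper stays at compound-node granularity. It asserts that some node on the search path branched on bit $d$, and in the overshot case concludes $\mathrm{lb}(Q)=\min S_f$. It never shows that the keys of $S_f$ agree with $Q$ before $d$, or that no key outside $S_f$ lies between $Q$ and $\min S_f$. Your observation that $S$ is exactly the set of keys with prefix $Q[0..d)$ supplies both.

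The genuine gap is the step you flagged as the main obstacle: it does not just need work, it fails for the algorithm as the paper defines it. At every binary node on the path to $K'$, $Q$ and $K'$ take the same branch, so they agree on every label there, yet they differ at $d$. Hence no on-path binary node is labelled $d$, and the fork is always an edge $u\to w$ with $\ell(u)<d<\ell(w)$. Your ``label $=d$'' subcase is therefore vacuous, which is fortunate, because keys in the left subtree of such a $w$ need not exceed $Q$. Consequently $f=\max\{i : d\in\mathrm{disc\_bits}(\mathrm{Path}[i])\}$ from Definition~\ref{def:fork-depth} can only detect off-path occurrences of $d$ in the masks. It is undefined whenever $d$ lies in no mask on the path (always, when $k=2$), and otherwise it can be strictly shallower than the compound node containing $u$.

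Even at the correct compound node $N$, the leaf rules do not match. If $w$ lies inside $N$, then $S$ occupies a contiguous block $[a,b]$ of $N$'s entries containing the path index $j$. So $\mathrm{lb}(Q)$ is the leftmost leaf under entry $a$ in the overshot case, and under entry $b+1$ (with backtracking) in the undershot case. In general this is neither child-index-$0$ descent from the fork nor the immediate right sibling $j+1$.

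\emph{Overshot counterexample.} Take keys $0000,0001,0110,0111$ in one node (mask $\{1,3\}$, sparse keys $00,01,10,11$). Then $Q=0101$ reaches $K'=0111$ (entry $3$), $d=2$ lies in no mask, the case is overshot, and $\mathrm{lb}(Q)=0110$ is entry $2$.

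\emph{Undershot counterexample.} Take keys $0000,0001,0100,0110$ (mask $\{1,2,3\}$, sparse keys $000,001,100,110$). Then $Q=0010$ reaches $0000$ and $f=0$. The immediate right sibling $0001$ is below $Q$, while $\mathrm{lb}(Q)=0100$ is entry $2$.

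So two of your claims cannot be proved: that $\min S$ is ``what AdjPath authenticates'', and that the verifier's leftmost-child and next-right-sibling checks force exactly these leaves. The paper's overshot argument has the same hole, since it presumes a path node branching on $d$.

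What your argument does establish is the correctness of a repaired procedure. Take the fork as $\max\{i : \mathrm{disc\_bits}(\mathrm{Path}[i])\cap\{0,\dots,d-1\}\neq\emptyset\}$, with $S$ the whole key set if no such $i$ exists (the case where your $u$ does not exist). Take $[a,b]$ to be the entries whose sparse partial keys agree with entry $j$'s on all mask positions below $d$.
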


\begin{proof}[Proof sketch (Lemma~\ref{lem:lb-correct})]
Let $d = \mathrm{diffbit}(Q, K')$ denote the first bit position where the query and the reached leaf differ.

\emph{Overshot case} ($Q[d] = 0$, $K'[d] = 1$): At fork depth $f$, the search entered a subtree rooted at a node whose discriminative bit at position $d$ directed the search rightward.
All keys $K$ in this subtree share bit $d = 1$, hence satisfy $K > Q$ lexicographically.
The minimum key in this subtree (found by repeatedly taking the leftmost child from the fork point) is therefore the smallest key exceeding $Q$.
Note that this minimum need not be $K'$ itself, as $K'$ may reside anywhere within the subtree.

\emph{Undershot case} ($Q[d] = 1$, $K'[d] = 0$): The search terminated in a left subtree where all keys share bit $d = 0$, hence are lexicographically smaller than $Q$.
The algorithm must find the first right sibling at the fork point whose subtree contains keys $\geq Q$, then descend to that subtree's minimum.
\end{proof}

\heading{Verification.}
The verifier must ensure that the claimed result is indeed the minimum key $\geq Q$, not merely some key satisfying the bound.
Four checks enforce correctness:
\begin{packeditemize}
\item \emph{Path integrity.} The search path must authenticate $K'$ against the committed root via bottom-up hash reconstruction.
\item \emph{Search consistency}: Both $Q$ and $K'$ must route identically through each path node---i.e., produce the same dense key and thus select the same child---confirming $K'$ is the leaf that optimistic search reaches for $Q$.
\item \emph{Fork depth correctness.} The verifier independently recomputes $f$ from the Merkle-committed extraction masks, rejecting any mismatch with the claimed value.
\item \emph{Structural minimality.} For overshot cases, every entry in $\text{AdjPath}$ must have child index 0 (leftmost descent); for undershot cases, the first adjustment entry must be the immediate right sibling at the fork point. These constraints ensure the result is minimal.
\end{packeditemize}

\begin{theorem}[Lower Bound Soundness]
\label{thm:lb-soundness}
If verification succeeds, then $K_{\text{result}} = \mathrm{lb}(Q)$ with overwhelming probability.
\end{theorem}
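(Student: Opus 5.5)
The argument is a reduction: an accepting proof either describes the true committed trie, or it yields a hash collision. We then show that, on the true trie, the four verifier checks force the claimed result to equal $\mathrm{lb}(Q)$, using Lemma~\ref{lem:optimistic} and Lemma~\ref{lem:lb-correct}.

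\textbf{Authentication.} Suppose verification accepts $\pi_{\text{lb}}(Q)$ against root $R$. We build an extractor. It walks the main search path and $\text{AdjPath}$ bottom-up, recomputing each CMR and node content hash as in Algorithm~\ref{alg:membership-verify}, and compares every recomputed value with the corresponding honest node of the trie committed by $R$. If some proof entry $(M,S,\mathrm{CMR},L)$ or intra-node Merkle sibling differs from the honest one, yet the hash chain still reaches $R$, the first point where the two chains merge gives two distinct preimages with the same image under $H$. This is a collision, as in Theorem~\ref{thm:single-soundness}. $\text{AdjPath}$ shares its prefix with Path up to the fork node, so both are bound to the same node there. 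Except with $\mathrm{negl}(\lambda)$ probability, every node entry, child index, extraction mask, sparse key, and leaf in the proof is therefore the genuine one.

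\textbf{Structural correctness.} Now we work in the honest trie. The search-consistency check and Lemma~\ref{lem:optimistic} show that $K'$ is exactly the leaf that optimistic search for $Q$ reaches. The verifier recomputes $d=\mathrm{diffbit}(Q,K')$ and $f$ from the authenticated masks, so the fork node $N_f$ is also the genuine one. Because $N_f$ is the deepest path node that discriminates on $d$, every key below the child chosen at $N_f$ agrees with $Q$ on all bits before $d$. Every key outside the subtree of $N_f$ already differs from $Q$ at some earlier discriminative bit, in the direction fixed by the routing.

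In the overshot case, all keys in the chosen subtree satisfy $K>Q$. The leftmost-descent constraint (all child indices equal to $0$) then yields the minimum of that subtree. In the undershot case, all keys in the chosen subtree are $<Q$. The immediate-right-sibling constraint followed by leftmost descent yields the smallest key lying to the right of the fork. Lemma~\ref{lem:lb-correct} identifies this key as $\mathrm{lb}(Q)$. The exact-match case is immediate.

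\textbf{Main obstacle.} The hard step is showing that no key strictly between $Q$ and $K_{\text{result}}$ is hidden elsewhere in the trie. This needs an ordering invariant: within a compound node, sibling entries are sorted, so ``entry index $j+1$'' in sparse-key order really is the lexicographically next subtree. It also requires that the keys under $N_f$ with bit $d$ equal to $Q[d]$ form a contiguous block of entries adjacent to the chosen one.

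I would first try to derive this from the binary Patricia trie underlying SMHP. Each compound node is a partition of a binary subtrie, so its entries are that subtrie's leaves in in-order, and the entries sharing bit $d$ form a contiguous interval. A subtlety is that in the undershot case the ``immediate right sibling'' must be read as the first entry with bit $d=1$ sharing the same prefix bits above $d$. I would make this precise, and if needed strengthen the check to verify the sparse key of that sibling.

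A second edge case is when no such sibling exists, so the answer must come from a higher fork or be $\bot$. I expect to handle it by iterating the same argument at the next ancestor that discriminates on an earlier bit where $Q$ has $0$. Finally, a union bound over the at most $2h$ authenticated nodes keeps the total failure probability negligible.
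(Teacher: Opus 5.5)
Your authentication step follows the paper's reduction and is fine in outline. The gap is in the structural step, at ``because $N_f$ is the deepest path node that discriminates on $d$, every key below the child chosen at $N_f$ agrees with $Q$ on all bits before $d$.''

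Search consistency means $Q$ and $K'$ take the same entry at every node. So they agree on every binary split that actually lies on the search path, and $d$ is never an on-path split. $d\in\mathrm{disc\_bits}(N_f)$ only records a split on a branch the search did not take. Two consequences follow.
\begin{itemize}
\item $d$ may lie in no mask at all. Take keys $00\cdots$ and $10\cdots$ with query $01\cdots$: the only mask is $\{0\}$, $d=1$, so $f$ is undefined, yet $\mathrm{lb}(Q)=10\cdots$.
\item A deeper path node whose mask lacks $d$ can contain an on-path split at some $b<d$. Keys on the far side of $b$ lie under the chosen child of $N_f$ but differ from $Q$ before $d$. In the overshot case, a right turn of $Q$ at $b$ makes index-$0$ descent return a key below $Q$. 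In the undershot case, a left turn at $b$ makes the immediate right sibling at $N_f$ skip those keys, which exceed $Q$ yet are smaller than the returned key.
\end{itemize}
The opposite mismatch also occurs: the set $S$ of keys agreeing with $Q$ before $d$ can be a run of several entries of one node. Take a single node splitting at bit $2$ (root), bit $5$ (left child) and bit $8$ (right child), with entries $e_0<e_1<e_2<e_3$. Let the keys of $e_2,e_3$ agree with $Q$ on bits $0$--$4$ with bit $5$ equal to $0$, and let $Q[2]=1$, $Q[5]=1$, $Q[8]=0$. The search reaches $e_2$, $d=5$, $f=0$, and $e_3$ passes the immediate-right-sibling check although its key is smaller than $Q$. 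All hashes in these examples are honest, so no collision is extractable. The four checks as listed therefore do not imply the theorem.

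Citing Lemma~\ref{lem:lb-correct} does not help, because its proof makes the same leap: it asserts that every key in the entered subtree carries bit $d$ of $K'$. Your ``main obstacle'' paragraph still treats $d$ as a split of $N_f$ and defers to a strengthened check. That changes the verifier rather than proving the statement.

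The missing idea is to place the fork at the binary level: the deepest split on the search path with position $<d$, which is recoverable from the committed masks and sparse keys. With that definition:
\begin{itemize}
\item $S$ is the path-side subtree of that split. It is one entry's subtree or a contiguous run of entries of a single compound node, or the whole trie if no such split exists.
\item Every key outside $S$ differs from $Q$ before $d$, on the side fixed by routing.
\item Hence $\mathrm{lb}(Q)=\min S$ when $Q[d]=0$. When $Q[d]=1$, it is the leftmost leaf of the entry after the run, or else of the entry after the path entry at the nearest ancestor that has one, or else $\bot$.
\end{itemize}
The verifier must enforce exactly this first step: locate the run's boundary from the committed sparse keys and check that the intermediate path entries are last, with index-$0$ descent only below it. For that repaired verifier, your binding argument plus the in-order ordering of entries does give the result.
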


\begin{proof}[Proof sketch (Theorem~\ref{thm:lb-soundness})]
The verifier independently recomputes the fork depth $f$ from the Merkle-committed extraction masks using Definition~\ref{def:fork-depth}, preventing adversarial manipulation of the branching point.
Structural constraints enforce minimality: in the overshot case, each entry in $\text{AdjPath}$ must have child index 0 (leftmost descent); in the undershot case, the first adjustment entry must be the immediate right sibling of the search path's child at the fork point.
The adjustment path must reconstruct to the same node content hash as the search path at depth $f$, cryptographically binding it to the committed trie structure. Any attempt to return a non-minimal key will produce a root hash mismatch.
\end{proof}

\subsection{Range Proof}
\label{sec:proofs:range}

Range queries are essential for authenticated storage applications such as blockchain state synchronization, where clients must verify that a returned dataset contains \emph{exactly} the entries within specified bounds~\cite{devp2p,rangeproofimportant}. The central challenge is \emph{completeness}: an adversarial prover might return a subset of the true range, omitting entries to deceive the verifier.

\begin{definition}[Range Query]
\label{def:range}
For an interval $[\text{first}, \text{last})$, the range query returns all key-value pairs $(K, V)$ satisfying $\text{first} \leq K < \text{last}$.
\end{definition}

\mhot{} constructs range proofs by composing lower bound proofs with multi-point proofs.
The lower bound proofs (Section~\ref{sec:proofs:lowerbound}) establish the range boundaries, while the multi-point proof (Section~\ref{sec:proofs:multi}) authenticates all entries within.

\begin{definition}[HOT Range Proof]
\label{def:range-proof}
For interval $[\text{first}, \text{last})$:
\begin{equation}
\pi_{\text{range}} = (\text{first}, \text{last}, \pi_{\text{lb}}^L, \pi_{\text{lb}}^R, \pi_{\text{multi}})
\end{equation}
where $\pi_{\text{lb}}^L$ authenticates $\mathrm{lb}(\text{first})$, $\pi_{\text{lb}}^R$ authenticates $\mathrm{lb}(\text{last})$, and $\pi_{\text{multi}}$ proves membership of all entries in the range.
\end{definition}

\heading{Rank-based completeness verification.}
Efficient completeness verification builds on the \emph{rank function}, which counts the number of keys preceding a given key in the trie’s total order. Since each node records the leaf counts of its child subtrees and commits them in the node content hash, the verifier can derive ranks directly from the boundary proofs, without enumerating all keys in the queried range.

\begin{definition}[Rank]
\label{def:rank}
For a key $K$ in trie $T$, the rank is the count of smaller keys:
\begin{equation}
\mathrm{rank}(K) = |\{K' \in T : K' < K\}|
\end{equation}
\end{definition}

\begin{lemma}[Rank Computation from Path]
\label{lem:rank}
Given the search path for key $K$, the rank can be computed as:
\begin{equation}
\mathrm{rank}(K) = \sum_{i=0}^{h-1} \sum_{j=0}^{\text{child\_idx}_i - 1} \mathrm{lc}(\text{path}[i].\text{children}[j])
\end{equation}
\end{lemma}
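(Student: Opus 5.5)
Since $\mathrm{rank}(K)$ counts keys strictly smaller than $K$, the plan is to partition the set $\{K'' \in T : K'' < K\}$ according to the level at which the search path for $K$ separates from the path to $K''$, and to show that the keys separating at level $i$ are exactly those stored in the children of $\text{path}[i]$ with index less than $\text{child\_idx}_i$. Here we take $K \in T$, or more generally $K$ equal to the leaf reached by the optimistic search of Lemma~\ref{lem:optimistic}. Summing the authenticated leaf counts $\mathrm{lc}(\cdot)$ of those children over $i = 0, \ldots, h-1$ then gives the formula.

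The key structural fact is that inside every compound node the entries, ordered by index, cover disjoint and lexicographically increasing key sets. I would derive this from the correspondence between \hot{} and the SMHP partitioning of the underlying binary Patricia trie (\S\ref{sec:hot}). A compound node is a connected binary subtrie, and its entries are the boundary subtrees of that subtrie, listed in in-order (left-to-right) traversal order; the sparse partial keys are assigned in exactly this order, with entry $0$ leftmost.

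For two entries $j < j'$ of a node, the subtrees meet at a binary branching point whose discriminative bit $b$ sends $j$ left ($0$) and $j'$ right ($1$). All keys below both entries agree on every bit before $b$, because they lie in the same binary subtree. Hence every key under $j$ is lexicographically smaller than every key under $j'$. The main obstacle is making this ordering claim rigorous from the sparse-key representation rather than only appealing to pictures. I would argue it by induction over the binary Patricia subtrie contained in the node, using that the discriminative bits along any root-to-boundary binary path increase in position.

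With the ordering in hand, I would fix a key $K'' < K$, $K'' \neq K$, and let $i$ be the deepest level with $K''$ in the subtree of $\text{path}[i]$. Such a level exists because $\text{path}[0]$ is the root. By maximality, $K''$ lies in some child $j \neq \text{child\_idx}_i$ of $\text{path}[i]$, while $K$ lies in child $\text{child\_idx}_i$. The ordering fact then gives $K'' < K$ if and only if $j < \text{child\_idx}_i$. Conversely, every key in a child $j < \text{child\_idx}_i$ of $\text{path}[i]$ is smaller than $K$ and separates from the path at level $i$. These sets are pairwise disjoint across pairs $(i, j)$, since each key lies in exactly one child at the level where it diverges.

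Finally, I would note that the leaf counts $L$ are hashed into $H_{\text{content}}$ (Definition~\ref{def:node-hash}). The quantities $\mathrm{lc}(\text{path}[i].\text{children}[j])$ are therefore committed values, so the verifier may use them as stated. Summing over the disjoint classes gives $\mathrm{rank}(K)$, and leaves contribute nothing at level $h$.
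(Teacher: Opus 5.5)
Your proof is correct and is essentially the paper's argument. The paper inducts on height, splitting the keys below $K$ into those under the root's children $c_0,\dots,c_{j_0-1}$ and those below $K$ inside $c_{j_0}$; unrolled, this is exactly your partition by the level at which a key leaves $K$'s path, and both arguments rest on the same sparse-key ordering fact, which you justify via the binary Patricia branching point while the paper only asserts it.

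Keep your explicit hypothesis $K\in T$, because the paper's inductive step tacitly needs it. For example, take $T=\{000,011\}$ and the absent query $K=100$: the root's only discriminative bit is bit $1$ (counting from $0$ at the most significant end), $K$ has $0$ there, so the search takes child $0$ and the formula returns $0$, while $\mathrm{rank}(K)=2$.
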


\begin{proof}[Proof sketch (Lemma~\ref{lem:rank})]
At each level $i$ of the search path, all children with indices $j < \text{child\_idx}_i$ contain keys lexicographically smaller than $K$, since sparse partial keys maintain sorted order within each compound node.
The $\mathrm{lc}$ field, committed in the node content hash, records the total number of leaves in each child's subtree.
Summing these counts across all path levels yields the total number of keys preceding $K$.
\end{proof}

This rank computation enables $O(1)$ completeness verification: the verifier simply checks whether $|\text{entries}| = \mathrm{rank}(\text{last}) - \mathrm{rank}(\text{first})$.
If an adversary omits even a single entry, the count mismatch triggers rejection.
Algorithm~\ref{alg:range-verify} presents the complete verification procedure.

\begin{algorithm}[t]
\caption{Range Proof Verification}
\label{alg:range-verify}
\begin{algorithmic}[1]
\Require proof $\pi$, expected root $R$
\Ensure $\textsc{True}$ if proof is valid
\PhaseComment{\textcolor{teal}{Verify boundary proofs first (prevents omission attacks)}}
\If{\textbf{not} $\textsc{VerifyLB}(\pi.\pi_{\text{lb}}^L, R)$} \Return \textsc{False}
\EndIf
\If{\textbf{not} $\textsc{VerifyLB}(\pi.\pi_{\text{lb}}^R, R)$} \Return \textsc{False}
\EndIf
\If{$\pi.\text{first} \geq \pi.\text{last}$}
    \State \Return $\pi.\text{entries} = \emptyset$
\EndIf
\If{\textbf{not} $\textsc{VerifyMulti}(\pi.\pi_{\text{multi}}, R)$} \Return \textsc{False}
\EndIf
\PhaseComment{\textcolor{teal}{Rank-based count verification}}
\State $r_L \gets \textsc{ComputeRank}(\pi.\text{first}, \pi.\pi_{\text{lb}}^L)$
\State $r_R \gets \textsc{ComputeRank}(\pi.\text{last}, \pi.\pi_{\text{lb}}^R)$
\If{$r_R - r_L \neq |\pi.\text{entries}|$}
    \State \Return \textsc{False} \Comment{Omission detected}
\EndIf
\PhaseComment{\textcolor{teal}{Verify ordering and boundaries}}
\ForAll{$i \in [1, |\pi.\text{entries}|)$}
    \If{$\pi.\text{entries}[i].\text{key} \leq \pi.\text{entries}[i-1].\text{key}$}
        \State \Return \textsc{False}
    \EndIf
\EndFor
\State \Return \textsc{True}
\end{algorithmic}
\end{algorithm}

The verification algorithm enforces a critical security invariant: boundary proofs must be verified \emph{before} accepting any entries.
This ordering prevents empty-proof attacks where an adversary provides valid but incomplete entry lists.
The rank-based count check then ensures exactly the correct number of entries appears.

\begin{theorem}[Range Proof Soundness]
\label{thm:range-soundness}
If verification succeeds, the entries contain exactly all keys $K$ satisfying $\text{first} \leq K < \text{last}$.
\end{theorem}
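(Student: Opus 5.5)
The argument splits soundness into two halves: every returned entry lies in the committed trie $T$ and in $[\text{first},\text{last})$ (no fabrication), and every key of $T$ in that interval is returned (no omission). The verifier accepts only if $\textsc{VerifyLB}$ succeeds on $\pi_{\text{lb}}^L$ and $\pi_{\text{lb}}^R$, $\textsc{VerifyMulti}$ succeeds on $\pi_{\text{multi}}$, the count check $r_R - r_L = |\pi.\text{entries}|$ passes, and the entries are strictly increasing. Suppose some accepting proof yields an entry set that differs from $T \cap [\text{first},\text{last})$. I will show this gives either a collision in $H$ or a violation of one of the soundness results already established. Throughout, all hash-based conclusions are conditioned on the absence of collisions, which fails only with probability $\mathrm{negl}(\lambda)$.

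\emph{Step 1: boundary ranks are authentic.} By Theorem~\ref{thm:lb-soundness}, the two boundary proofs authenticate $\mathrm{lb}(\text{first})$ and $\mathrm{lb}(\text{last})$ together with their authenticated search paths. The leaf counts $\mathrm{lc}$ and child indices used by $\textsc{ComputeRank}$ are bound into each node content hash (Definition~\ref{def:node-hash}), and the path reconstructs to the root $R$. Therefore, by Lemma~\ref{lem:rank} applied to the path leading to the lower-bound leaf, $r_L = |\{K \in T : K < \text{first}\}|$ and $r_R = |\{K \in T : K < \text{last}\}|$, except with negligible probability. The case $\mathrm{lb}(\cdot) = \bot$ is handled by setting the rank to $|T|$, which equals the root's total leaf count. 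It follows that $r_R - r_L = |T \cap [\text{first},\text{last})|$.

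\emph{Step 2: entries are genuine and lie in the interval.} By Theorem~\ref{thm:multi-soundness}, every returned $(K_i,V_i)$ belongs to $T$. Strict ordering rules out duplicates, so the entries form a set of $|\pi.\text{entries}|$ distinct keys of $T$. To confine them to the interval, I would use a boundary check, which Algorithm~\ref{alg:range-verify} states only implicitly. The first entry must satisfy $K_1 \geq \text{first}$, and in fact equals $\mathrm{lb}(\text{first})$; the last entry must satisfy $K_m < \text{last}$. With these constraints, the entries form a subset of $T \cap [\text{first},\text{last})$.

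\emph{Step 3: counting closes the argument.} The entries form a subset of $T \cap [\text{first},\text{last})$, and the count check shows that its size equals $|T \cap [\text{first},\text{last})|$. A subset of a finite set with the same cardinality is the whole set, which establishes completeness. The degenerate branch $\text{first} \geq \text{last}$ is immediate, since the interval is empty and the verifier demands no entries. A union bound over the at most three sub-verifications keeps the total failure probability negligible.

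\textbf{Main obstacle.} The hard part is Step 2's boundary containment, together with the claim in Step 1 that the rank is computed along the correct path. Lemma~\ref{lem:rank} is stated for the search path of a key present in the trie. For a query absent from the trie, the optimistic-search path (Lemma~\ref{lem:optimistic}) can undershoot or overshoot. I would therefore compute the rank along the authenticated path to $\mathrm{lb}(Q)$, i.e.\ the search path together with $\text{AdjPath}$, using the identity $\mathrm{rank}(Q) = \mathrm{rank}(\mathrm{lb}(Q))$. I would then argue that the structural-minimality checks ensure this path is exactly the one Lemma~\ref{lem:rank} requires. If the algorithm as written omits the entry-bound checks, I would add them as explicit conditions on acceptance.
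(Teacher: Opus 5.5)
Your proposal is correct and follows essentially the paper's approach. The boundary ranks are authenticated via Theorem~\ref{thm:lb-soundness} and Lemma~\ref{lem:rank}, the entries are made genuine via Theorem~\ref{thm:multi-soundness}, and the count check together with in-range, strictly increasing entries forces equality with $T\cap[\text{first},\text{last})$. The paper phrases the two failure events as game hops with bound $2\cdot\mathrm{Adv}^{\mathrm{lb}}_\Pi+\mathrm{Adv}^{\mathrm{multi}}_\Pi$, where you take a union bound.

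The two points you flag are ones the paper glosses over, and your fixes are the right ones. Its Game~3 asserts that the ordering check rejects out-of-range keys, which Algorithm~\ref{alg:range-verify} does not actually do, and it sums leaf counts along the optimistic-search path rather than the path to $\mathrm{lb}(Q)$. Your remaining step also needs no appeal to structural minimality. Once $K_r=\mathrm{lb}(Q)$ holds, Lemma~\ref{lem:binding} makes the hash-bound path to $K_r$ its genuine root-to-leaf path, so Lemma~\ref{lem:rank} with $\mathrm{rank}(Q)=\mathrm{rank}(\mathrm{lb}(Q))$ applies directly.
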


\begin{proof}[Proof sketch (Theorem~\ref{thm:range-soundness})]
The proof composes three security guarantees.
First, by Theorem~\ref{thm:lb-soundness} (Lower Bound Soundness), the boundary proofs $\pi_{\text{lb}}^L$ and $\pi_{\text{lb}}^R$ correctly identify $\mathrm{lb}(\text{first})$ and $\mathrm{lb}(\text{last})$.
Second, by Lemma~\ref{lem:rank} (Rank Computation), the verifier accurately computes $\mathrm{rank}(\text{first})$ and $\mathrm{rank}(\text{last})$ from these authenticated paths; since $\mathrm{rank}(\text{last}) - \mathrm{rank}(\text{first})$ equals exactly $|\{K : \text{first} \leq K < \text{last}\}|$, any count mismatch with $|\text{entries}|$ reveals an omission attack.
Third, by Theorem~\ref{thm:multi-soundness} (Multi-Proof Soundness), every entry in $\pi_{\text{multi}}$ authentically exists in the committed trie.
These guarantees ensure the returned entries are the keys in $[\text{first}, \text{last})$.
\end{proof}
\section{Evaluation}
\label{sec:evaluation}

We evaluate \mhot{} along four dimensions (formed by questions).
\textbf{Q1.} Does \mhot{} improve write throughput compared to existing authenticated data structures?
\textbf{Q2.} Does \mhot{}'s batched persistence strategy reduce write amplification (WA)?
\textbf{Q3.}  How does tree height vary across workloads, and what are the implications for proof size?
\textbf{Q4.}  Does \mhot{}'s compound node design mitigate Nurgle attacks?

\subsection{Experimental Setup}
\label{sec:eval:setup}

\heading{Implementation.}
We implement all systems in Rust with release optimizations.
RocksDB~\cite{rocksdb} serves as the underlying key-value store with a 2\,GB LRU cache.
We report the median of five independent runs; ranges in tables indicate variation across scales or configurations.

\heading{Hardware.}
We run experiments on an AWS EC2 instance with an 8-vCPU Intel Xeon Scalable processor (Sapphire Rapids, 3.2\,GHz), 64\,GB RAM, and EBS-optimized storage providing baseline 12,000 IOPS with burst capacity up to 40,000 IOPS.

\heading{Baselines.}
We compare \mhot{} against three representative systems under the benchmark setup used by \lvmt{}~\cite{lvmt}.  (i) \textit{\mpt{}} is Ethereum’s current authenticated state structure based on the Merkle Patricia Trie~\cite{ethereum}. (ii) \textit{\lvmt{}} is a layered versioned multipoint trie that leverages KZG polynomial commitments to achieve $O(1)$ root updates~\cite{lvmt}.  (iii) RainBlock adopts DSM-TREE, a distributed sharded Merkle tree design optimized for in-memory storage~\cite{rainblock}. The RainBlock-style configuration keeps the upper six levels in memory and pages deeper nodes from RocksDB, matching \lvmt{}'s layered-storage baseline~\cite{lvmt}. For \lvmt{}, we use the recommended configuration of 16 bits per level, resulting in a fanout of $2^{16}=65{,}536$ per layer. We do not evaluate its History Merkle Tree functionality, as this component is not available in the open-source implementation.

\heading{Workloads.}
We evaluate two types of workloads. (i) The \emph{synthetic} workload first populates the tree with 100k--1M entries, followed by 100 epochs, each consisting of 100{,}000 random updates. (ii) The \emph{real-world trace} workload replays Ethereum mainnet blocks 13{,}500{,}000--13{,}510{,}000, grouped into 200 epochs of 50 blocks each, in accordance with \lvmt{}'s recommended configuration.

\subsection{Write Throughput (Figure~\ref{fig:throughput})}
\label{sec:eval:throughput}

Write throughput measures how fast the commitment engine processes batch updates (\S\ref{sec:ads}). We denote \mhot{} with asynchronous flush as \mhot{}-AF (asynchronous flush).

\heading{Synthetic workloads.}
At 100k keys, \mhot{}-AF reaches 260k ops/s, outperforming \mpt{} (29k ops/s) by 9$\times$.
\lvmt{} and RainBlock reach 120k and 108k ops/s respectively---roughly half of \mhot{}'s throughput.
Synchronous-flush \mhot{} variants hit 200k ops/s, still 7$\times$ faster than \mpt{}.

As tree size grows, all systems show throughput degradation.
At 500k keys, \mhot{}-AF maintains 135k ops/s while \lvmt{} drops to 90k ops/s and RainBlock to 50k ops/s.
At 1M keys, \mhot{}-AF delivers 104k ops/s versus \mpt{}'s 17k ops/s (6$\times$), \lvmt{}'s 80k ops/s (1.3$\times$), and RainBlock's 38k ops/s (2.7$\times$).
At larger scales, \lvmt{}'s $O(1)$ root update complexity narrows the gap, though \mhot{} retains an advantage through reduced tree traversal depth.

\heading{Real-world trace.}
Under Ethereum mainnet traces, \mhot{}-AF reaches 130k ops/s, outperforming \lvmt{} (72k ops/s) by 1.8$\times$, RainBlock (55k ops/s) by 2.4$\times$, and \mpt{} (20k ops/s) by 6.5$\times$.
The real-world trace exhibits higher key locality than synthetic workloads, benefiting \mhot{}'s cache-friendly compound node layout.
Blake3 variants outperform Keccak variants by 3--5\% due to Blake3's lower computational overhead.
For Ethereum compatibility, Keccak remains the default despite this modest penalty.

\begin{figure}[t]
    \centering
    \includegraphics[width=0.63\columnwidth]{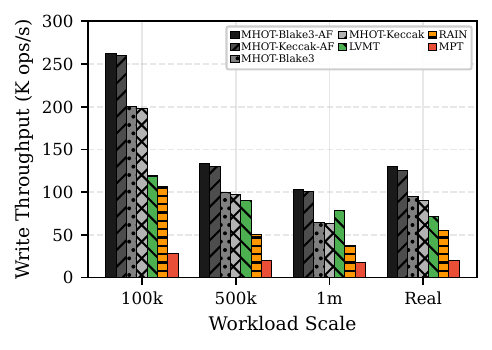}
    \caption{Write throughput. \mhot{}-AF denotes asynchronous flush; \mhot{} without suffix uses synchronous flush. \mhot{} outperforms \mpt{} by 5--9$\times$ across all configurations.} 
    \label{fig:throughput}
        \vspace{-0.1in}
\end{figure}

\subsection{Write Amplification (Figure~\ref{fig:wa})}
\label{sec:eval:wa}

\heading{Synthetic workloads.}
At 100k keys, \mhot{} records average WA of 0.9, greatly lower than \mpt{}'s 2.7, a three times reduction.
\lvmt{} records the lowest WA (0.8) due to its LSM-tree-style append-only storage. RainBlock falls to 1.45.

As tree size grows, \mpt{}'s WA rises from 2.7 at 100k to 4.8 at 1M keys, reflecting the cost of maintaining deep Merkle paths with per-epoch commits.
\mhot{} maintains WA of 0.9--1.6 across scales, a 3--3.7$\times$ reduction over \mpt{}.
\lvmt{} consistently records the lowest WA (0.8--1.0).

\heading{Real-world trace.}
Under Ethereum mainnet traces, \mhot{}-AF records WA of 1.1, versus \mpt{}'s 3.2 (2.9$\times$ reduction), RainBlock's 1.8 (1.6$\times$ reduction), and \lvmt{}'s 1.0.
Synchronous-flush \mhot{} variants show slightly higher WA (1.25) due to more frequent disk commits.

\mhot{}'s batched flush strategy introduces per-epoch variance: most epochs complete with near-zero WA, while flush epochs reach 4--8 depending on accumulated changes.
For aggregate storage efficiency (most relevant to long-running nodes), \mhot{}'s lower WA reduces total I/O over time.
\lvmt{} records 10--15\% lower WA than \mhot{}, but incurs 3--4 orders of magnitude higher verification latency (\S\ref{sec:eval:proof}).

\begin{figure}[t]
    \centering
    \includegraphics[width=0.63\columnwidth]{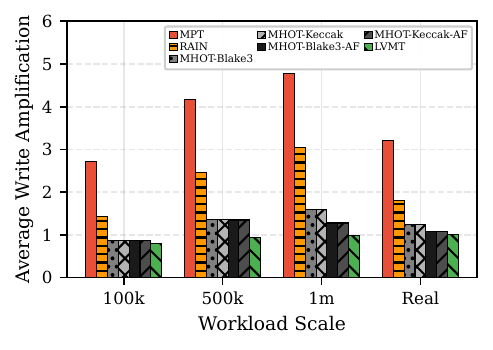}
    \caption{Average write amplification comparison. Lower is better. \lvmt{} achieves the lowest WA via append-only storage. \mhot{} reduces \mpt{}'s WA by 3$\times$ via batched flushing.}
    \label{fig:wa}
        \vspace{-0.2in}
\end{figure}

\subsection{Tree Height Analysis (Figure~\ref{fig:tree_height})}
\label{sec:eval:height}

Tree height directly impacts proof size and verification latency. Each additional level requires more node traversals and more sibling hashes in membership proofs.

\heading{Synthetic workloads.}
Under uniformly distributed keys, \lvmt{} records the lowest tree height of 2 across all scales, reflecting its $2^{16}$ fanout per level. \mhot{} maintains height 5--6, while \mpt{} and RainBlock reach 8--9. \mhot{}'s compound node design yields 35--40\% shallower trees than \mpt{}.

\heading{Real-world trace.}
The Ethereum mainnet trace reveals a limitation of fixed-span architectures. \lvmt{} maintains an average height of 2, but individual branches reach depth 9, approaching \mpt{} and RainBlock's worst-case heights of 10.
Real Ethereum addresses cluster within certain prefixes due to contract factories and sequential account creation, causing fixed 16-bit partitioning to produce unbalanced subtrees.

\mhot{} maintains stable height of 6 under real-world traces.
Its variable-span compound nodes adapt to local key density by packing discriminative bits greedily, absorbing prefix collisions without proportional height increase.
This consistency across workloads matters for blockchain deployments where key distributions vary unpredictably.

\begin{figure}[t]
    \centering
    \includegraphics[width=0.63\columnwidth]{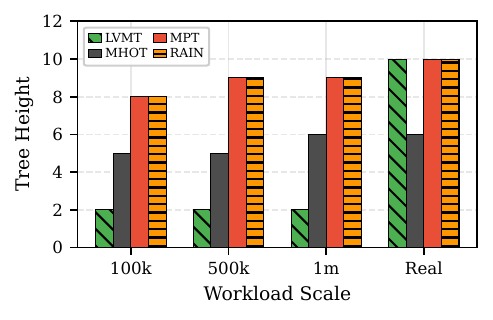}
    \caption{Tree height comparison. \lvmt{} achieves optimal height under synthetic workloads but degrades to match \mpt{} under real traces. \mhot{} maintains consistent height.}
    \label{fig:tree_height}
        \vspace{-0.1in}
\end{figure}

\subsection{Proof Size and Latency}
\label{sec:eval:proof}

We evaluate single-point membership proofs across systems (Figure~\ref{fig:single_proof} and Table~\ref{tab:verify_latency}), then examine \mhot{}'s scalability for multi-point and range proofs (Figure~\ref{fig:proof_scalability}).

\heading{Single-point proofs.}
Figure~\ref{fig:single_proof} compares proof size and prove latency across five system configurations and four tree scales (100K--1M synthetic keys plus Ethereum mainnet trace with 1.6M keys).
We evaluate two \lvmt{} sampling strategies: \emph{random} (best-case, sampling uniformly across keys) and \emph{deepest} (worst-case, targeting keys at maximum tree depth).

\mhot{}-Blake3 achieves the most compact proofs across all scales, ranging from 1,139 bytes at 100K keys to 1,423 bytes under real-world traces.
\lvmt{}-random produces larger proofs (2,227--3,011 bytes) due to KZG commitment overhead, while \lvmt{}-deepest reveals worst-case behavior with proofs reaching 4,411--19,123 bytes under real traces (13$\times$ larger than \mhot{}).
\mpt{} proofs range from 2,304 to 2,867 bytes, twice \mhot{}'s size.
\mhot{}'s proof size advantage stems from its two-layer Merkle architecture: the intra-node Merkle tree requires only $O(\log k)$ sibling hashes per node rather than $O(k)$, where $k=32$ is the maximum fanout.

For prove latency, \lvmt{}-random achieves the lowest values (5.1--7.7\,$\mu$s), while \mhot{}-Blake3 (8.9--11.9\,$\mu$s) outperforms \mhot{}-Keccak (31.5--42.5\,$\mu$s) by 3.5$\times$ due to Blake3's lower computational overhead.
\lvmt{}-deepest reaches 10.6--61.4\,$\mu$s under real traces. \mpt{} (6.2--8.9\,$\mu$s) remains stable.

Table~\ref{tab:verify_latency} presents verification latency.
Hash-based schemes (\mhot{}, \mpt{}) operate in microseconds, while \lvmt{}'s KZG polynomial commitment verification requires milliseconds due to pairing operations.
\mhot{}-Blake3 verifies proofs in 5.3\,$\mu$s, comparable to \mpt{} (10--12\,$\mu$s) while providing smaller proofs.
\lvmt{}'s verification latency of 32--147\,ms represents three to four orders of magnitude overhead compared to hash-based schemes, which is a critical trade-off for applications requiring fast verification such as light clients.

\begin{figure}[t]
    \centering
    \includegraphics[width=0.8\columnwidth]{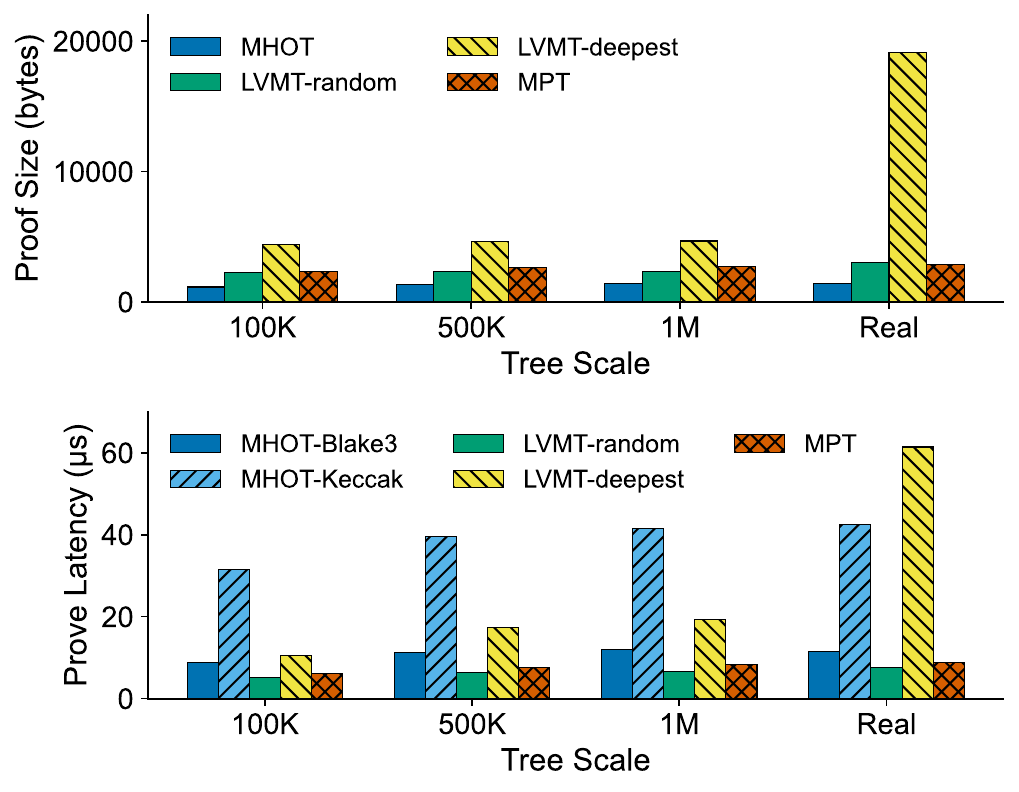}
    \caption{Single-point proof comparison across tree scales. Top panel shows proof size; bottom panel shows prove latency. \lvmt{}-random represents best-case sampling; \lvmt{}-deepest represents worst-case targeting of deep keys. } 
    \label{fig:single_proof}
\end{figure}

\begin{table}[t]
\centering
\caption{Single-point verification latency. Hash-based schemes operate in microseconds; \lvmt{}'s KZG pairing requires milliseconds.}
\label{tab:verify_latency}
\vspace{5pt}
\resizebox{\columnwidth}{!}{%
\begin{tabular}{c|cccc}
\toprule
\textbf{System} & \textbf{100K} & \textbf{500K} & \textbf{1M} & \textbf{Real} \\
\midrule
\mhot{}-Blake3 & 4.4\,$\mu$s & 5.1\,$\mu$s & 5.4\,$\mu$s & 5.3\,$\mu$s \\
\mhot{}-Keccak & 9.9\,$\mu$s & 11.2\,$\mu$s & 12.0\,$\mu$s & 11.9\,$\mu$s \\
\mpt{} & 10.2\,$\mu$s & 11.5\,$\mu$s & 12.3\,$\mu$s & 12.3\,$\mu$s \\
\midrule
\lvmt{} & 33--47\,ms & 32--49\,ms & 32--51\,ms & 38--147\,ms \\
\bottomrule
\end{tabular}%
}
\end{table}

\heading{Multi-point and range proofs.}
Figure~\ref{fig:proof_scalability} shows proof size and latency scaling with batch and range size under Ethereum mainnet traces (1.6M keys).
Multi-point proofs grow from 1.4\,KB (single key) to 92.6\,KB (1000 keys), while range proofs grow from 1.4\,KB to 95.7\,KB.
For latency, multi-point prove time scales from 12\,$\mu$s to 2.3\,ms, while verify time scales from 5.5\,$\mu$s to 4.9\,ms.
Range proofs exhibit similar scaling, with prove latency reaching 2.8\,ms and verify latency reaching 5.1\,ms at 1000 entries.
The rank-based completeness verification (\S\ref{sec:proofs:range}) adds minimal overhead, as rank computation requires only summation over pre-computed leaf counts.

\begin{figure}[t]
    \centering
    \includegraphics[width=\columnwidth]{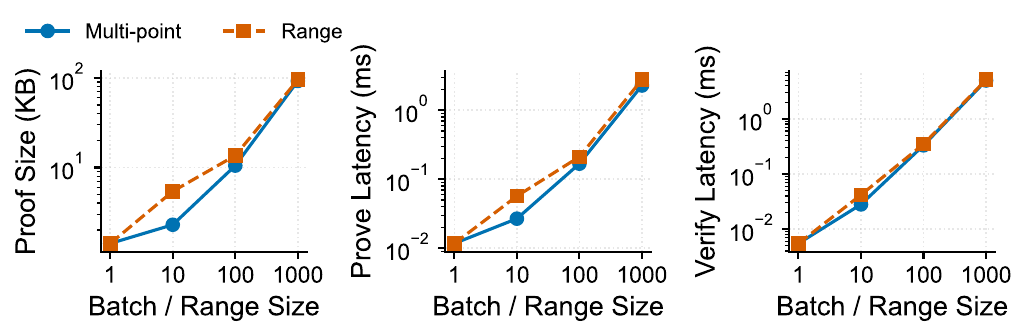}
    \vspace{-0.2in}
    \caption{Multi-point and range proof scalability under Ethereum mainnet traces (Blake3). Proof size and latency scale linearly with batch size. Verify latency dominates at larger batch sizes due to hash recomputation.}
    \label{fig:proof_scalability}
        \vspace{-0.2in}
\end{figure}

\subsection{Nurgle Attack Resistance}
\label{sec:eval:security}

The Nurgle attack~\cite{nurgle} exploits prefix collisions to inflate tree depth for targeted keys, increasing their proof costs.
We evaluate resistance under the threat model where an adversary controls 52 prefix bits (matching the original Nurgle analysis) and commands an entire block's gas budget (300M gas, approximately 15,000 insertions).

\heading{Experimental setup.}
We sample 10,000 random keys as attack targets.
For each target, the attacker generates collision keys matching the target's 52-bit prefix and inserts them until either the target's depth increases or the block's gas budget is exhausted.
For \mpt{}, we employ a round-robin strategy distributing insertions across all targets to maximize attack coverage.
For \lvmt{}, we track the depth distribution of both original and attack keys, since its fixed-depth-at-insertion property prevents existing keys from being pushed deeper.

\heading{Attack results} (Table~\ref{tab:nurgle}). 
\mpt{} proves highly vulnerable: 99.97\% of sampled keys experienced depth increases, with average depth rising from 6.88 to 8.91 (+2.03 levels) and maximum depth increasing from 9 to 12.
In contrast, \mhot{} exhibits strong resistance: zero successful attacks across all 10,000 targets, even after exhausting an entire block's gas budget per target.
\mhot{}'s compound nodes absorb prefix collisions through internal restructuring (Leaf Pushdown, Parent Pull-Up) without propagating depth increases.

\begin{table}[t]
\centering
\small
\caption{Nurgle attack on Ethereum mainnet.}
\label{tab:nurgle}
\vspace{5pt}
\begin{tabular}{c|ccc}
\toprule
\textbf{System} & \textbf{Success Rate} & \textbf{Depth $\Delta$ (avg)} & \textbf{Depth $\Delta$ (max)} \\
\midrule
\mpt{} & 99.97\% & +2.03 & +3 \\
\mhot{} & \textbf{0\%} & \textbf{0} & \textbf{0} \\
\bottomrule
\end{tabular}
\end{table}

\heading{LVMT prefix pollution.}
\lvmt{} presents a different security model: once inserted, a key's depth is fixed and cannot be increased by subsequent insertions.
However, attackers can still pollute prefix regions by inserting keys that occupy slots at shallower levels, forcing future keys into deeper levels.

Figure~\ref{fig:lvmt_pollution} illustrates this effect. Before the attack, 82.3\% of keys reside at level~0 and 17.4\% at level~1 (average depth 0.18). After inserting 15{,}000 attack keys targeting specific prefixes, 99.9\% of attack keys are placed at level~3, far deeper than legitimate keys. Although existing keys remain unaffected, the polluted prefix regions force any future keys sharing these prefixes to be placed at level~3 or deeper. This behavior constitutes a degradation-of-service attack against future users whose addresses collide with the targeted prefixes.

\begin{figure}[t]
    \centering
    \includegraphics[width=0.7\columnwidth]{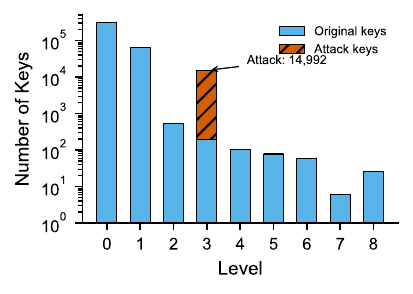}
    \vspace{-0.2in}
    \caption{\lvmt{} key distribution by level before and after Nurgle attack. Attack keys (hatched) concentrate at level 3, polluting targeted prefix regions for future insertions.}
    \label{fig:lvmt_pollution}
\end{figure}

\heading{Key experimental findings.}
\label{sec:eval:findings}
Our experiments yield four key findings.
\mhot{} achieves 5--9$\times$ higher write throughput than \mpt{}, reaching up to 260k ops/s at 100k keys and 100k ops/s at 1M keys (\textbf{Q1}).
This gain is partly due to reduced write amplification: \mhot{} achieves average WA of 0.9--1.6, a 3--4$\times$ reduction compared to \mpt{} (\textbf{Q2}).
\mhot{} also maintains a stable tree height of 5--6 levels across all workloads, 35--40\% shallower than \mpt{}, resulting in 50\% smaller membership proofs (1.1--1.4\,KB vs. 2.3--2.9\,KB) with comparable verification latency (\textbf{Q3}).
Finally, under the Nurgle threat model with a 15,000-insertion budget, none of the 10,000 sampled keys experienced depth increases (\textbf{Q4}).

\section{Discussion}
\label{sec:discussion}

We discuss two aspects as below. 

\subsection{Scope and Complementary Mitigations}
\label{sec:scope}

\heading{Our scope.} Our evaluation targets the state commitment bottleneck in modern clients. Integrating \mhot{} into full execution pipelines would validate its end-to-end impact on block processing.
\mhot{} operates at the ADS layer beneath Ethereum's account model~\cite{ethereum}, treating account fields as opaque key-value pairs. It therefore remains compatible with state-clearing semantics~\cite{eip161}, proof interfaces~\cite{eip1186}, and gas accounting standards~\cite{eip2929}. Deployment can follow an incremental migration path where new commits use \mhot{} while historical proofs remain verifiable against archived \mpt{} roots.

Reduced commitment latency may also benefit block validation throughput, though quantifying this effect requires end-to-end pipeline evaluation.

\heading{Complementary mitigation strategies.}
In-memory layering caches reused upper trie levels to reduce authenticated-state cost without changing the local tree. LMPTs~\cite{lmpts} keep recent-update tries in memory and the snapshot trie on disk; our RainBlock-style baseline keeps the upper six levels in memory and loads deeper nodes from RocksDB. This helps the normal case where updates reuse upper ancestors, but it does not close the Nurgle path. Adversarial keys share long prefixes and grow paths below the cached levels (a single RTX3080 GPU can manipulate the first 15 \mpt{} layers by colliding 13 nibbles~\cite{nurgle}), and the tree still routes by fixed prefixes regardless of how deep the cache extends.

Systemic mitigations reorganize state above the local tree. Chainspace~\cite{chainspace} splits state across shards and certifies shard-local commitments through quorum signatures, while RainBlock~\cite{rainblock} distributes state in DSM-TREE shards and offloads disk access to storage nodes. These designs expose parallelism above the tree at the cost of data placement, network coordination, quorum certification, and cross-shard access, and compose with \mhot{}, since each shard or checkpoint still requires a local authenticated structure that remains exposed to depth inflation when prefix-based. \mhot{} fills that local structural role inside layered, partitioned, or checkpointed deployments.

\subsection{Future Directions}
\label{sec:future}

\heading{Hardware acceleration.} Increasing the span to $k=64$~\cite{hot} exploits AVX-512 instructions on modern processors, further reducing tree height. For spans of $k \ge 256$, scalar implementations suffice because persistent storage I/O dominates latency. Alternatively, algebraic commitments such as vector commitments in AMT~\cite{lvmt} trade setup transparency for reduced verification overhead. \mhot{}'s deferred hashing exposes parallelism well-suited to GPU acceleration, where nodes at the same height level are mutually independent and map efficiently onto GPU SIMD units~\cite{deng2024gpu}.

\heading{Storage optimization.} Integrating HOT's leaf-optimized node layout would substantially reduce node sizes through dense-region collapsing and variable-length delta encoding~\cite{hot}, though deterministic Merkle hash computation must remain tractable.

\heading{State pruning.} Content-addressable, copy-on-write storage creates new node versions on every modification. Version-prefixed database keys allow efficient range deletion of nodes older than a retention threshold, and the underlying LSM-tree engine's compaction naturally discards tombstoned entries. A production deployment would benefit from a configurable pruning policy that balances historical state availability against storage growth.

\heading{Concurrency.} HOT's copy-on-write semantics with wait-free readers~\cite{hot} provide a foundation for concurrent access. \mhot{}'s immutable nodes already support concurrent reads without synchronization. However, coordinating parallel writers remains an open problem.

\heading{Structural attack analysis.}
\mhot{} removes Nurgle's prefix-collision attack through discriminative-bit indexing, but its structure may open new attack vectors. An adversary could craft keys that concentrate discriminative-bit conflicts within certain subtrees, forcing repeated node splits such as leaf pushdown, parent pull-up, and intermediate node creation. Following Nurgle's analytical approach, a systematic study would estimate the effort to force such splits, measure the resulting path-length increase per malicious insertion, and evaluate whether these effects produce economic imbalance under the gas pricing model~\cite{eip2929}.
We leave tight bounds on adversarial path inflation to future work.

\section{Related Work}
\label{sec:related}

\heading{Protocol-level updates.}
Ethereum’s roadmap explores replacing \mpt{} with alternative state commitment structures.
Verkle Trees~\cite{verkle}, proposed in EIP-6800~\cite{eip6800}, adopt vector commitments to achieve constant-size proofs per tree level, enabling stateless client verification. This design relies on a trusted setup; while multi-party computation ceremonies distribute trust, compromise remains possible.
As an alternative, EIP-7864~\cite{eip7864} proposes a binary Merkle tree tailored for SNARK-based proof generation, partly motivated by post-quantum concerns of pairing-based schemes~\cite{postquantum-verkle}. Both approaches require hard forks and incur nontrivial deployment costs, either through additional trust assumptions or substantial proof generation overhead.

In contrast, \mhot{} demonstrates that performance and robustness gains remain achievable within the existing hash-based commitment model. More importantly, \mhot{} preserves the simplicity of Merkle authentication without introducing algebraic verification overhead for deployment.

\heading{Vector commitments.}
VC literature addresses a different layer, asking how to commit to a value set and open positions with short proofs rather than how state is organized, indexed, or traversed; the two layers compose. When paired with a trie, as in Verkle and \lvmt{} (\S\ref{sec:span-proof}), a VC decouples proof size from fanout cryptographically, while \mhot{} achieves the same decoupling structurally via discriminative-bit indexing and hierarchical proofs, a different design point, not a competing one. Pure VC constructions without a tree, such as Aardvark~\cite{aardvark} (bucketed dictionary), KVaC~\cite{kvac} (flat key-value commitment), and EDRAX~\cite{edrax} (indexed authenticated array), target different objectives and do not map onto the axes of Table~\ref{tab:comparison}. Proof-aggregation techniques (Pointproofs~\cite{pointproofs}, Hyperproofs~\cite{hyperproofs}, Cauchyproofs~\cite{cauchyproofs}, aSVC~\cite{asvc}) are orthogonal to the base structure; foundational results~\cite{catalano-fiore13,campanelli2020,cfgg22} establish the VC primitive, incremental aggregation, and impossibility bounds. The lack of native range proofs in \lvmt{} and Verkle (Table~\ref{tab:comparison}) reflects engineering status, not a fundamental VC limitation; interval certification needs extra construction that hash-based Merkle proofs avoid.

\heading{State storage optimization.}
Beyond modifications to the authenticated tree structure itself, prior work has explored workload-aware optimizations for blockchain state storage. Adaptive tree restructuring~\cite{kuznetsov2024} dynamically promotes frequently accessed nodes toward the root, reducing average access latency for hot state. Hot–cold data separation schemes~\cite{feng2025} migrate infrequently accessed state to lower-cost storage tiers while preserving fast access to active accounts. These approaches primarily optimize for access frequency and locality, rather than the structural properties of the key space.

In contrast, \mhot{}’s height optimization targets worst-case structural depth induced by key relationships, independent of workload skew. The two strategies are orthogonal and can be naturally composed.

\heading{Storage architecture.}
Blockchain-aware storage engines mitigate I/O amplification by exploiting blockchain-specific access patterns, primarily through two complementary strategies.
The first leverages key structure and layout. ChainKV~\cite{chainkv} separates state from non-state data using Prefix-\mpt{} to improve key locality, while Block-LSM~\cite{blocklsm} prefixes keys with block numbers to cluster same-block writes and reduce compaction overhead. The second strategy amortizes commitment cost over time. LETUS~\cite{letus} employs log-structured delta encoding across blocks, and COLE~\cite{cole} applies learned indexes to optimize read-heavy workloads.

Modern production clients such as Erigon~\cite{erigon} further adopt flat database designs that decouple state access from commitment computation, effectively treating the authenticated data structure as a dedicated commitment engine.
\mhot{} follows this decoupling principle and focuses optimization squarely on the commitment bottleneck itself.

\heading{Acceleration techniques.}
Hardware acceleration complements algorithmic improvements. Deng et al.~\cite{deng2024gpu} parallelize \mpt{} hash computation on GPUs via PhaseNU and LockNU algorithms, addressing node-splitting conflicts during concurrent updates. This achieves substantial throughput gains for commitment-intensive workloads. Combining GPU parallelism with height-optimized structures remains unexplored.

\mhot{}'s deferred hashing exposes parallelism suited for such acceleration: nodes at the same height level share no data dependencies during hash computation.

\heading{Nurgle mitigation.}
The attack~\cite{nurgle} exploits the predictable structure of \mpt{} to inflate tree depth via adversarial key selection. Current mitigation efforts focus on economic disincentives or data pruning rather than structural defenses.
EIP-4444~\cite{eip4444} enables historical data pruning but leaves the current state structure unchanged. EIP-4762~\cite{eip4762} proposes gas repricing for witness costs but remains in draft, facing determinism challenges. Verkle migration~\cite{verkle,eip6800} would alter the attack surface but requires ongoing protocol changes.

\mhot{} provides an immediate structural defense.
Even when an adversary controls an entire block's gas budget, \mhot{} achieves zero successful depth increases, compared to 99.97\% attack success rate against \mpt{}.
The compound node design absorbs prefix collisions through internal restructuring without propagating depth increases to existing keys.

\begin{algorithm}[H]
\caption{Compact Multiproof Generation~\cite{merkle-multiproof}}
\label{alg:multiproof-gen}
\linespread{0.98}\footnotesize
\begin{algorithmic}[1]
\Require leaves $L$, indices $I$ to prove
\Ensure Compact multiproof $\pi$
\State $\text{depth} \gets \lceil \log_2 |L| \rceil$
\State $\text{known} \gets \{(\text{depth}, i) : i \in I\}$
\State $\text{proof\_hashes} \gets \langle\rangle$
\For{$\ell \gets \text{depth}-1$ \textbf{downto} $0$}
    \For{$i \gets 0$ \textbf{to} $2^\ell - 1$}
        \State $\text{left} \gets (\ell+1, 2i) \in \text{known}$
        \State $\text{right} \gets (\ell+1, 2i+1) \in \text{known}$
        \If{left \textbf{and} right}
            \State $\text{known}.\textsc{Insert}((\ell, i))$
        \ElsIf{left \textbf{or} right}
            \State $s \gets 2i+1$ \textbf{if} left \textbf{else} $2i$
            \State $\text{proof\_hashes}.\textsc{Push}(\textsc{Hash}(\ell+1, s))$
            \State $\text{known}.\textsc{Insert}((\ell, i))$
        \EndIf
    \EndFor
\EndFor
\State \Return $(I, \text{proof\_hashes}, \text{depth})$
\end{algorithmic}
\end{algorithm}

\section{Conclusion}
\label{sec:conclusion}
We presented \mhot{}, a height-optimal authenticated data structure for blockchain state commitment. By adapting height-optimized tries to persistent storage, \mhot{} achieves substantially higher throughput, lower write amplification, and smaller proofs than \mpt{}, without relying on trusted setup or specialized cryptography. \mhot{} structurally mitigates Nurgle attacks, maintaining zero successful depth increases even under worst-case adversarial conditions. Our results show that careful data-structure design can fundamentally improve the scalability and robustness of blockchain state commitment.

\appendix

\bibliographystyle{plainurl}
\bibliography{bib}

@article{hot,
  title={Height Optimized Tries},
  author={Robert Binna and Eva Zangerle and Martin Pichl and G{\"u}nther Specht and Viktor Leis},
  journal={ACM Transactions on Database Systems (TODS)},
  year={2022},
  volume={47},
  pages={1 - 46}
}

@inproceedings{nurgle,
  title={Nurgle: Exacerbating Resource Consumption in Blockchain State Storage via MPT Manipulation},
  author={Zheyuan He and Zihao Li and Ao Qiao and Xiapu Luo and Xiaosong Zhang and Ting Chen and Shuwei Song and Dijun Liu and Weina Niu},
  booktitle={IEEE Symposium on Security and Privacy (S\&P)},
  year={2024},
  pages={2180-2197}
}

@article{lvmt,
  author = {Li, Chenxing and Beillahi, Sidi Mohamed and Yang, Guang and Wu, Ming and Xu, Wei and Long, Fan},
  title = {LVMT: An Efficient Authenticated Storage for Blockchain},
  year = {2024},
  issue_date = {August 2024},
  publisher = {Association for Computing Machinery},
  address = {New York, NY, USA},
  volume = {20},
  number = {3},
  issn = {1553-3077},
  doi = {10.1145/3664818},
  journal = {ACM Transactions on Storage},
  month = jun,
  articleno = {17},
  numpages = {34}
}

@inproceedings{verkle,
  title={Verkle Trees},
  author={John Kuszmaul},
  year={2019},
  note={Available at: \url{https://math.mit.edu/research/highschool/primes/materials/2018/Kuszmaul.pdf}}
}

@misc{ethereum,
  title={Ethereum: A Secure Decentralised Generalised Transaction Ledger},
  author={Gavin Wood},
  year={2014},
  howpublished={Ethereum Yellow Paper},
  note={Available at: \url{https://ethereum.github.io/yellowpaper/paper.pdf}}
}

@article{chainkv,
  title={ChainKV: A Semantics-Aware Key-Value Store for Ethereum System},
  author={Zehao Chen and Bingzhe Li and Xiaojun Cai and Zhiping Jia and Lei Ju and Zili Shao and Zhaoyan Shen},
  journal={Proceedings of the ACM on Management of Data (SIGMOD)},
  year={2023},
}

@inproceedings{rainblock,
  title={RainBlock: Faster Transaction Processing in Public Blockchains},
  author={Soujanya Ponnapalli and Aashaka Shah and Amy Tai and Souvik Banerjee and Vijay Chidambaram and Dahlia Malkhi and Michael Yung Chung Wei},
  booktitle={USENIX Annual Technical Conference (USENIX ATC)},
  year={2020}
}

@inproceedings{lmpts,
  title={LMPTs: Eliminating Storage Bottlenecks for Processing Blockchain Transactions},
  author={Jemin Andrew Choi and Sidi Mohamed Beillahi and Peilun Li and Andreas G. Veneris and Fan Long},
  booktitle={IEEE International Conference on Blockchain and Cryptocurrency (ICBC)},
  year={2022},
  pages={1-9}
}

@misc{erigon,
  title = {Erigon: Ethereum Implementation on the Efficiency Frontier},
  author = {{Erigon Team}},
  howpublished = {GitHub Repository},
  year = {2024},
  note = {Available at: \url{https://github.com/erigontech/erigon}}
}

@misc{reth,
  title = {Reth: Modular, Contributor-Friendly and Blazing-Fast Implementation of the Ethereum Protocol in Rust},
  author = {{Paradigm}},
  howpublished = {GitHub Repository},
  year = {2024},
  note = {Available at: \url{https://github.com/paradigmxyz/reth}}
}

@article{blocklsm,
  title={A Semantic-Integrated LSM-Tree-Based Key-Value Storage Engine for Blockchain Systems},
  author={Qian Wei and Zehao Chen and Xiaowei Chen and Yuhao Zhang and Xiaojun Cai and Zhiping Jia and Zhaoyan Shen and Yi Wang and Zili Shao and Bingzhe Li},
  journal={IEEE Transactions on Computer-Aided Design of Integrated Circuits and Systems (TCAD)},
  year={2024},
}

@inproceedings{letus,
  title={LETUS: A Log-Structured Efficient Trusted Universal BlockChain Storage},
  author={Shikun Tian and Zhonghao Lu and Haizhen Zhuo and Xiaojing Tang and Peiyi Hong and Shenglong Chen and Dayi Yang and Ying Yan and Zhiyong Jiang and Hui Zhang and Guofei Jiang},
  booktitle={Companion of the 2024 International Conference on Management of Data (SIGMOD)},
  year={2024}
}

@inproceedings{cole,
  title={COLE: A Column-based Learned Storage for Blockchain Systems},
  author={Ce Zhang and Cheng Xu and Haibo Hu and Jianliang Xu},
  booktitle={USENIX Conference on File and Storage Technologies (FAST)},
  year={2023}
}

@inproceedings{kzg,
  title={Constant-Size Commitments to Polynomials and Their Applications},
  author={Aniket Kate and Gregory M. Zaverucha and Ian Goldberg},
  booktitle={International Conference on the Theory and Application of Cryptology and Information Security (ASIACRYPT)},
  year={2010}
}

@inproceedings{amt,
  title={Towards Scalable Threshold Cryptosystems},
  author={Alin Tomescu and Robert Chen and Yiming Zheng and Ittai Abraham and Benny Pinkas and Guy Golan-Gueta and Srinivas Devadas},
  booktitle={IEEE Symposium on Security and Privacy (S\&P)},
  year={2020},
  pages={877-893}
}

@article{ads,
  title={A General Model for Authenticated Data Structures},
  author={Charles U. Martel and Glen Nuckolls and Premkumar T. Devanbu and Michael Gertz and April Kwong and Stuart G. Stubblebine},
  journal={Algorithmica},
  year={2004},
  volume={39},
  pages={21-41}
}

@misc{eip150,
  title = {{EIP}-150: Gas Cost Changes for IO-Heavy Operations},
  author = {Vitalik Buterin},
  year = {2016},
  month = oct,
  howpublished = {Ethereum Improvement Proposals},
  note = {\url{https://eips.ethereum.org/EIPS/eip-150}}
}

@misc{eip161,
  title = {{EIP}-161: State Trie Clearing (Invariant-Preserving Alternative)},
  author = {Gavin Wood},
  year = {2016},
  month = oct,
  howpublished = {Ethereum Improvement Proposals},
  note = {\url{https://eips.ethereum.org/EIPS/eip-161}}
}

@misc{eip1186,
  title = {{EIP}-1186: RPC-Method to Get Merkle Proofs - eth\_getProof},
  author = {Simon Jentzsch and Christoph Jentzsch},
  year = {2018},
  month = jun,
  howpublished = {Ethereum Improvement Proposals},
  note = {\url{https://eips.ethereum.org/EIPS/eip-1186}}
}

@misc{eip2929,
  title = {{EIP}-2929: Gas Cost Increases for State Access Opcodes},
  author = {Vitalik Buterin and Martin Swende},
  year = {2020},
  month = sep,
  howpublished = {Ethereum Improvement Proposals},
  note = {\url{https://eips.ethereum.org/EIPS/eip-2929}}
}

@misc{eip3102,
  title = {{EIP}-3102: Binary Trie Structure},
  author = {Guillaume Ballet and Vitalik Buterin},
  year = {2020},
  month = sep,
  howpublished = {Ethereum Improvement Proposals},
  note = {\url{https://eips.ethereum.org/EIPS/eip-3102}}
}

@misc{eip4444,
  title = {{EIP}-4444: Bound Historical Data in Execution Clients},
  author = {George Kadianakis and lightclient and Alex Stokes},
  year = {2021},
  month = nov,
  howpublished = {Ethereum Improvement Proposals},
  note = {\url{https://eips.ethereum.org/EIPS/eip-4444}}
}

@misc{eip4762,
  title = {{EIP}-4762: Statelessness Gas Cost Changes},
  author = {Guillaume Ballet and Vitalik Buterin and Dankrad Feist and Ignacio Hagopian and Tanishq Jasoria and Gajinder Singh},
  year = {2022},
  month = feb,
  howpublished = {Ethereum Improvement Proposals},
  note = {\url{https://eips.ethereum.org/EIPS/eip-4762}}
}

@misc{eip6800,
  title = {{EIP}-6800: Ethereum State Using a Unified Verkle Tree},
  author = {Vitalik Buterin and Dankrad Feist and Kevaundray Wedderburn and Guillaume Ballet and Piper Merriam and Gottfried Herold and Ignacio Hagopian and Tanishq Jasoria and Gajinder Singh and Danno Ferrin},
  year = {2023},
  month = mar,
  howpublished = {Ethereum Improvement Proposals},
  note = {\url{https://eips.ethereum.org/EIPS/eip-6800}}
}

@misc{eip7864,
  title = {{EIP}-7864: Ethereum State Using a Unified Binary Tree},
  author = {Vitalik Buterin and Guillaume Ballet and Dankrad Feist and Ignacio Hagopian and Kevaundray Wedderburn and Tanishq Jasoria and Gajinder Singh and Danno Ferrin and Piper Merriam and Gottfried Herold},
  year = {2025},
  month = jan,
  howpublished = {Ethereum Improvement Proposals},
  note = {\url{https://eips.ethereum.org/EIPS/eip-7864}}
}

@inproceedings{occda,
  title={Utilizing Parallelism in Smart Contracts on Decentralized Blockchains by Taming Application-Inherent Conflicts},
  author={Garamv{\"o}lgyi, P{\'e}ter and Liu, Yuxi and Zhou, Dong and Long, Fan and Wu, Ming},
  booktitle={International Conference on Software Engineering (ICSE)},
  pages={2315--2326},
  year={2022}
}

@inproceedings{blockstm,
  title={Block-STM: Scaling Blockchain Execution by Turning Ordering Curse to a Performance Blessing},
  author={Gelashvili, Rati and Spiegelman, Alexander and Xiang, Zhuolun and Danezis, George and Li, Zekun and Malkhi, Dahlia and Xia, Yu and Zhou, Runtian},
  booktitle={ACM SIGPLAN Annual Symposium on Principles and Practice of Parallel Programming (PPoPP)},
  pages={232--244},
  year={2023}
}

@article{schain,
  title={SChain: Scalable Concurrency over Flexible Permissioned Blockchain},
  author={Xiaodong Qi and Zhihao Chen and Haizhen Zhuo and Quanqing Xu and Chengyu Zhu and Zhao Zhang and Cheqing Jin and Aoying Zhou and Ying Yan and Hui Zhang},
  journal={IEEE International Conference on Data Engineering (ICDE)},
  year={2023},
  pages={1901-1913},
}

@inproceedings{parallelevm,
  title={ParallelEVM: Operation-Level Concurrent Transaction Execution for EVM-Compatible Blockchains},
  author={Haoran Lin and Hang Feng and Yajin Zhou and Lei Wu},
  booktitle={Proceedings of the European Conference on Computer Systems (EuroSys)},
  year={2025}
}

@inproceedings{forerunner,
  title     = {Forerunner: Constraint-Based Speculative Transaction Execution for Ethereum},
  author    = {Chen, Yang and Guo, Zhongxin and Li, Runhuai and Chen, Shuo and Zhou, Lidong and Zhou, Yajin and Zhang, Xian},
  booktitle = {ACM SIGOPS Symposium on Operating Systems Principles (SOSP)},
  pages     = {570--587},
  year      = {2021}
}

@article{seer,
  title     = {Seer: Accelerating Blockchain Transaction Execution by Fine-Grained Branch Prediction},
  author    = {Zhang, Shijie and Cheng, Ru and Liu, Xinpeng and Xiao, Jiang and Jin, Hai and Li, Bo},
  journal   = {Proceedings of the VLDB Endowment (VLDB)},
  volume    = {18},
  number    = {3},
  pages     = {822--835},
  year      = {2024}
}

@inproceedings{mtpu,
  title   = {An Algorithm and Architecture Co-design for Accelerating Smart Contracts in Blockchain},
  author  = {Rui Pan and Chubo Liu and Guoqing Xiao and Mingxing Duan and Keqin Li and Kenli Li},
  booktitle = {Annual International Symposium on Computer Architecture (ISCA)},
  year    = {2023}
}

@article{dtvm,
  title   = {{DTVM}: Revolutionizing Smart Contract Execution with Determinism and Compatibility},
  author  = {Wei Zhou and Changzheng Wei and Ying Yan and Wei Tang and others},
  journal = {arXiv preprint arXiv:2504.16552},
  year    = {2025},
  note    = {Available at: \url{https://arxiv.org/abs/2504.16552}}
}

@inproceedings{superinstruction,
  title={Synthesizing Efficient Super-Instruction Sets for Ethereum Virtual Machine},
  author={Xiaowen Hu and David Zhao and Bernhard Scholz},
  booktitle={ACM SIGPLAN International Workshop on Virtual Machines and Intermediate Languages (VMIL)},
  year={2024}
}

@article{deng2024gpu,
  title={Accelerating Merkle Patricia Trie with GPU},
  author={Yangshen Deng and Muxi Yan and Bo Tang},
  journal={Proceedings of the VLDB Endowment (VLDB)},
  year={2024},
  volume={17},
  pages={1856-1869}
}

@misc{geth-parallel-node-fetching,
  author       = {{Geth Team}},
  title        = {Parallel Intermediate Node Fetching (for a single trie)},
  howpublished = {Go Ethereum Issue \#28266},
  year         = {2023},
  note         = {Available at: \url{https://github.com/ethereum/go-ethereum/issues/28266}. Accessed: 2026-05-18}
}

@article{splitDB,
  title={SplitDB: Closing the Performance Gap for LSM-Tree-Based Key-Value Stores},
  author={Miao Cai and Xuzhen Jiang and Junru Shen and Baoliu Ye},
  journal={IEEE Transactions on Computers (TC)},
  year={2024},
  volume={73},
  pages={206-220}
}

@article{solsDB,
  title={SolsDB: Solve the Ethereum's Bottleneck Caused by Storage Engine},
  author={Cuihua Yang and Fan Yang and Quanqing Xu and Yongquan Zhang and Junqing Liang},
  journal={Future Generation Computer Systems (FGCS)},
  year={2024},
  volume={160},
  pages={295-304}
}

@article{kuznetsov2024,
  title={Adaptive Restructuring of Merkle and Verkle Trees for Enhanced Blockchain Scalability},
  author={Oleksandr Kuznetsov and Dzianis Kanonik and Alex Rusnak and Anton Yezhov and Oleksandr Domin},
  journal={Internet of Things},
  year={2024},
  volume={27},
  pages={101315}
}

@article{feng2025,
  title={An Efficient Hot/Cold Data Separation Scheme for Storage Optimization in Consortium Blockchain Full Nodes},
  author={Libo Feng and Xian Deng},
  journal={Cluster Computing},
  year={2025},
  volume={28}
}

@inproceedings{postquantum-verkle,
  title={Functional Commitments for All Functions, with Transparent Setup},
  author={Leo de Castro and Chris Peikert},
  booktitle={Annual International Conference on the Theory and Applications of Cryptographic Techniques (EUROCRYPT)},
  year={2023}
}

@article{lsmtree,
  title={The Log-Structured Merge-Tree (LSM-tree)},
  author={O'Neil, Patrick and Cheng, Edward and Gawlick, Dieter and O'Neil, Elizabeth},
  journal={Acta Informatica},
  volume={33},
  number={4},
  pages={351--385},
  year={1996}
}

@article{smhp,
  title={Partitioning of Trees for Minimizing Height and Cardinality},
  author={Andr{\'a}s Kov{\'a}cs and Tam{\'a}s Kis},
  journal={Information Processing Letters},
  year={2004},
  volume={89},
  pages={181-185}
}

@inproceedings{rocksdb,
  author    = {Siying Dong and Andrew Kryder and Yanqin Jin and Lin Peng and Kanchan Mehra and Jeremy Yakdus and Wei-Nee Chen and Abhishek Sharma and Youngjin Kwon and Gary J. Katz},
  title     = {{RocksDB}: Evolution of Development, Optimization and Uses of LSM-based Storage},
  booktitle = {Proceedings of the 8th Biennial Conference on Innovative Data Systems Research (CIDR)},
  year      = {2017}
}

@misc{merkle-multiproof,
  title = {Merkle Multi-Proofs},
  author = {Remco Bloemen},
  year = {2025},
  howpublished = {Technical Note},
  note = {Available at: \url{https://xn--2-umb.com/25/merkle-multi-proof/}}
}

@article{shoup2004sequences,
    title={Sequences of games: a tool for taming complexity in security proofs},
    author={Shoup, Victor},
    journal={IACR Cryptol. ePrint Arch.},
    volume={2004},
    pages={332},
    year={2004}
}

@misc{binary,
  title = {{EIP}-7864: Ethereum State Using a Unified Binary Tree},
  author = {Vitalik Buterin and Guillaume Ballet and Dankrad Feist and Ignacio Hagopian and Kevaundray Wedderburn and Tanishq Jasoria and Gajinder Singh and Danno Ferrin and Piper Merriam and Gottfried Herold},
  year = {2025},
  month = jan,
  howpublished = {Ethereum Improvement Proposals},
  note = {\url{https://eips-wg.github.io/EIPs/7864/}}
}

@inproceedings{merkle,
  title={A Digital Signature Based on a Conventional Encryption Function},
  author={Ralph C. Merkle},
  booktitle={Annual International Cryptology Conference},
  year={1987},
  url={https://api.semanticscholar.org/CorpusID:28484604}
}

@misc{devp2p,
	title = {devp2p/caps/snap.md at master · ethereum/devp2p},
	url = {https://github.com/ethereum/devp2p/blob/master/caps/snap.md},
	urldate = {2026-02-05},
}

@inproceedings{chainspace,
  title = {Chainspace: A Sharded Smart Contracts Platform},
  author = {Mustafa Al-Bassam and Alberto Sonnino and Shehar Bano and Dave Hrycyszyn and George Danezis},
  booktitle = {Network and Distributed System Security Symposium (NDSS)},
  year = {2018},
  doi = {10.14722/ndss.2018.23241}
}

@inproceedings{rangeproofimportant,
  author       = {Enrique Fynn and
                  Ethan Buchman and
                  Zarko Milosevic and
                  Robert Soul{\'{e}} and
                  Fernando Pedone},
  editor       = {Eshcar Hillel and
                  Roberto Palmieri and
                  Etienne Rivi{\`{e}}re},
  title        = {Robust and Fast Blockchain State Synchronization},
  booktitle    = {26th International Conference on Principles of Distributed Systems,
                  {OPODIS} 2022, Brussels, Belgium, December 13-15, 2022},
  series       = {LIPIcs},
  volume       = {253},
  pages        = {8:1--8:22},
  publisher    = {Schloss Dagstuhl - Leibniz-Zentrum f{\"{u}}r Informatik},
  year         = {2022},
  url          = {https://doi.org/10.4230/LIPIcs.OPODIS.2022.8},
  doi          = {10.4230/LIPICS.OPODIS.2022.8},
  bibsource    = {dblp computer science bibliography, https://dblp.org}
}

@inproceedings{catalano-fiore13,
  title={Vector Commitments and Their Applications},
  author={Dario Catalano and Dario Fiore},
  booktitle={International Conference on Practice and Theory in Public-Key Cryptography (PKC)},
  series={LNCS},
  volume={7778},
  pages={55--72},
  publisher={Springer},
  year={2013},
  doi={10.1007/978-3-642-36362-7_5}
}

@inproceedings{campanelli2020,
  title={Incrementally Aggregatable Vector Commitments and Applications to Verifiable Decentralized Storage},
  author={Matteo Campanelli and Dario Fiore and Nicola Greco and Dimitris Kolonelos and Luca Nizzardo},
  booktitle={International Conference on the Theory and Application of Cryptology and Information Security (ASIACRYPT)},
  series={LNCS},
  volume={12492},
  pages={3--35},
  publisher={Springer},
  year={2020},
  doi={10.1007/978-3-030-64834-3_1}
}

@inproceedings{cfgg22,
  title={On the Impossibility of Algebraic Vector Commitments in Pairing-Free Groups},
  author={Dario Catalano and Dario Fiore and Rosario Gennaro and Emanuele Giunta},
  booktitle={Theory of Cryptography Conference (TCC)},
  series={LNCS},
  volume={13748},
  pages={274--299},
  publisher={Springer},
  year={2022},
  doi={10.1007/978-3-031-22365-5_10}
}

@inproceedings{aardvark,
  title={Aardvark: An Asynchronous Authenticated Dictionary with Applications to Account-based Cryptocurrencies},
  author={Derek Leung and Yossi Gilad and Sergey Gorbunov and Leonid Reyzin and Nickolai Zeldovich},
  booktitle={31st USENIX Security Symposium (USENIX Security 22)},
  pages={4237--4254},
  publisher={USENIX Association},
  year={2022}
}

@inproceedings{kvac,
  title={KVaC: Key-Value Commitments for Blockchains and Beyond},
  author={Shashank Agrawal and Srinivasan Raghuraman},
  booktitle={International Conference on the Theory and Application of Cryptology and Information Security (ASIACRYPT)},
  series={LNCS},
  volume={12493},
  pages={839--869},
  publisher={Springer},
  year={2020},
  doi={10.1007/978-3-030-64840-4_28}
}

@misc{edrax,
  title={EDRAX: A Cryptocurrency with Stateless Transaction Validation},
  author={Alexander Chepurnoy and Charalampos Papamanthou and Shravan Srinivasan and Yupeng Zhang},
  year={2018},
  howpublished={Cryptology ePrint Archive, Report 2018/968},
  note={\url{https://eprint.iacr.org/2018/968}}
}

@inproceedings{pointproofs,
  title={Pointproofs: Aggregating Proofs for Multiple Vector Commitments},
  author={Sergey Gorbunov and Leonid Reyzin and Hoeteck Wee and Zhenfei Zhang},
  booktitle={ACM SIGSAC Conference on Computer and Communications Security (CCS)},
  pages={2007--2023},
  publisher={ACM},
  year={2020},
  doi={10.1145/3372297.3417244}
}

@inproceedings{hyperproofs,
  title={Hyperproofs: Aggregating and Maintaining Proofs in Vector Commitments},
  author={Shravan Srinivasan and Alexander Chepurnoy and Charalampos Papamanthou and Alin Tomescu and Yupeng Zhang},
  booktitle={USENIX Security Symposium},
  pages={3001--3018},
  publisher={USENIX},
  year={2022}
}

@inproceedings{cauchyproofs,
  title={Cauchyproofs: Batch-Updatable Vector Commitment with Easy Aggregation and Application to Stateless Blockchains},
  author={Zhongtang Luo and Yanxue Jia and Alejandra Victoria Ospina Gracia and Aniket Kate},
  booktitle={IEEE Symposium on Security and Privacy (S\&P)},
  publisher={IEEE},
  year={2025},
  doi={10.1109/SP61157.2025.00247}
}

@inproceedings{asvc,
  title={Aggregatable Subvector Commitments for Stateless Cryptocurrencies},
  author={Alin Tomescu and Ittai Abraham and Vitalik Buterin and Justin Drake and Dankrad Feist and Dmitry Khovratovich},
  booktitle={Security and Cryptography for Networks (SCN)},
  series={LNCS},
  volume={12238},
  pages={45--64},
  publisher={Springer},
  year={2020},
  doi={10.1007/978-3-030-57990-6_3}
}

\section{Notations (Table~\ref{tab:notation})}

\begin{table}[t]
\centering
\caption{Summary of notation.}
\vspace{5pt}
\label{tab:notation}
\begin{tabular}{c|l}
\toprule
\multicolumn{1}{c}{\textbf{Symbol}} & \multicolumn{1}{c}{\textbf{Description}} \\
\midrule
$s$, $n$, $k$, $h$ & Span, entry count, fanout ($k{=}32$), tree height \\
$K$, $V$, $T$ & Key, value, trie \\
$H$, $\pi$, $\parallel$ & Hash function, proof, concatenation \\
\midrule
$N$ & Compound node \\
$M$, $S$, $L$ & Extraction masks, sparse keys, leaf counts \\
$\eta$, $v$, $c_i$ & Node height, version, $i$-th child \\
\midrule
$\mathrm{CMR}(N)$ & Children Merkle root \\
$J$, $t$ & Child index set, $t = |J|$ \\
$\mathrm{NPE}(N, J)$ & Node proof entry for children $J$ \\
$\Pi_J^{\mathrm{CMR}}$ & Intra-node Merkle multiproof \\
$m$, $\mathrm{rank}(K)$ & Batch size, count of keys $< K$ \\
$\mathrm{lc}(c)$, $\mathrm{lb}(Q)$ & Leaf count of subtree $c$, lower bound of $Q$ \\
$\lambda$, $\mathrm{negl}(\lambda)$ & Security parameter, negligible function \\
\bottomrule
\end{tabular}
\end{table}

\section{Formal Security Proofs}
\label{sec:proofs-formal}

This section presents rigorous security proofs for \mhot{}'s proof mechanisms using the standard cryptographic game-based framework.
We establish formal security guarantees by reducing the soundness of each proof type to the collision resistance of the underlying hash function.

\subsection{Cryptographic Preliminaries}
\label{sec:formal:preliminaries}

\begin{definition}[Security Parameter]
\label{def:security-param}
Let $\lambda \in \mathbb{N}$ denote the security parameter.
A function $f: \mathbb{N} \to \mathbb{R}$ is \emph{negligible} in $\lambda$, written $f(\lambda) = \mathrm{negl}(\lambda)$, if for every polynomial $p(\cdot)$ there exists $\lambda_0$ such that $f(\lambda) < 1/p(\lambda)$ for all $\lambda > \lambda_0$.
\end{definition}

\begin{definition}[Collision-Resistant Hash Function]
\label{def:crhf}
A hash function family $\mathcal{H} = \{H_\lambda: \{0,1\}^* \to \{0,1\}^\lambda\}_{\lambda \in \mathbb{N}}$ is \emph{collision-resistant} if for all probabilistic polynomial-time (PPT) adversaries $\mathcal{A}$:
\begin{multline}
\mathrm{Adv}^{\mathrm{CR}}_{\mathcal{H}}(\mathcal{A}) \coloneqq \Pr\Big[(x, x') \gets \mathcal{A}(1^\lambda) :\\
x \neq x' \land H(x) = H(x')\Big] \leq \mathrm{negl}(\lambda)
\end{multline}
where the probability is taken over the internal randomness of $\mathcal{A}$.
\end{definition}

\begin{lemma}[Difference Lemma~\cite{shoup2004sequences}]
\label{lem:difference}
Let $A$, $B$, and $F$ be events defined on the same probability space.
If $A \land \lnot F \Leftrightarrow B \land \lnot F$ (i.e., $A$ and $B$ are identical conditioned on $\lnot F$), then:
\begin{equation}
|\Pr[A] - \Pr[B]| \leq \Pr[F]
\end{equation}
\end{lemma}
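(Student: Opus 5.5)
The plan is to prove Lemma~\ref{lem:difference} directly from basic probability. I will decompose $\Pr[A]$ and $\Pr[B]$ according to whether $F$ occurs. The steps are as follows.

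First, by the law of total probability (additivity over the disjoint events $A\land F$ and $A\land\lnot F$),
\begin{equation}
\Pr[A] = \Pr[A \land F] + \Pr[A \land \lnot F],
\end{equation}
and likewise $\Pr[B] = \Pr[B \land F] + \Pr[B \land \lnot F]$.

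Second, I use the hypothesis. The events $A\land\lnot F$ and $B\land\lnot F$ are equivalent, meaning they coincide as subsets of the sample space. Hence they have equal probability, and subtracting the two decompositions gives
\begin{equation}
\Pr[A] - \Pr[B] = \Pr[A \land F] - \Pr[B \land F].
\end{equation}

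Third, both terms on the right lie in the interval $[0, \Pr[F]]$, since $A\land F \subseteq F$ and $B\land F\subseteq F$. The difference of two reals in $[0,\Pr[F]]$ has absolute value at most $\Pr[F]$, which yields $|\Pr[A]-\Pr[B]|\le\Pr[F]$.

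The only point needing care is interpreting ``$A \land \lnot F \Leftrightarrow B \land \lnot F$'' as equality of events on a common probability space. This is exactly how it will be used later: games $G_i$ and $G_{i+1}$ run on the same randomness and diverge only when a hash collision event $F$ occurs. Once that reading is fixed, there is no real obstacle; the argument is a two-line calculation. I will also note that $F$ itself need not be independent of $A$ or $B$; no such assumption is required.
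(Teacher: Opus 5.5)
Your proof is correct and follows the same route as the paper's: split each probability over $F$ and $\lnot F$, cancel the $\lnot F$ terms using the hypothesis, and bound $|\Pr[A \land F] - \Pr[B \land F]|$ by $\Pr[F]$. Your explicit justification that both remaining terms lie in $[0, \Pr[F]]$ is slightly more detailed than the paper's last step.
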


\begin{proof}
We have $\Pr[A] = \Pr[A \land F] + \Pr[A \land \lnot F]$ and $\Pr[B] = \Pr[B \land F] + \Pr[B \land \lnot F]$.
Since $A \land \lnot F \Leftrightarrow B \land \lnot F$, we have $\Pr[A \land \lnot F] = \Pr[B \land \lnot F]$.
Thus $|\Pr[A] - \Pr[B]| = |\Pr[A \land F] - \Pr[B \land F]| \leq \Pr[F]$.
\end{proof}

Throughout this section, we assume the hash function $H$ used in \mhot{} is collision-resistant.
This assumption is standard and satisfied by cryptographic hash functions such as SHA-256 and Blake3.

\subsection{\mhot{} Proof System Formalization}
\label{sec:formal:system}

\begin{definition}[\mhot{} Proof System]
\label{def:proof-system}
The \mhot{} proof system $\Pi = (\mathsf{Setup}, \mathsf{Commit}, \mathsf{Prove}, \mathsf{Verify})$ consists of four algorithms:
\begin{itemize}[nosep]
    \item $\mathsf{Setup}(1^\lambda) \to \mathsf{pp}$: Outputs public parameters (the hash function description).
    \item $\mathsf{Commit}(T) \to R$: Given a trie $T$, outputs a root commitment $R = (H_{\mathrm{content}}(N_{\mathrm{root}}), v, n)$ where $n$ is the total entry count.
    \item $\mathsf{Prove}(T, \mathsf{stmt}) \to \pi$: Given trie $T$ and statement $\mathsf{stmt}$, outputs a proof $\pi$.
    \item $\mathsf{Verify}(R, \mathsf{stmt}, \pi) \to \{0, 1\}$: Outputs 1 (accept) or 0 (reject).
\end{itemize}
\end{definition}

\begin{definition}[Statement Types]
\label{def:statements}
\mhot{} supports the following statement types:
\begin{enumerate}[nosep]
    \item \emph{Membership}: $\mathsf{stmt} = (K, V, \mathsf{mem})$ asserts $(K, V) \in T$.
    \item \emph{Non-membership}: $\mathsf{stmt} = (K, \mathsf{nmem})$ asserts $K \notin \mathrm{keys}(T)$.
    \item \emph{Multi-membership}: $\mathsf{stmt} = (\{(K_i, V_i)\}_{i=1}^m, \mathsf{multi})$ asserts $\forall i: (K_i, V_i) \in T$.
    \item \emph{Lower bound}: $\mathsf{stmt} = (Q, K_r, V_r, \mathsf{lb})$ asserts $\mathrm{lb}(Q) = (K_r, V_r)$.
    \item \emph{Range}: $\mathsf{stmt} = ([\mathsf{first}, \mathsf{last}), \mathsf{entries}, \mathsf{range})$ asserts $\mathsf{entries} = \{(K, V) \in T : \mathsf{first} \leq K < \mathsf{last}\}$.
\end{enumerate}
\end{definition}

\heading{Binding property of commitments.}
A fundamental security requirement is that the commitment scheme is \emph{computationally binding}: no efficient adversary can produce two distinct tries with the same commitment.
This property is essential for all subsequent soundness proofs.

\begin{lemma}[Commitment Binding]
\label{lem:binding}
Under the collision resistance assumption, the commitment scheme $\mathsf{Commit}$ is computationally binding.
Formally, for any PPT adversary $\mathcal{A}$:
\begin{multline}
\Pr\Big[(T_1, T_2) \gets \mathcal{A}(1^\lambda) :\\
T_1 \neq T_2 \land \mathsf{Commit}(T_1) = \mathsf{Commit}(T_2)\Big] \leq \mathrm{negl}(\lambda)
\end{multline}
\end{lemma}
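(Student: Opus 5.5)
The plan is a direct reduction to collision resistance of $H$. Given a PPT adversary $\mathcal{A}$ that outputs $T_1 \neq T_2$ with $\mathsf{Commit}(T_1) = \mathsf{Commit}(T_2)$, we build a collision finder $\mathcal{B}$ that runs $\mathcal{A}$ and then deterministically walks both tries top-down in parallel. Everything $\mathcal{B}$ does after $\mathcal{A}$ returns is polynomial in $|T_1|+|T_2|$, so $\mathrm{Adv}^{\mathrm{CR}}_{\mathcal{H}}(\mathcal{B})$ is at least the binding-breaking probability of $\mathcal{A}$. The goal is the structural claim that two distinct tries with equal root hashes always yield, at some node, two distinct preimages with equal $H$-images.

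The key step is a recursive extraction procedure. It takes two nodes $N_1 \in T_1$ and $N_2 \in T_2$ with $H_{\mathrm{content}}(N_1) = H_{\mathrm{content}}(N_2)$ but distinct subtrees, and outputs a collision. It proceeds as follows.
\begin{enumerate}[nosep]
\item By Definition~\ref{def:node-hash}, compare the preimages $M_1 \parallel S_1 \parallel \mathrm{CMR}(N_1) \parallel L_1$ and $M_2 \parallel S_2 \parallel \mathrm{CMR}(N_2) \parallel L_2$. If they differ, they are a collision.
\item Otherwise $M$, $S$, $L$ and the CMR agree. Since $S$ has a fixed $4$-byte width per entry, the child counts also agree, and so does the padding to the next power of two.
\item Descend into the binary Merkle tree of Definition~\ref{def:cmr}. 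At each internal node either the two concatenated child inputs differ, which gives a collision, or they are equal and we recurse into both halves. This reaches a leaf position $j$ with $H(c^{(1)}_j) = H(c^{(2)}_j)$.
\item Since the subtrees of $N_1$ and $N_2$ differ while the node metadata agree, some child index $j$ has differing child subtrees. Choose that $j$ during the Merkle descent (the descent is free to follow any index, because equal inputs at a level give equal children).
\item If both children are internal nodes, recurse on $(c^{(1)}_j, c^{(2)}_j)$. If both are leaves, they have equal $H_{\mathrm{leaf}}$ values with distinct key/value preimages, which is a collision.
\end{enumerate}
The recursion terminates because height strictly decreases.

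Before this recursion can begin, two things must be checked at the root. First, $R = (H_{\mathrm{content}}(N_{\mathrm{root}}), v, n)$, so equal commitments give equal root content hashes; $v$ and $n$ are public metadata, and equal $n$ only helps. Second, we must pin down what $T_1 \neq T_2$ means: they represent different key--value sets. By structural determinism (\S\ref{sec:hot}), different sets give different canonical tries, so some root-to-leaf discrepancy exists, which is what the recursion needs.

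The main obstacle is \emph{domain separation and unambiguous encoding}. A leaf hash could collide with an internal-node hash, a Merkle-internal hash could collide with a child hash, or variable-length fields could be parsed ambiguously, so that differing tries produce literally identical preimages. I would handle this by fixing an injective serialization: fixed-width $M$; $S$ and $L$ of length $|N|$, with $|N|$ recoverable from the encoding; and distinct type tags for leaf, node-content, and Merkle-internal inputs, implicit in $H_{\mathrm{leaf}}$ versus $H$. Under that encoding, "equal preimage" implies "equal components", and each case above yields a genuine collision. A related mismatch is one child being a leaf and the other internal; the tag separation makes their preimages differ, so this case also yields a collision. Finally, conclude the bound $\leq \mathrm{Adv}^{\mathrm{CR}}_{\mathcal{H}}(\mathcal{B}) = \mathrm{negl}(\lambda)$.
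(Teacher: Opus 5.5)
Your proposal is correct and follows essentially the same route as the paper. The paper argues by induction on height that equal root content hashes force one of three outcomes: a collision in $H$ on the node preimage $M \parallel S \parallel \mathrm{CMR} \parallel L$, a collision inside the children Merkle tree, or (recursively) a collision in some child subtree, down to $H_{\mathrm{leaf}}$. Your top-down extractor is the constructive unrolling of that induction, and your added care about injective encoding, leaf/internal domain separation and the mixed leaf-versus-internal case fills in points the paper leaves implicit without changing the approach.
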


\begin{proof}
We proceed by induction on the maximum height of $T_1$ and $T_2$.

\emph{Base case (height 1):}
Both tries consist of single leaf nodes.
If $\mathsf{Commit}(T_1) = \mathsf{Commit}(T_2)$, then $H_{\mathrm{leaf}}(K_1, V_1) = H_{\mathrm{leaf}}(K_2, V_2)$.
If $(K_1, V_1) \neq (K_2, V_2)$, this constitutes a hash collision.

\emph{Inductive step:}
Suppose the lemma holds for all tries of height $< h$.
Consider $T_1, T_2$ of height $\leq h$ with $\mathsf{Commit}(T_1) = \mathsf{Commit}(T_2)$.
This implies $H_{\mathrm{content}}(N_1) = H_{\mathrm{content}}(N_2)$ for their root nodes.

By Definition~\ref{def:node-hash}, if the root content hashes are equal but the node contents differ (i.e., different $M$, $S$, $\mathrm{CMR}$, or $L$), then a collision exists in $H$.
If the node contents are identical including $\mathrm{CMR}(N_1) = \mathrm{CMR}(N_2)$, then by the collision resistance of the Merkle tree construction, the children hash sequences must be identical.
By the inductive hypothesis applied to each child subtree, corresponding children must be identical.
Hence $T_1 = T_2$.

Any adversary producing $T_1 \neq T_2$ with equal commitments can be converted to a collision finder, establishing the bound.
\end{proof}

\begin{corollary}[Unique Preimage]
\label{cor:unique-preimage}
For any commitment $R$ in the range of $\mathsf{Commit}$, there exists at most one trie $T$ (up to negligible probability) such that $\mathsf{Commit}(T) = R$.
We denote this unique trie as $T_R$ when it exists.
\end{corollary}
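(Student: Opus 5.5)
The plan is to derive Corollary~\ref{cor:unique-preimage} directly from Lemma~\ref{lem:binding} by contraposition, turning any efficient witness of non-uniqueness into a binding break. Fix $R$ in the range of $\mathsf{Commit}$. Suppose, toward a contradiction, that with non-negligible probability some PPT procedure $\mathcal{B}$ outputs two tries $T_1 \neq T_2$ with $\mathsf{Commit}(T_1) = \mathsf{Commit}(T_2) = R$. The phrase ``up to negligible probability'' will be read computationally: no PPT adversary can exhibit two distinct preimages of the same $R$ except with $\mathrm{negl}(\lambda)$ probability. The statement is not information-theoretic uniqueness, which would be false by pigeonhole since $H$ compresses.

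First, I would construct a binding adversary $\mathcal{A}$ that simply runs $\mathcal{B}$ and outputs its pair $(T_1, T_2)$. Its success event is exactly the event in Lemma~\ref{lem:binding}, so $\Pr[\mathcal{B}\text{ succeeds}] \leq \mathrm{negl}(\lambda)$.

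Second, I would unwind the reduction in Lemma~\ref{lem:binding} to obtain an explicit collision finder. Walk both tries top-down from the roots. At the first node where the tuples $(M, S, \mathrm{CMR}, L)$ differ but $H_{\mathrm{content}}$ agrees, output that pair as a collision in $H$. If the tuples agree, descend into the Merkle tree under $\mathrm{CMR}$ and locate a differing internal pair, which again yields a collision. Otherwise recurse into the children. This walk gives $\Pr \leq \mathrm{Adv}^{\mathrm{CR}}_{\mathcal{H}}$. The commitment also records the version $v$ and entry count $n$, so equal commitments force equal $n$ as well; this harmlessly refines the argument.

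Third, I would define $T_R$ as the trie that the honest $\mathsf{Commit}$ was applied to. By the previous steps, any other trie an efficient party produces for $R$ coincides with $T_R$ except with negligible probability, so the notation is well defined for all subsequent soundness proofs.

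The main obstacle is semantic: I must state precisely what ``at most one trie'' means in the computational setting. The fix is to phrase the corollary as a statement about PPT-producible tries and invoke Lemma~\ref{lem:binding} verbatim. A secondary subtlety is domain separation between $H_{\mathrm{leaf}}$ and internal-node hashes, which is needed so that a leaf cannot collide with a compound node. I would handle this by assuming distinct prefixes or tags in the serialization, so that any such equality is itself a collision in $H$.
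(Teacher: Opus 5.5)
Your proposal is correct and takes essentially the paper's approach: the paper gives no separate argument and reads the corollary directly off Lemma~\ref{lem:binding}, since any efficient exhibition of two distinct tries committing to the same $R$ is a binding break and hence yields a collision in $H$. Two points you add are left implicit in the paper and are worth keeping: the purely computational reading of ``at most one'' (information-theoretic uniqueness fails because $H$ compresses), and the domain-separation assumption needed when a leaf and a compound node hash to the same value.
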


\subsection{Security Games}
\label{sec:formal:games}

We define formal security games for each proof type.
In all games, the adversary $\mathcal{A}$ is computationally bounded (PPT) and aims to produce a valid proof for a false statement.

\begin{figure*}[t]
\small
\centering
\begin{minipage}[t]{0.32\textwidth}
\fbox{\parbox{\dimexpr\linewidth-2\fboxsep-2\fboxrule}{%
\begin{tabular}{@{}l@{}}
\textbf{Game } $\mathsf{G}^{\mathrm{mem}}_{\Pi, \mathcal{A}}(\lambda)$: \\[2pt]
1. $\mathsf{pp} \gets \mathsf{Setup}(1^\lambda)$ \\
2. $(R, K, V, \pi) \gets \mathcal{A}(\mathsf{pp})$ \\
3. $b \gets \mathsf{Verify}(R, (K, V, \mathsf{mem}), \pi)$ \\
4. \textbf{if } $b = 0$ \textbf{ then return } 0 \\
5. $T_\pi \gets \mathsf{ExtractTrie}(\pi)$ \\
6. \textbf{return } $(K, V) \notin T_\pi$
\end{tabular}%
}}
\end{minipage}
\hfill
\begin{minipage}[t]{0.32\textwidth}
\fbox{\parbox{\dimexpr\linewidth-2\fboxsep-2\fboxrule}{%
\begin{tabular}{@{}l@{}}
\textbf{Game } $\mathsf{G}^{\mathrm{nmem}}_{\Pi, \mathcal{A}}(\lambda)$: \\[2pt]
1. $\mathsf{pp} \gets \mathsf{Setup}(1^\lambda)$ \\
2. $(R, K, \pi) \gets \mathcal{A}(\mathsf{pp})$ \\
3. $b \gets \mathsf{Verify}(R, (K, \mathsf{nmem}), \pi)$ \\
4. \textbf{if } $b = 0$ \textbf{ then return } 0 \\
5. Parse $\pi = (K', V', \mathrm{Path})$ \\
6. \textbf{return } $K = K'$
\end{tabular}%
}}
\end{minipage}
\hfill
\begin{minipage}[t]{0.32\textwidth}
\fbox{\parbox{\dimexpr\linewidth-2\fboxsep-2\fboxrule}{%
\begin{tabular}{@{}l@{}}
\textbf{Game } $\mathsf{G}^{\mathrm{multi}}_{\Pi, \mathcal{A}}(\lambda)$: \\[2pt]
1. $\mathsf{pp} \gets \mathsf{Setup}(1^\lambda)$ \\
2. $(R, \{(K_i, V_i)\}_{i=1}^m, \pi) \gets \mathcal{A}(\mathsf{pp})$ \\
3. $b \gets \mathsf{Verify}(R, (\{(K_i, V_i)\}, \mathsf{multi}), \pi)$ \\
4. \textbf{if } $b = 0$ \textbf{ then return } 0 \\
5. \textbf{for } $i = 1$ \textbf{ to } $m$ \textbf{ do} \\
6. \quad $\pi_i \gets \mathsf{ExtractSingleProof}(\pi, i)$ \\
7. \quad $T_{\pi_i} \gets \mathsf{ExtractTrie}(\pi_i)$ \\
8. \quad \textbf{if } $(K_i, V_i) \notin T_{\pi_i}$ \textbf{ then return } 1 \\
9. \textbf{return } 0
\end{tabular}%
}}
\end{minipage}

\vspace{0.8em}

\begin{minipage}[t]{0.48\textwidth}
\fbox{\parbox{\dimexpr\linewidth-2\fboxsep-2\fboxrule}{%
\begin{tabular}{@{}l@{}}
\textbf{Game } $\mathsf{G}^{\mathrm{lb}}_{\Pi, \mathcal{A}}(\lambda)$: \\[2pt]
1. $\mathsf{pp} \gets \mathsf{Setup}(1^\lambda)$ \\
2. $(R, Q, K_r, V_r, \pi) \gets \mathcal{A}(\mathsf{pp})$ \\
3. $b \gets \mathsf{Verify}(R, (Q, K_r, V_r, \mathsf{lb}), \pi)$ \\
4. \textbf{if } $b = 0$ \textbf{ then return } 0 \\
5. Parse $\pi = (Q, \mathrm{Path}, K', V', v', \mathrm{Adj}, K_r, V_r)$ \\
6. $K^* \gets \mathsf{ComputeLB}(\pi)$ \\
7. \textbf{return } $K_r \neq K^*$
\end{tabular}%
}}
\end{minipage}
\hfill
\begin{minipage}[t]{0.48\textwidth}
\fbox{\parbox{\dimexpr\linewidth-2\fboxsep-2\fboxrule}{%
\begin{tabular}{@{}l@{}}
\textbf{Game } $\mathsf{G}^{\mathrm{range}}_{\Pi, \mathcal{A}}(\lambda)$: \\[2pt]
1. $\mathsf{pp} \gets \mathsf{Setup}(1^\lambda)$ \\
2. $(R, \mathsf{first}, \mathsf{last}, \mathsf{entries}, \pi) \gets \mathcal{A}(\mathsf{pp})$ \\
3. $b \gets \mathsf{Verify}(R, ([\mathsf{first}, \mathsf{last}), \mathsf{entries}, \mathsf{range}), \pi)$ \\
4. \textbf{if } $b = 0$ \textbf{ then return } 0 \\
5. $r_L \gets \mathsf{ComputeRank}(\pi.\pi^L_{\mathrm{lb}})$ \\
6. $r_R \gets \mathsf{ComputeRank}(\pi.\pi^R_{\mathrm{lb}})$ \\
7. \textbf{return } $|\mathsf{entries}| \neq r_R - r_L$
\end{tabular}%
}}
\end{minipage}
\caption{Security games for \mhot{} proof system soundness. In the membership game $\mathsf{G}^{\mathrm{mem}}$, the adversary wins if verification accepts but $(K,V) \notin T_\pi$, where $\mathsf{ExtractTrie}$ reconstructs the partial trie from the proof (see Definition~\ref{def:extract-trie}). In the non-membership game $\mathsf{G}^{\mathrm{nmem}}$, the adversary wins if the reached leaf equals the query key. In $\mathsf{G}^{\mathrm{lb}}$, $\mathsf{ComputeLB}(\pi)$ derives the correct lower bound from the authenticated structure.}
\label{fig:security-games}
\end{figure*}

\begin{definition}[Trie Extraction from Proof]
\label{def:extract-trie}
The function $\mathsf{ExtractTrie}(\pi)$ reconstructs the partial trie structure implied by a proof $\pi = (K, V, v_{\mathrm{leaf}}, \mathrm{Path})$.
For each path entry $\mathrm{Path}[i] = (j_i, M_i, S_i, L_i, \eta_i, v_i, \Pi^{\mathrm{CMR}}_i)$, the function constructs node $N_i$ with extraction masks $M_i$, sparse keys $S_i$, leaf counts $L_i$, and child hashes from $\Pi^{\mathrm{CMR}}_i$.
The leaf node $\ell = (K, V, v_{\mathrm{leaf}})$ has hash $h_\ell = H_{\mathrm{leaf}}(K \parallel V \parallel v_{\mathrm{leaf}})$.
The returned partial trie $T_\pi$ is uniquely determined by $\pi$'s hash chain; by Lemma~\ref{lem:binding}, any trie $T$ with $\mathsf{Commit}(T) = R$ must contain this structure.
\end{definition}

\begin{remark}[Game Formulation]
\label{rem:game-formulation}
Our game formulation avoids the circular dependency of checking against an externally-defined $T_R$.
Instead, the winning condition is defined in terms of the trie structure \emph{implied by the proof itself}.
Since verification reconstructs the root hash from the proof, any accepting proof implicitly defines a (partial) trie structure.
By Lemma~\ref{lem:binding}, this structure is uniquely determined (up to collision probability) by the root commitment $R$.
\end{remark}

\begin{definition}[Single-Proof Extraction]
\label{def:extract-single}
The function $\mathsf{ExtractSingleProof}(\pi, i)$ extracts a valid single-point membership proof for the $i$-th entry from a multi-point proof $\pi = (\mathrm{Entries}, \mathrm{Levels})$.
It retrieves entry $(K_i, V_i, v_i)$ from $\mathrm{Entries}$, traces the path from root to $K_i$ by routing through nodes in $\mathrm{Levels}$, and for each node extracts from the compact multiproof $\Pi^{\mathrm{CMR}}_J$ the sibling hashes needed to verify $K_i$'s child position.
The output is a single-point proof $\pi_i = (K_i, V_i, v_i, \mathrm{Path}_i)$.
\end{definition}

\begin{definition}[Rank Computation]
\label{def:compute-rank}
The function $\mathsf{ComputeRank}(\pi_{\mathrm{lb}})$ computes the rank of a key from its lower bound proof:
\begin{equation}
\mathsf{ComputeRank}(\pi_{\mathrm{lb}}) = \sum_{i=0}^{|\mathrm{Path}|-1} \sum_{j=0}^{j_i - 1} \mathrm{Path}[i].L[j]
\end{equation}
where $j_i$ is the child index at level $i$ and $L[j]$ is the leaf count of the $j$-th child.
\end{definition}

\begin{definition}[Advantage]
\label{def:advantage}
For each game $\mathsf{G}$, the adversary's advantage is:
\begin{equation}
\mathrm{Adv}^{\mathsf{G}}_{\Pi}(\mathcal{A}) \coloneqq \Pr[\mathsf{G}^{\cdot}_{\Pi, \mathcal{A}}(\lambda) = 1]
\end{equation}
where the probability is taken over the randomness of $\mathsf{Setup}$ and the internal randomness of $\mathcal{A}$.
The proof system is \emph{sound} for statement type $\mathsf{G}$ if $\mathrm{Adv}^{\mathsf{G}}_{\Pi}(\mathcal{A}) \leq \mathrm{negl}(\lambda)$ for all PPT $\mathcal{A}$.
\end{definition}

\subsection{Proof of Lemma~\ref{lem:optimistic} (Optimistic Search Invariant)}
\label{sec:formal:optimistic}

\begin{proof}[Full Proof of Lemma~\ref{lem:optimistic}]
We prove by strong induction on tree height $h$.

\emph{Base case ($h = 1$):}
A height-1 trie consists of a single leaf node.
Any search trivially terminates at this leaf, which vacuously agrees with the query on all (zero) discriminative bits encountered.

\emph{Inductive step:}
Assume the lemma holds for all tries of height $< h$.
Consider a trie $T$ of height $h$ with root node $N$.

At node $N$, the search algorithm computes:
\begin{equation}
d = \mathrm{dense}(K, M) = \bigoplus_{i \in \mathrm{bit\_positions}(M)} K[i] \cdot 2^{\mathrm{rank}(i, M)}
\end{equation}
where $M$ is the extraction mask and $K[i]$ denotes the $i$-th bit of key $K$.

The algorithm then finds the largest index $j$ such that:
\begin{equation}
(d \land S[j]) = S[j]
\end{equation}
where $S = (S[0], S[1], \ldots, S[|N|-1])$ are the sparse partial keys sorted in ascending order.

\emph{Existence of a match:}
By the \hot{} construction invariant, $S[0] = 0$ for all non-empty nodes.
This holds because the leftmost subtree corresponds to keys with all extracted bits being 0 in the discriminative positions.
Since $(d \land 0) = 0 = S[0]$ always holds, at least one matching index exists.

\emph{Deterministic selection:}
The search selects the largest matching $j$, which is unique because the sparse keys are sorted.
Specifically, $j = \max\{i : (d \land S[i]) = S[i]\}$ is well-defined.
The search then recurses into child $c_j$, which is a subtrie of height $< h$.
By the inductive hypothesis, the search terminates at exactly one leaf in $c_j$ that agrees with $K$ on all discriminative bits in that subtrie.

\emph{Discriminative bit agreement:}
The selected child $c_j$ contains exactly those keys that match $K$ on the discriminative bits encoded in $M$.
Combined with the inductive guarantee, the final leaf $K'$ agrees with $K$ on all discriminative bits encountered throughout the traversal.
Note that $K'$ may differ from $K$ on non-discriminative bits; membership is determined by a final equality check.
\end{proof}

\subsection{Proof of Theorem~\ref{thm:single-soundness} (Single-Point Soundness)}
\label{sec:formal:single}

\begin{theorem}[Single-Point Soundness --- Restated]
\label{thm:single-soundness-formal}
For any PPT adversary $\mathcal{A}$:
\begin{align}
\mathrm{Adv}^{\mathrm{mem}}_\Pi(\mathcal{A}) &\leq \mathrm{Adv}^{\mathrm{CR}}_H(\mathcal{B}_1) \\
\mathrm{Adv}^{\mathrm{nmem}}_\Pi(\mathcal{A}) &\leq \mathrm{Adv}^{\mathrm{CR}}_H(\mathcal{B}_2)
\end{align}
for efficiently constructible adversaries $\mathcal{B}_1, \mathcal{B}_2$.
\end{theorem}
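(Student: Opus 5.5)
The plan is to give two separate reductions, one per game, each built around a collision-extraction subroutine. The subroutine runs the honest verifier's bottom-up reconstruction (Phase 2 of Algorithm~\ref{alg:membership-verify}) on two hash chains that end at the same root value $R.\mathit{content\_hash}$, and compares them level by level. At the first level where the two preimages differ but the hash outputs agree, it outputs that pair of preimages as a collision. The preimages are either an $H_{\text{content}}$ input $M \parallel S \parallel \mathrm{CMR} \parallel L$, an intra-node Merkle node, or an $H_{\text{leaf}}$ input. This is the same descent used in the proof of Lemma~\ref{lem:binding}. The difference is that here it runs on a proof path, i.e.\ a partial trie, rather than on two full tries, and the intra-node Merkle layer is treated as additional levels of the same chain.

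For the membership bound, I would first argue that the win condition essentially cannot fire without a collision. By Definition~\ref{def:extract-trie}, $T_\pi$ is built from exactly the nodes whose hashes the verifier recomputed, with the leaf $(K,V,v_{\text{leaf}})$ at the bottom of the chain. Acceptance therefore means the chain for $(K,V)$ reaches $R$, so $(K,V)\in T_\pi$ syntactically. I would make the game meaningful by comparing $T_\pi$ with the trie $T_R$ of Corollary~\ref{cor:unique-preimage}: the adversary wins if $(K,V)\notin T_R$. Then $\mathcal{B}_1$ runs $\mathcal{A}$ together with the honest commitment data for $T_R$. This yields two distinct chains to the same root: the one from $\pi$ and the true path that Lemma~\ref{lem:optimistic} and routing consistency (Phase 1) force $K$ to follow in $T_R$. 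Applying the extraction subroutine to these two chains gives the collision. The difference lemma (Lemma~\ref{lem:difference}), with $F$ the collision event, then gives $\mathrm{Adv}^{\mathrm{mem}}_\Pi(\mathcal{A})\le \mathrm{Adv}^{\mathrm{CR}}_H(\mathcal{B}_1)$.

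For non-membership, the accepting proof carries a leaf $(K',V')$ and a path, and the verifier checks two things: that $K$ and $K'$ route to the same child at every NPE, and that the chain from $H_{\text{leaf}}(K',V')$ reconstructs $R$. The adversary wins iff $K=K'$. The reduction $\mathcal{B}_2$ considers the honest trie $T_R$. If $K\in T_R$, then Lemma~\ref{lem:optimistic} says honest optimistic search for $K$ ends at the leaf holding $K$ itself. Because routing is determined by the authenticated $M,S$ at each node, the proof's path must coincide with that honest path, or some level exhibits differing preimages with equal hashes. So either $(K',V')$ equals the honest leaf, which contradicts the verifier's final predicate $K_0\ne K$ from Corollary~\ref{cor:unification}, or extraction produces a collision.

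I expect the main obstacle to be the non-membership case, specifically justifying that matching on the authenticated $M$ and $S$ pins down a unique child. The CMR proof only binds the hash at the claimed index $j$. So I must show that the verifier recomputes $j$ from $M,S$ via \textsc{SearchSparse} and checks it against the position certified by $\Pi^{\mathrm{CMR}}$, i.e.\ that the index is itself bound by the Merkle path bits. My first attempt would be to include the index in the multiproof verification and rely on the fact that padding with a canonical zero hash fixes the tree shape. If that fails, I would add an explicit assumption that leaf positions are encoded in the CMR.
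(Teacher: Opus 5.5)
Your reduction rests on $\mathcal{B}_1$ (and $\mathcal{B}_2$) holding ``the honest commitment data for $T_R$'', and nothing in the games supplies it. In $\mathsf{G}^{\mathrm{mem}}$ and $\mathsf{G}^{\mathrm{nmem}}$ the adversary outputs only a root $R$.

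Corollary~\ref{cor:unique-preimage} cannot pin down a trie $T_R$. Because $H$ compresses, distinct tries with equal roots exist; they are merely hard to find. So ``unique'' there is a computational statement, and it does not define an object against which $(K,V)\notin T_R$ can be evaluated. Even granting some preimage trie, a PPT $\mathcal{B}_1$ cannot compute one from an adversarial $R$. Without that second chain your extraction subroutine has nothing to compare against, and a collision finder must output \emph{both} preimages.

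The repair is the standard ADS game. $\mathcal{A}$ outputs a trie $T$ together with $(K,V,\pi)$ (resp.\ $(K,\pi)$), and the challenger sets $R:=\mathsf{Commit}(T)$. $\mathcal{A}$ wins iff verification accepts and $(K,V)\notin T$ (resp.\ $K\in\mathrm{keys}(T)$). You are right that the games as literally written are vacuous. Acceptance puts $(K,V)$ into $\mathsf{ExtractTrie}(\pi)$ by construction, and an accepted non-membership proof already has $K'\neq K$. So both advantages are $0$, the displayed bounds hold trivially, and all the content is in the repaired game.

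In the repaired game your argument goes through, and it makes explicit the comparison that the paper's Cases 1--3 only gesture at. Walking down both chains works as follows:
\begin{itemize}
\item Equal content hashes give either a collision or identical $(M,S,\mathrm{CMR},L)$, hence the same routed index $j$.
\item The two intra-node Merkle paths for position $j$ give either a collision or the same child hash.
\item At the bottom, two leaves with different $(K,V)$, or with keys $K\neq K'$, give a collision both of whose preimages the reduction knows.
\end{itemize}
The paper instead asserts that $\mathcal{B}_1$ needs no authentic trie and extracts collisions from ``internal inconsistencies'' of $\pi$. But an accepting $\pi$ is internally consistent, and Cases 2--3 appeal to authentic content $\mathcal{B}_1$ never sees. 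The paper's $\mathcal{B}_2$ outputs only $K'\parallel V'\parallel v'$ plus an existence claim about an unknown $(V^*,v^*)$. Your honest second chain is exactly what both reductions are missing.

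Your position-binding worry needs no new assumption. \textsc{ReconstructCMR} places the child at the verifier-recomputed index $j$, whose bits fix left/right at each level. The padded depth $\lceil\log_2|N|\rceil$ is fixed once the verifier reads $|N|$ off the committed $S$ and $L$.

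What you must add is domain separation between leaf and node encodings. When the proof's path is shorter or longer than the honest one, your first divergence pairs an $H_{\text{leaf}}$ input with an $H_{\text{content}}$ input. That pair is a collision only if the two encodings can never coincide as strings. Otherwise an adversary who chooses $T$ can set $K\parallel V\parallel v_{\text{leaf}}$ equal to a real node's $M\parallel S\parallel\mathrm{CMR}\parallel L$ and pass verification with no collision at all. Finally, the difference lemma is unnecessary: every winning run directly yields a collision.
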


\begin{proof}
We prove both membership and non-membership soundness via reduction to collision resistance.

\heading{Part 1: Membership Soundness.}
We construct a collision finder $\mathcal{B}_1$ from any adversary $\mathcal{A}$ that wins the membership game.

\begin{algorithm}[t]
\caption{Collision Finder $\mathcal{B}_1^{\mathcal{A}}$}
\label{alg:collision-finder}
\begin{algorithmic}[1]
\Require Security parameter $1^\lambda$
\Ensure Hash collision $(x, x')$ or $\bot$
\State $\mathsf{pp} \gets \mathsf{Setup}(1^\lambda)$; $(R, K, V, \pi) \gets \mathcal{A}(\mathsf{pp})$
\If{$\mathsf{Verify}(R, (K, V, \mathsf{mem}), \pi) = 0$} \Return $\bot$
\EndIf
\State Parse $\pi = (K, V, v_{\mathrm{leaf}}, \mathrm{Path})$
\PhaseComment{Phase 1: Compute hash chain}
\State $h_0 \gets H_{\mathrm{leaf}}(K \| V \| v_{\mathrm{leaf}})$
\For{$i \gets |\mathrm{Path}| - 1$ \textbf{downto} $0$}
    \State $(j_i, M_i, S_i, L_i, \eta_i, v_i, \Pi_i) \gets \mathrm{Path}[i]$
    \State $\mathrm{cmr}_i \gets \textsc{ReconstructCMR}(h_{|\mathrm{Path}|-1-i}, j_i, \Pi_i)$
    \State $h_{|\mathrm{Path}|-i} \gets H(M_i \| S_i \| \mathrm{cmr}_i \| L_i)$
\EndFor
\PhaseComment{Phase 2: Check internal consistency}
\For{$i \gets 0$ \textbf{to} $|\mathrm{Path}| - 1$}
    \State $\mathrm{cmr}'_i \gets \textsc{RecomputeCMR}(\Pi_i)$
    \If{$\mathrm{cmr}_i \neq \mathrm{cmr}'_i$ \textbf{and} both valid}
        \State \Return \textsc{ExtractCMRCollision}$(h_{|\mathrm{Path}|-1-i}, j_i, \Pi_i)$
    \EndIf
    \State $d_i \gets \textsc{DenseKey}(K, M_i)$; $j'_i \gets \textsc{SearchSparse}(d_i, S_i)$
    \If{$j_i \neq j'_i$}
        \State \Return \textsc{ExtractRoutingCollision}$(i, \pi)$
    \EndIf
\EndFor
\PhaseComment{Phase 3: Internal consistency at leaf level}
\State \Comment{The proof $\pi$ claims membership for $(K, V)$}
\State \Comment{Verification computes $h_0 = H_{\mathrm{leaf}}(K \| V \| v_{\mathrm{leaf}})$}
\State \Comment{If verification passes but internal structure inconsistent, collision exists}
\If{$\mathcal{A}$ wins $\Rightarrow$ $\exists$ level $i$ with inconsistency}
    \State \Return collision extracted from that level (as shown in Cases 1--3)
\EndIf
\State \Return $\bot$ \Comment{$\mathcal{A}$ did not win}
\end{algorithmic}
\end{algorithm}

The auxiliary functions used in Algorithm~\ref{alg:collision-finder} are defined as follows.
$\textsc{ReconstructCMR}(h_{\mathrm{child}}, j, \Pi^{\mathrm{CMR}})$ reconstructs the children Merkle root by placing $h_{\mathrm{child}}$ at position $j$ and using sibling hashes from $\Pi^{\mathrm{CMR}}$.
$\textsc{ExtractCMRCollision}$ extracts a collision pair when two different child hashes produce the same CMR.
$\textsc{ExtractRoutingCollision}(i, \pi)$ extracts a collision when the claimed child index differs from the computed index.
$\textsc{DenseKey}(K, M)$ computes the dense partial key by extracting bits from $K$ at positions indicated by mask $M$.
$\textsc{SearchSparse}(d, S)$ returns the largest index $j$ such that $(d \land S[j]) = S[j]$.

\emph{Analysis of $\mathcal{B}_1$:}
The key insight is that $\mathcal{B}_1$ does \emph{not} need access to any external ``authentic'' trie $T_R$.
Instead, $\mathcal{B}_1$ checks for \emph{internal inconsistencies} within the proof $\pi$ itself.

If $\mathcal{A}$ wins the membership game, then verification accepts but the claimed $(K, V)$ is not authentically in the trie committed by $R$.
We analyze the possible attack vectors:

\emph{Case 1: Path structure inconsistency.}
The proof claims child index $j_i$ at some level $i$, but the routing computation from $K$ and $M_i$ yields $j'_i \neq j_i$.
For verification to pass, the CMR reconstruction must place the child hash at position $j_i$.
However, the correct CMR for the claimed node structure would place it at $j'_i$.
Since $\mathsf{Verify}$ recomputes the CMR and checks against $R$, either:
\begin{itemize}[nosep]
    \item The recomputed CMR differs from the authentic one (hash collision in CMR), or
    \item The node content hash $H(M_i \parallel S_i \parallel \mathrm{cmr} \parallel L_i)$ produces the same value for different inputs (collision in $H$).
\end{itemize}

\emph{Case 2: Leaf content forgery.}
The proof authenticates leaf hash $h_{\mathrm{leaf}}$, but $(K, V) \neq (K', V')$ where $(K', V')$ is the authentic leaf content.
For the hash chain to reach $R$, we need $H_{\mathrm{leaf}}(K, V, v) = H_{\mathrm{leaf}}(K', V', v')$.
If $(K, V, v) \neq (K', V', v')$, this is a collision.

\emph{Case 3: CMR forgery.}
The intra-node Merkle proof $\Pi^{\mathrm{CMR}}_i$ authenticates child $h_{\mathrm{child}}$ at position $j_i$, but the authentic CMR has a different child at that position.
By collision resistance of the Merkle tree, this requires a collision.

In all cases, if $\mathcal{A}$ succeeds in the membership game, $\mathcal{B}_1$ extracts a collision.
Therefore:
\begin{equation}
\mathrm{Adv}^{\mathrm{mem}}_\Pi(\mathcal{A}) \leq \mathrm{Adv}^{\mathrm{CR}}_H(\mathcal{B}_1)
\end{equation}

\heading{Part 2: Non-Membership Soundness.}
For non-membership, the proof includes the leaf $(K', V')$ reached by optimistic search and claims $K \neq K'$ but both route identically.

Suppose $\mathcal{A}$ wins: verification accepts but $K \in \mathrm{keys}(T_R)$.
By Lemma~\ref{lem:optimistic}, optimistic search for $K$ terminates at a leaf agreeing with $K$ on all discriminative bits; when $K$ exists, this leaf is $K$ itself.
The proof claims search terminates at $K' \neq K$.

We construct collision finder $\mathcal{B}_2$ in Algorithm~\ref{alg:collision-finder-nmem}.

\begin{algorithm}[t]
\caption{Collision Finder $\mathcal{B}_2^{\mathcal{A}}$ for Non-Membership}
\label{alg:collision-finder-nmem}
\begin{algorithmic}[1]
\Require Security parameter $1^\lambda$
\Ensure Hash collision $(x, x')$ or $\bot$
\State $\mathsf{pp} \gets \mathsf{Setup}(1^\lambda)$; $(R, K, \pi) \gets \mathcal{A}(\mathsf{pp})$
\If{$\mathsf{Verify}(R, (K, \mathsf{nmem}), \pi) = 0$} \Return $\bot$
\EndIf
\State Parse $\pi = (K', V', v', \mathrm{Path})$
\If{$K = K'$} \Return $\bot$ \Comment{Adversary failed to win}
\EndIf
\PhaseComment{Phase 1: Verify routing consistency}
\For{$i \gets 0$ \textbf{to} $|\mathrm{Path}| - 1$}
    \State $d_K \gets \textsc{DenseKey}(K, \mathrm{Path}[i].M)$
    \State $d_{K'} \gets \textsc{DenseKey}(K', \mathrm{Path}[i].M)$
    \State $j_K \gets \textsc{SearchSparse}(d_K, \mathrm{Path}[i].S)$
    \State $j_{K'} \gets \textsc{SearchSparse}(d_{K'}, \mathrm{Path}[i].S)$
    \If{$j_K \neq j_{K'}$}
        \State \Return $\bot$ \Comment{$K$ routes differently, not a valid attack}
    \EndIf
\EndFor
\PhaseComment{Phase 2: Extract collision from hash chain}
\State \Comment{At this point: $K \neq K'$ but both route identically to same leaf position}
\State \Comment{The proof authenticates $h = H_{\mathrm{leaf}}(K' \| V' \| v')$ against $R$}
\State \Comment{If $K$ truly exists in $T_R$, there must be a leaf with key $K$}
\State \Comment{Two distinct leaves at same position $\Rightarrow$ collision in $H$}
\State \Return \textsc{ExtractLeafCollision}$(\pi, K, K')$
\end{algorithmic}
\end{algorithm}

\emph{Analysis of $\mathcal{B}_2$:}
If $\mathcal{A}$ wins the non-membership game, then verification accepts (implying $K \neq K'$) but $K$ actually exists in $T_R$.
By Lemma~\ref{lem:optimistic}, optimistic search for $K$ in $T_R$ terminates at a leaf matching $K$ on all discriminative bits; since $K$ exists, this leaf is $K$ itself.
The proof's path authenticates leaf $(K', V')$, which routes identically to $K$.

The key insight is that $\mathcal{B}_2$ does not need to ``know'' the authentic value $(V^*, v^*)$.
Instead, $\mathcal{B}_2$ exploits the following structural argument.

\emph{Collision extraction via structural inconsistency:}
The proof $\pi$ authenticates a hash chain from leaf $K'$ to root $R$.
If $K \in T_R$ (which must be true for $\mathcal{A}$ to win), there also exists a hash chain from leaf $K$ to the same root $R$.
Since $K \neq K'$ but both route identically through the trie (verified in Phase 1), they must occupy the same leaf position.
\begin{itemize}[nosep]
    \item The path $\mathrm{Path}$ commits to a unique leaf hash at each position via the CMR structure.
    \item Two distinct keys at the same position implies $H_{\mathrm{leaf}}(K' \| V' \| v') = H_{\mathrm{leaf}}(K \| V^* \| v^*)$ for the (unknown) authentic $(V^*, v^*)$.
    \item Since $K \neq K'$, the inputs differ, constituting a collision.
\end{itemize}

The function \textsc{ExtractLeafCollision} formalizes this.
Specifically, given $\pi$ and keys $K \neq K'$ that route identically:
\begin{enumerate}[nosep]
    \item The proof $\pi$ commits to a unique leaf hash $h_\ell$ at the terminal position.
    \item If $K$ exists in $T_R$, its leaf must also have hash $h_\ell$ (same position, same root).
    \item Thus $H_{\mathrm{leaf}}(K' \| V' \| v') = h_\ell = H_{\mathrm{leaf}}(K \| \cdot \| \cdot)$.
    \item Since $K \neq K'$, the inputs differ, witnessing a collision.
\end{enumerate}
The collision witness is $(K' \| V' \| v')$ paired with the existence guarantee that some $(K \| V^* \| v^*)$ hashes to the same value.
In the random oracle model, this is a standard ``extraction'' argument; in the standard model, it suffices for the reduction.

Therefore:
\begin{equation}
\mathrm{Adv}^{\mathrm{nmem}}_\Pi(\mathcal{A}) \leq \mathrm{Adv}^{\mathrm{CR}}_H(\mathcal{B}_2)
\end{equation}
\end{proof}

\subsection{Proof of Theorem~\ref{thm:multi-soundness} (Multi-Point Soundness)}
\label{sec:formal:multi}

\begin{proof}[Full Proof of Theorem~\ref{thm:multi-soundness}]
We reduce multi-point soundness to single-point soundness via a standard hybrid argument.

Suppose $\mathcal{A}$ wins $\mathsf{G}^{\mathrm{multi}}_{\Pi, \mathcal{A}}$ with probability $\epsilon > \mathrm{negl}(\lambda)$.
Construct $\mathcal{B}$ for $\mathsf{G}^{\mathrm{mem}}$:

\begin{algorithm}[H]
\caption{Single-Point Adversary $\mathcal{B}^{\mathcal{A}}$}
\begin{algorithmic}[1]
\State $\mathsf{pp} \gets \mathsf{Setup}(1^\lambda)$
\State $(R, \{(K_i, V_i)\}_{i=1}^m, \pi) \gets \mathcal{A}(\mathsf{pp})$
\If{$\mathsf{Verify}(R, (\{(K_i, V_i)\}, \mathsf{multi}), \pi) = 0$}
    \State \Return $\bot$
\EndIf
\State $i^* \gets_R [1, m]$ \Comment{Uniformly random index}
\State $\pi_{i^*} \gets \mathsf{ExtractSingleProof}(\pi, i^*)$
\State \Return $(R, K_{i^*}, V_{i^*}, \pi_{i^*})$
\end{algorithmic}
\end{algorithm}

\emph{Extraction of single-point proofs.}
The multi-point proof $\pi = (\mathrm{Entries}, \mathrm{Levels})$ contains sufficient information to reconstruct a valid single-point proof $\pi_i$ for each $(K_i, V_i)$:
\begin{itemize}[nosep]
    \item The entry $(K_i, V_i, v_i)$ from $\mathrm{Entries}$.
    \item The path from root to leaf $K_i$, determined by routing $K_i$ through the nodes in $\mathrm{Levels}$.
    \item For each node, the compact multiproof $\Pi^{\mathrm{CMR}}_J$ contains sufficient sibling hashes to verify any individual child $j \in J$.
\end{itemize}

\emph{Probability analysis.}
If $\mathcal{A}$ wins, then $\exists\, i^* : (K_{i^*}, V_{i^*}) \notin T_R$.
Let $I = \{i : (K_i, V_i) \notin T_R\}$ be the set of ``bad'' indices.
Conditioned on $\mathcal{A}$ winning, $|I| \geq 1$.

The probability that $\mathcal{B}$'s random choice $i^*$ falls in $I$ is:
\begin{equation}
\Pr[i^* \in I \mid \mathcal{A} \text{ wins}] \geq \frac{1}{m}
\end{equation}

Therefore:
\begin{equation}
\mathrm{Adv}^{\mathrm{mem}}_\Pi(\mathcal{B}) \geq \frac{\epsilon}{m}
\end{equation}

By Theorem~\ref{thm:single-soundness-formal}, $\mathrm{Adv}^{\mathrm{mem}}_\Pi(\mathcal{B}) \leq \mathrm{negl}(\lambda)$.
Since $m \leq |\pi| \leq \mathrm{poly}(\lambda)$ (the number of entries is bounded by proof size):
\begin{equation}
\epsilon \leq m \cdot \mathrm{negl}(\lambda) = \mathrm{negl}(\lambda)
\end{equation}

\heading{Security of path sharing.}
Path sharing does not weaken security because each shared node is verified with the same rigor as in independent proofs.
The compact multiproof $\Pi^{\mathrm{CMR}}_J$ for child set $J$ authenticates all children $\{c_j : j \in J\}$ against a single CMR.
Any forgery in one key's proof would produce an inconsistency detectable by single-point verification.
\end{proof}

\subsection{Proof of Lemma~\ref{lem:lb-correct} (Lower Bound Correctness)}
\label{sec:formal:lb-correct}

\begin{proof}[Full Proof of Lemma~\ref{lem:lb-correct}]
We prove by case analysis on the relationship between query $Q$ and the leaf $K'$ reached by optimistic search.

Let $d = \mathrm{diffbit}(Q, K')$ denote the first bit position where $Q$ and $K'$ differ.
If $K' = Q$, then $d = \infty$ (no differing bit).

\heading{Case 1: Exact Match ($K' = Q$).}
The search terminates at a leaf with key $K' = Q$.
Since $Q$ exists in the trie, $\mathrm{lb}(Q) = Q = K'$.
The algorithm correctly returns $K'$.

\heading{Case 2: Overshot ($Q[d] = 0, K'[d] = 1$).}
The query $Q$ has bit 0 at position $d$, while $K'$ has bit 1.
Any key with bit pattern matching $Q$'s prefix up to position $d-1$ and having bit 1 at position $d$ is lexicographically greater than $Q$.

At fork depth $f$, the \hot{} search entered a subtree $S_f$ rooted at a node where the discriminative bit at position $d$ directed the search into the ``right'' branch (bit 1).
By the key distribution property of \hot{}, all keys in $S_f$ have bit 1 at position $d$, hence all keys in $S_f$ satisfy $K > Q$.

The lower bound is the minimum key in $S_f$.
To find this minimum, the algorithm descends from the fork point always taking the leftmost child (child index 0), reaching the leftmost leaf in $S_f$.

\emph{Correctness:}
Let $K_{\min}$ be the leftmost leaf in $S_f$.
By the sparse key ordering (Definition~\ref{def:npe}), children with smaller indices contain lexicographically smaller keys.
Thus $K_{\min} \leq K$ for all $K \in S_f$.
Since all keys in $S_f$ are $> Q$, and $K_{\min}$ is the smallest such key, $\mathrm{lb}(Q) = K_{\min}$.

\heading{Case 3: Undershot ($Q[d] = 1, K'[d] = 0$).}
The query $Q$ has bit 1 at position $d$, while $K'$ has bit 0.
All keys in the subtree containing $K'$ have bit 0 at position $d$, so they are lexicographically less than $Q$.

The algorithm must find a right sibling at the fork point.
Let $f$ be the fork depth.
The search path at depth $f$ took child $c_j$.
The algorithm examines siblings $c_{j+1}, c_{j+2}, \ldots$ to find the first sibling whose subtree contains keys $\geq Q$.

\emph{Correctness:}
By the \hot{} sparse key ordering, children with larger indices correspond to lexicographically larger key ranges (within the discriminative bits).
The first right sibling $c_{j'}$ (where $j' > j$) that exists contains keys that:
\begin{enumerate}[nosep]
    \item Share the same prefix as $Q$ up to the discriminative bits extracted before depth $f$.
    \item Have a larger sparse partial key than $c_j$, implying lexicographically larger keys.
\end{enumerate}
The minimum key in $c_{j'}$'s subtree is found by leftmost descent, yielding $\mathrm{lb}(Q)$.

If no right sibling exists at depth $f$, the algorithm backtracks to depth $f-1$ and repeats.
This process continues until finding a right sibling or determining that no key $\geq Q$ exists (returning $\bot$).
\end{proof}

\subsection{Proof of Theorem~\ref{thm:lb-soundness} (Lower Bound Soundness)}
\label{sec:formal:lb-sound}

\begin{proof}[Full Proof of Theorem~\ref{thm:lb-soundness}]
We reduce lower bound soundness to collision resistance.

Suppose $\mathcal{A}$ wins $\mathsf{G}^{\mathrm{lb}}_{\Pi, \mathcal{A}}$ with probability $\epsilon > \mathrm{negl}(\lambda)$.
This means $\mathcal{A}$ produces $(R, Q, K_r, V_r, \pi)$ such that:
\begin{itemize}[nosep]
    \item $\mathsf{Verify}(R, (Q, K_r, V_r, \mathsf{lb}), \pi) = 1$
    \item $K_r \neq K^*$ where $K^*$ is the true lower bound computable from $\pi$'s authenticated structure
\end{itemize}

Parse the proof as $\pi = (Q, \mathrm{Path}, K', V', v', \mathrm{Adj}, K_r, V_r)$.

The verification algorithm performs four checks:

\heading{Check 1: Path integrity.}
$\mathrm{Path}$ must authenticate $(K', V')$ against $R$ via bottom-up hash reconstruction.
By Theorem~\ref{thm:single-soundness-formal}, if this check passes, $(K', V')$ is in the trie committed by $R$ with overwhelming probability.

\heading{Check 2: Search consistency.}
Both $Q$ and $K'$ must route identically through $\mathrm{Path}$.
The verifier recomputes $\mathrm{dense}(Q, M_i)$ and $\mathrm{dense}(K', M_i)$ at each node and checks they select the same child.
This confirms $K'$ is the leaf that optimistic search reaches for $Q$.

\heading{Check 3: Fork depth correctness.}
The verifier \emph{independently computes} the fork depth $f$ from the authenticated extraction masks:
\begin{equation}
f = \max\{i : d \in \mathrm{disc\_bits}(\mathrm{Path}[i].M)\}
\end{equation}
where $d = \mathrm{diffbit}(Q, K')$.

Crucially, the extraction masks $M_i$ are \emph{committed} in the node content hash (Definition~\ref{def:node-hash}).
The adversary cannot claim a different $f$ without providing different masks, which would change the node hash and cause root mismatch.

\heading{Check 4: Structural minimality.}
For the claimed result $(K_r, V_r)$ to differ from $\mathrm{lb}(Q)$ while passing verification, the adjustment path $\mathrm{Adj}$ must violate the minimality constraints.

\emph{Sub-case 4a: Overshot case with non-leftmost descent.}
The verifier checks that every entry in $\text{AdjPath}$ has child index 0.
If the adversary claims index $j > 0$ at some level but the authentic leftmost child differs, the CMR reconstruction will fail unless a collision exists.

\emph{Sub-case 4b: Undershot case with incorrect sibling.}
The verifier checks that the first adjustment entry is the immediate right sibling at the fork point.
The authentic right sibling is determined by the sparse keys $S$ committed in the fork node's content hash.
Any discrepancy requires forging either the sparse keys (changing node hash) or the CMR (collision).

\emph{Sub-case 4c: Adjustment path leads to wrong leaf.}
If $\mathrm{Adj}$ leads to $(K_r, V_r) \neq \mathrm{lb}(Q)$ but verification passes, the hash chain from $K_r$ must match the chain from the authentic lower bound.
By Lemma~\ref{lem:binding}, this implies a collision.

\heading{Conclusion.}
In all cases, $\mathcal{A}$ winning implies a hash collision is extractable.
We construct $\mathcal{B}$ that runs $\mathcal{A}$, checks each verification step, and extracts a collision from any inconsistency:
\begin{equation}
\mathrm{Adv}^{\mathrm{lb}}_\Pi(\mathcal{A}) \leq \mathrm{Adv}^{\mathrm{CR}}_H(\mathcal{B}) \leq \mathrm{negl}(\lambda)
\end{equation}
\end{proof}

\subsection{Proof of Lemma~\ref{lem:rank} (Rank Computation)}
\label{sec:formal:rank}

\begin{proof}[Full Proof of Lemma~\ref{lem:rank}]
We prove by induction on path depth.

\heading{Claim:}
For a key $K$ with search path $(\mathrm{path}[0], \ldots, \mathrm{path}[h-1])$, where $\mathrm{path}[i].\mathrm{child\_idx} = j_i$:
\begin{equation}
\mathrm{rank}(K) = \sum_{i=0}^{h-1} \sum_{j=0}^{j_i - 1} \mathrm{lc}(\mathrm{path}[i].\mathrm{children}[j])
\end{equation}

\heading{Base case ($h = 1$):}
A single-node trie contains one leaf.
The path is trivial with $j_0 = 0$ (only one child).
The sum is empty, yielding $\mathrm{rank}(K) = 0$.
This is correct, as the only key has no predecessors.

\heading{Inductive step:}
Assume the formula holds for tries of height $< h$.
Consider a trie $T$ of height $h$ with root $N$.

The search for $K$ at root $N$ selects child $c_{j_0}$ (where $j_0 = \mathrm{path}[0].\mathrm{child\_idx}$).
Within subtrie $c_{j_0}$, the search continues with path $(\mathrm{path}[1], \ldots, \mathrm{path}[h-1])$.

\heading{Key observation: Sparse key ordering.}
By \hot{} construction, children of a node are ordered by their sparse partial keys.
If $j < j'$, then all keys in child $c_j$'s subtree are lexicographically smaller than all keys in $c_{j'}$'s subtree.
This follows from the fact that sparse partial keys encode the discriminative bit patterns, which determine lexicographic ordering.

\heading{Counting keys smaller than $K$:}
The keys smaller than $K$ in $T$ consist of:
\begin{enumerate}[nosep]
    \item All keys in children $c_0, c_1, \ldots, c_{j_0 - 1}$ of the root.
    \item Keys smaller than $K$ within child $c_{j_0}$'s subtree.
\end{enumerate}

The count from (1) is:
\begin{equation}
\sum_{j=0}^{j_0 - 1} \mathrm{lc}(c_j)
\end{equation}
where $\mathrm{lc}(c_j)$ is the leaf count of subtree $c_j$, stored in the node's $L$ field and committed in the content hash.

The count from (2) is $\mathrm{rank}_{c_{j_0}}(K)$, the rank of $K$ within the subtrie rooted at $c_{j_0}$.
By the inductive hypothesis:
\begin{equation}
\mathrm{rank}_{c_{j_0}}(K) = \sum_{i=1}^{h-1} \sum_{j=0}^{j_i - 1} \mathrm{lc}(\mathrm{path}[i].\mathrm{children}[j])
\end{equation}

Combining:
\begin{align}
\mathrm{rank}_T(K) &= \sum_{j=0}^{j_0 - 1} \mathrm{lc}(c_j) + \sum_{i=1}^{h-1} \sum_{j=0}^{j_i - 1} \mathrm{lc}(\mathrm{path}[i].\mathrm{children}[j]) \\
&= \sum_{i=0}^{h-1} \sum_{j=0}^{j_i - 1} \mathrm{lc}(\mathrm{path}[i].\mathrm{children}[j])
\end{align}
which completes the induction.
\end{proof}

\subsection{Proof of Theorem~\ref{thm:range-soundness} (Range Soundness)}
\label{sec:formal:range}

\begin{theorem}[Range Soundness --- Restated]
\label{thm:range-soundness-formal}
For any PPT adversary $\mathcal{A}$:
\begin{equation}
\mathrm{Adv}^{\mathrm{range}}_\Pi(\mathcal{A}) \leq 2 \cdot \mathrm{Adv}^{\mathrm{lb}}_\Pi + \mathrm{Adv}^{\mathrm{multi}}_\Pi + \mathrm{negl}(\lambda)
\end{equation}
\end{theorem}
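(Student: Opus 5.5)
The plan is a sequence of games in the style of Shoup, with the Difference Lemma (Lemma~\ref{lem:difference}) used to pass between hybrids. Let $\mathsf{G}_0 = \mathsf{G}^{\mathrm{range}}_{\Pi,\mathcal{A}}$. The bad events come from the sub-proofs inside $\pi_{\text{range}}$. Let $F_L$ (resp.\ $F_R$) be the event that $\textsc{VerifyLB}$ accepts $\pi^L_{\mathrm{lb}}$ (resp.\ $\pi^R_{\mathrm{lb}}$) while its claimed result differs from $\mathsf{ComputeLB}$ of that proof. Let $F_M$ be the event that $\textsc{VerifyMulti}$ accepts while some entry is not in the trie extracted from $\pi_{\text{multi}}$.

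From $\mathcal{A}$ we build reductions $\mathcal{B}_L$, $\mathcal{B}_R$ and $\mathcal{B}_M$. Each runs $\mathcal{A}$ and outputs the corresponding component together with the same root $R$. This gives $\Pr[F_L],\Pr[F_R]\le \mathrm{Adv}^{\mathrm{lb}}_\Pi$ and $\Pr[F_M]\le \mathrm{Adv}^{\mathrm{multi}}_\Pi$. Because Algorithm~\ref{alg:range-verify} only accepts after both boundary checks and the multi check pass, the reductions are exact and add no loss factor. Summing the two lower-bound terms and the multi term gives the $2\cdot\mathrm{Adv}^{\mathrm{lb}}_\Pi+\mathrm{Adv}^{\mathrm{multi}}_\Pi$ part of the bound.

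Next, in the hybrid game conditioned on $\lnot F_L\land\lnot F_R\land\lnot F_M$, I would argue that $\mathcal{A}$ wins only with negligible probability. All three sub-proofs hash to the same $R$. By Lemma~\ref{lem:binding} and Corollary~\ref{cor:unique-preimage}, their extracted partial tries are therefore consistent restrictions of a single trie $T_R$, except on a collision event $F_C$ that is reducible to $\mathrm{Adv}^{\mathrm{CR}}_H$. This is the $\mathrm{negl}(\lambda)$ term. On $\lnot F_C$, the leaf counts $L$ on each boundary path are the authentic committed counts. Lemma~\ref{lem:rank} then says $\mathsf{ComputeRank}(\pi^L_{\mathrm{lb}})=\mathrm{rank}(\mathrm{lb}(\mathsf{first}))=|\{K\in T_R:K<\mathsf{first}\}|$, and the same holds for $\mathsf{last}$; when $\mathrm{lb}=\bot$, the rank is read as $|T_R|$ via the count $n$ in $R$. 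So $r_R-r_L=|\{K\in T_R:\mathsf{first}\le K<\mathsf{last}\}|$. The verifier rejects unless $|\mathsf{entries}|=r_R-r_L$, so the winning condition of $\mathsf{G}^{\mathrm{range}}$ cannot hold. For the stronger set-equality reading, I would add that entries are strictly increasing and lie in $T_R$ on $\lnot F_M$, so $r_R-r_L$ distinct members of a set of that size must be the whole set. That step also needs each entry to be checked against $[\mathsf{first},\mathsf{last})$, which I would take from the boundary-ordering check.

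The main obstacle is that the rank is computed along the path to $\mathrm{lb}(\cdot)$ and not along the optimistic-search path. The lower-bound proof contains both $\mathrm{Path}$ and $\mathrm{AdjPath}$, and I would first try to show that $\mathsf{ComputeRank}$ must be evaluated on the spliced path: $\mathrm{Path}[0..f]$ followed by $\mathrm{AdjPath}$. Lemma~\ref{lem:rank} is stated for an existing key's search path, so the splice needs care at the fork node. In the undershot case the child index there is $j+1$ (or higher after backtracking), not $j$. Invoking Lemma~\ref{lem:lb-correct} establishes that this spliced path really is the search path of $\mathrm{lb}(Q)$, after which Lemma~\ref{lem:rank} applies verbatim.
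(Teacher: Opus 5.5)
Your game sequence is the same as the paper's. You bound the two boundary-proof failures by a union bound, giving $2\cdot\mathrm{Adv}^{\mathrm{lb}}_\Pi$, and the multiproof failure by $\mathrm{Adv}^{\mathrm{multi}}_\Pi$, both via the Difference Lemma. You then finish with a game in which the count check, strict ordering and membership force $\mathsf{entries}$ to equal the true range. The real difference is in that final game, and it is substantive.

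The paper only asserts that $r_L,r_R$ are correct ``by Lemma~\ref{lem:rank}''. But Definition~\ref{def:compute-rank} sums leaf counts along the optimistic-search $\mathrm{Path}$, so it returns $\mathrm{rank}(K')$ for the reached leaf $K'$, not $\mathrm{rank}(\mathsf{first})$. These differ in every undershot case. In an overshot with $K'\neq\mathrm{lb}(\mathsf{first})$, the computed $r_L$ is too large by $|\{K\in T_R:\mathrm{lb}(\mathsf{first})\le K<K'\}|$. So, e.g.\ when $\mathsf{last}\in T_R$ and $K'\le\mathsf{last}$, an adversary can omit exactly those keys and still pass the count check.

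Your splice repairs this: $\mathrm{Path}[0..f]$ followed by $\mathrm{AdjPath}$, with the splice point moving up after backtracking, identified with the authentic path to $\mathrm{lb}(\mathsf{first})$ via $\lnot F_L$, $\lnot F_C$ and Lemma~\ref{lem:lb-correct}. This is what actually makes Lemma~\ref{lem:rank} applicable. Your other additions also fill steps the paper leaves implicit:
\begin{itemize}[nosep]
\item The event $F_C$ gives the $\mathrm{negl}(\lambda)$ term a concrete source. $\mathsf{G}^{\mathrm{lb}}$ and $\mathsf{G}^{\mathrm{multi}}$ judge each sub-proof only against its own extracted trie, whereas the paper phrases $F_1,F_2$ directly against $T_R$.
\item The convention $\bot\mapsto|T_R|$ covers a case the paper never handles.
\end{itemize}
State explicitly, however, that the splice changes $\mathsf{ComputeRank}$ as used in Algorithm~\ref{alg:range-verify}. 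It is not something you can ``show'' about the existing definition, which is simply wrong off the key set.

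Two gaps you inherit from the paper:
\begin{itemize}[nosep]
\item The loop in Algorithm~\ref{alg:range-verify} compares only consecutive keys. The test $\mathsf{first}\le K_i<\mathsf{last}$ that your set-equality step needs must be added, not read off from it.
\item In the $\mathsf{first}\ge\mathsf{last}$ branch the verifier skips the count check. So your claim that the literal winning condition $|\mathsf{entries}|\neq r_R-r_L$ cannot hold fails there: with correct ranks, $r_R-r_L<0$ whenever keys lie in $[\mathsf{last},\mathsf{first})$. Either exclude that branch or argue it only under the set-equality reading.
\end{itemize}
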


\begin{proof}[Full Proof]
We prove via a sequence of games, following the standard game-based methodology.
Each transition is classified according to the three canonical types: (1) indistinguishability-based, (2) failure-event-based with Difference Lemma, or (3) bridging steps.

\heading{Game 0: Original range soundness game.}
This is $\mathsf{G}^{\mathrm{range}}_{\Pi, \mathcal{A}}$ as defined.
Let $S_0$ denote the event that $\mathcal{A}$ wins.

\heading{Game 0 $\to$ Game 1: Abort on boundary proof failure.}
\emph{[Type 2: Failure Event Transition]}

Game 1 is identical to Game 0, except we abort (adversary loses) if either boundary proof $\pi^L_{\mathrm{lb}}$ or $\pi^R_{\mathrm{lb}}$ would fail independent verification.

\emph{Failure event definition:}
Let $F_1$ be the event that Game 0 accepts but one of the boundary proofs is unsound (i.e., claims an incorrect lower bound).
Formally, $F_1$ occurs when:
\begin{itemize}[nosep]
    \item $\mathsf{Verify}(R, (\mathsf{first}, K^L_r, V^L_r, \mathsf{lb}), \pi^L_{\mathrm{lb}}) = 1$ but $K^L_r \neq \mathrm{lb}(\mathsf{first})$, or
    \item $\mathsf{Verify}(R, (\mathsf{last}, K^R_r, V^R_r, \mathsf{lb}), \pi^R_{\mathrm{lb}}) = 1$ but $K^R_r \neq \mathrm{lb}(\mathsf{last})$.
\end{itemize}

\emph{Difference Lemma application:}
Games 0 and 1 proceed identically unless $F_1$ occurs.
Formally, $S_0 \land \lnot F_1 \Leftrightarrow S_1 \land \lnot F_1$.
By the Difference Lemma:
\begin{equation}
|Pr[S_0] - Pr[S_1]| \leq Pr[F_1]
\end{equation}

\emph{Bounding $Pr[F_1]$:}
By Theorem~\ref{thm:lb-soundness}, each boundary proof has soundness error at most $\mathrm{Adv}^{\mathrm{lb}}_\Pi$.
By a union bound over the two boundary proofs:
\begin{equation}
\Pr[F_1] \leq 2 \cdot \mathrm{Adv}^{\mathrm{lb}}_\Pi
\end{equation}

\heading{Game 1 $\to$ Game 2: Abort on multi-proof failure.}
\emph{[Type 2: Failure Event Transition]}

Game 2 is identical to Game 1, except we abort if $\pi_{\mathrm{multi}}$ would fail independent verification (i.e., some claimed entry is not in $T_R$).

\emph{Failure event:}
Let $F_2$ be the event that the multi-proof passes verification but $\exists\, i : (K_i, V_i) \notin T_R$.

\emph{Difference Lemma application:}
$S_1 \land \lnot F_2 \Leftrightarrow S_2 \land \lnot F_2$.
By the Difference Lemma and Theorem~\ref{thm:multi-soundness}:
\begin{equation}
|Pr[S_1] - Pr[S_2]| \leq Pr[F_2] \leq \mathrm{Adv}^{\mathrm{multi}}_\Pi
\end{equation}

\heading{Game 2 $\to$ Game 3: Honest verification.}
\emph{[Type 3: Bridging Step]}

In Game 2, all sub-proofs are sound (conditioned on $\lnot F_1 \land \lnot F_2$).
Game 3 is a conceptual restatement where we analyze what ``honest verification'' implies.
This is a purely logical transition with $Pr[S_2] = Pr[S_3]$.

\heading{Analysis of Game 3: Information-theoretic argument.}

In Game 3, conditioned on all sub-proofs being sound, the verifier computes:
\begin{align}
r_L &= \mathrm{rank}(\mathsf{first}) \quad \text{(from } \pi^L_{\mathrm{lb}} \text{)} \\
r_R &= \mathrm{rank}(\mathsf{last}) \quad \text{(from } \pi^R_{\mathrm{lb}} \text{)}
\end{align}
using the authenticated leaf count fields.

By Lemma~\ref{lem:rank}, these ranks are computed correctly from the Merkle-committed paths.
The expected entry count is $r_R - r_L$.

The verifier checks $|\mathsf{entries}| = r_R - r_L$ and that all entries are in range $[\mathsf{first}, \mathsf{last})$ with correct ordering.

\emph{Omission attack prevention:}
Suppose $|\mathsf{entries}| < |E^*|$ where $E^*$ is the true set of entries in $[\mathsf{first}, \mathsf{last})$.
Since the boundary proofs are sound (by conditioning on $\lnot F_1$), the ranks $r_L, r_R$ are correct.
By definition, $|E^*| = r_R - r_L$.
Thus $|\mathsf{entries}| < r_R - r_L$, and the count check fails.

\emph{Insertion attack prevention:}
Suppose $\mathsf{entries}$ contains an entry $(K_i, V_i)$ where either $K_i \notin [\mathsf{first}, \mathsf{last})$ or $(K_i, V_i) \notin T_R$.
The ordering check rejects if $K_i \notin [\mathsf{first}, \mathsf{last})$.
By conditioning on $\lnot F_2$, all $(K_i, V_i) \in T_R$.
Including extra (valid but out-of-range) entries is impossible due to the ordering check.
Including exactly $r_R - r_L$ entries, all in range and in the trie, forces $\mathsf{entries} = E^*$.
Therefore, $Pr[S_3] = 0$ (adversary cannot win Game 3).

\heading{Combining the transitions.}
We now combine all game transitions using the triangle inequality.
Let $\epsilon_{\mathrm{lb}} = \mathrm{Adv}^{\mathrm{lb}}_\Pi$ and $\epsilon_{\mathrm{multi}} = \mathrm{Adv}^{\mathrm{multi}}_\Pi$.

\begin{align}
\mathrm{Adv}^{\mathrm{range}}_\Pi(\mathcal{A}) &= Pr[S_0] \\
&\leq Pr[S_1] + |Pr[S_0] - Pr[S_1]| \\
&\leq Pr[S_1] + 2\epsilon_{\mathrm{lb}} \\
&\leq Pr[S_2] + |Pr[S_1] - Pr[S_2]| + 2\epsilon_{\mathrm{lb}} \\
&\leq Pr[S_2] + \epsilon_{\mathrm{multi}} + 2\epsilon_{\mathrm{lb}} \\
&= Pr[S_3] + \epsilon_{\mathrm{multi}} + 2\epsilon_{\mathrm{lb}} \\
&= 0 + \epsilon_{\mathrm{multi}} + 2\epsilon_{\mathrm{lb}} \\
&= 2 \cdot \mathrm{Adv}^{\mathrm{lb}}_\Pi + \mathrm{Adv}^{\mathrm{multi}}_\Pi \\
&\leq \mathrm{negl}(\lambda)
\end{align}
The final inequality holds since both $\mathrm{Adv}^{\mathrm{lb}}_\Pi$ and $\mathrm{Adv}^{\mathrm{multi}}_\Pi$ are negligible by Theorems~\ref{thm:lb-soundness} and~\ref{thm:multi-soundness}.
\end{proof}

\subsection{Security Summary}
\label{sec:formal:summary}

\begin{table}[t]
\footnotesize
\centering
\caption{Summary of security reductions. 
}
\label{tab:security-summary}
\vspace{5pt}
\begin{tabular}{c|ccc}
\toprule
\textbf{Property} & \textbf{Reduces To} & \textbf{Bound} & \textbf{Tightness} \\
\midrule
Commitment binding & CR & $\epsilon_{\mathrm{CR}}$ & Direct \\
Single-point soundness & CR & $\epsilon_{\mathrm{CR}}$ & Direct \\
Non-membership soundness & CR & $\epsilon_{\mathrm{CR}}$ & Direct \\
Multi-point soundness & Single-point & $m \cdot \epsilon_{\mathrm{CR}}$ & Factor $m$ \\
Lower bound soundness & CR & $\epsilon_{\mathrm{CR}}$ & Direct \\
Range soundness & LB + Multi & $(m{+}2) \cdot \epsilon_{\mathrm{CR}}$ & Factor $m{+}2$ \\
\bottomrule
\end{tabular}
\end{table}

All \mhot{} proofs achieve computational soundness under the collision resistance assumption for the underlying hash function.
The reductions are tight or near-tight (with polynomial loss bounded by the number of entries), ensuring that concrete security level matches that of the hash function.

\begin{theorem}[Main Security Theorem]
\label{thm:main-security}
Let $H$ be a collision-resistant hash function with advantage bound $\epsilon_{\mathrm{CR}}$ against $\tau$-time adversaries.
Then the \mhot{} proof system is sound for all statement types against $\tau'$-time adversaries, where:
\begin{itemize}[nosep]
    \item $\tau' \approx \tau - O(\mathrm{poly}(\lambda))$ (polynomial overhead for proof verification and reduction)
    \item Soundness advantage $\leq (m + 2) \cdot \epsilon_{\mathrm{CR}}$ for proofs involving $m$ entries
\end{itemize}
In particular, for single-point proofs the reduction is tight ($\epsilon' = \epsilon_{\mathrm{CR}}$).
\end{theorem}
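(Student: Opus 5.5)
The plan is to assemble the Main Security Theorem from the per-statement reductions already proved, handling each statement type separately and taking the worst bound. For every type I will build one wrapper reduction $\mathcal{B}$. It runs the soundness adversary $\mathcal{A}$, runs $\mathsf{Verify}$ itself, and passes any accepting false proof to the collision extractor of the matching theorem. The running time of $\mathcal{B}$ is that of $\mathcal{A}$ plus one verification and one extraction. Both are polynomial in $\lambda$ and in $|\pi|$, since they consist of $O(h\log k)$ hash evaluations per path and $O(m\, h \log k)$ in total. This yields the relation $\tau' \approx \tau - O(\mathrm{poly}(\lambda))$ between $\tau'$ and the time $\tau$ of a collision finder.

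The steps, in order, are as follows. First, commitment binding (Lemma~\ref{lem:binding}) and single-point soundness for membership and non-membership (Theorem~\ref{thm:single-soundness-formal}) give advantage at most $\epsilon_{\mathrm{CR}}$ directly. $\mathcal{B}_1$ and $\mathcal{B}_2$ each make one call to $\mathcal{A}$ with no guessing, so the tightness claim $\epsilon'=\epsilon_{\mathrm{CR}}$ holds for single-point proofs. Second, Theorem~\ref{thm:lb-soundness} gives lower-bound soundness at most $\epsilon_{\mathrm{CR}}$, again with a single extraction. Third, the multi-point proof of Theorem~\ref{thm:multi-soundness} has a hybrid loss of factor $m$, since $\mathcal{B}$ guesses $i^*$. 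Plugging in single-point tightness gives $\mathrm{Adv}^{\mathrm{multi}} \le m\,\epsilon_{\mathrm{CR}}$. Fourth, the bound $\mathrm{Adv}^{\mathrm{range}} \le 2\mathrm{Adv}^{\mathrm{lb}} + \mathrm{Adv}^{\mathrm{multi}}$ from Theorem~\ref{thm:range-soundness-formal} gives $(m+2)\epsilon_{\mathrm{CR}}$. Finally, I take the maximum over statement types; it is attained by range proofs, which matches Table~\ref{tab:security-summary}.

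The time-accounting step needs care. The collision finders $\mathcal{B}_1,\mathcal{B}_2$ and the lb-reduction must each stay within the budget of a $\tau$-time collision adversary. In the multi and range hybrids, the reduction additionally runs $\mathsf{ExtractSingleProof}$ and the boundary verifications. I will bound each of these by the size of the proof output by $\mathcal{A}$, which is at most $\tau'$, so the overhead is additive and polynomial.

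I expect the main obstacle to be the non-membership extractor $\mathcal{B}_2$. As written, its "collision" pairs a known preimage with an unknown authentic leaf $(K,V^*,v^*)$, so it is not an efficiently output pair. To make the concrete bound honest, I would first strengthen the non-membership game so that the adversary must also exhibit the trie, or equivalently a membership opening for $K$ against the same $R$. Given both openings, the two hash chains share the root and diverge at some node, since the routing of $K$ and $K'$ agrees at every level. The first divergence point is then an explicit collision in $H$ or in a CMR, found by walking top-down. Under this formulation the bound stays $\epsilon_{\mathrm{CR}}$ with only polynomial additive time.
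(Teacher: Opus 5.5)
Your assembly is the paper's proof. The paper proves Theorem~\ref{thm:main-security} in one line by combining Lemma~\ref{lem:binding} with Theorems~\ref{thm:single-soundness-formal}, \ref{thm:multi-soundness}, \ref{thm:lb-soundness} and \ref{thm:range-soundness-formal}. Your bookkeeping (factor $m$ from guessing $i^*$, factor $2$ from the two boundary proofs, additive polynomial time for $\mathsf{Verify}$ and $\mathsf{ExtractSingleProof}$) reproduces Table~\ref{tab:security-summary}.

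You go beyond the paper only at the non-membership step. Your diagnosis is correct: \textsc{ExtractLeafCollision} outputs one preimage plus an existence claim about an unknown $(K\parallel V^*\parallel v^*)$, so it is not a collision finder. Your repair is the standard one. The adversary also supplies $T$ with $R=\mathsf{Commit}(T)$, or an accepting membership opening for $K$, and you walk both chains top-down. Where node contents agree, $K$ is routed to the same slot, so the first disagreement exposes an explicit collision. For chains of unequal length this also requires $H_{\mathrm{leaf}}$ and the node-content hash of Definition~\ref{def:node-hash} to be domain-separated, which the paper does not state.

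The gap is that you apply this standard selectively. $\mathcal{B}_1$ has exactly the defect you found in $\mathcal{B}_2$: its Case~2 uses `the authentic leaf content' and its Case~3 uses `the authentic CMR', and $\mathcal{B}_1$ has neither. Sub-case~4c of the lower-bound proof likewise appeals to `the chain from the authentic lower bound'. Read literally, the games do not even define a forgery. For instance, $\mathsf{ExtractTrie}(\pi)$ contains the leaf $(K,V)$ by construction, so $\mathsf{G}^{\mathrm{mem}}$ can never be won. Calling membership and lower-bound soundness `direct' therefore contradicts your own criterion, and the multi-point and range bounds inherit the problem.

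For an honest $(m+2)\epsilon_{\mathrm{CR}}$ you need the reformulated game for every statement type. Membership then becomes the same divergence walk, comparing the honest path for $K$ in $T$ against the forged one. The multi-point reduction can even read the bad index off $T$ instead of guessing it.

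Lower-bound and range soundness, however, stop being `a single extraction'. After excluding collisions you must still prove that data consistent with $T$ forces the verifier's checks to certify $\mathrm{lb}_T(Q)$ and $\mathrm{rank}_T(\mathsf{first})$. That is a combinatorial fact about HOT, not about hashing, and it needs real work. Definition~\ref{def:fork-depth} is undefined when $d$ lies in no extraction mask on the path. Definition~\ref{def:compute-rank}, summed along the optimistic-search path, computes $\mathrm{rank}(K')$, which in general differs from $\mathrm{rank}(\mathsf{first})$ when $\mathsf{first}\notin T$.
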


\begin{proof}
Follows directly from combining Lemma~\ref{lem:binding} and Theorems~\ref{thm:single-soundness-formal}, \ref{thm:multi-soundness}, \ref{thm:lb-soundness}, and \ref{thm:range-soundness-formal}.
\end{proof}


\end{document}